\newtheorem{theorem}{Theorem}[section]
\newtheorem{corollary}[theorem]{Corollary}
\newtheorem{proposition}[theorem]{Proposition}
\newtheorem{lemma}[theorem]{Lemma}
\theoremstyle{definition}
\renewcommand{\theequation}{\arabic{section}.\arabic{equation}}
\theoremstyle{definition}
\theoremstyle{definition}
\newtheorem{remark}{Remark}
\theoremstyle{definition}
\newtheorem{assumption}{Assumption}
\renewcommand{\epsilon}{\varepsilon}
\newcommand\bib@setcolor[1]{%
	\ifcsname bib@colored@#1\endcsname
	\expandafter\color\expandafter{\csname bib@colored@#1\endcsname}
	\else
	\normalcolor
	\fi
}
\begin{document}
\makeatletter
\def\@setauthors{%
\begingroup
\def\thanks{\protect\thanks@warning}%
\trivlist \centering\footnotesize \@topsep30\p@\relax
\advance\@topsep by -\baselineskip
\item\relax
\author@andify\authors
\def\\{\protect\linebreak}%
{\authors}%
\ifx\@empty\contribs \else ,\penalty-3 \space \@setcontribs
\@closetoccontribs \fi
\endtrivlist
\endgroup} \makeatother
 \baselineskip 19pt
 \title[{{\tiny Consumption-portfolio choice with preferences for liquid assets}}]
 {{\tiny
Consumption-portfolio choice with preferences for liquid assets}} \vskip 10pt\noindent
\author[{\tiny  Guohui Guan, Jiaqi Hu, Zongxia Liang}]
{\tiny {\tiny  Guohui Guan$^{a,b,*}$, Jiaqi Hu$^{c,\dag}$, Zongxia Liang$^{c,\ddag}$}
 \vskip 10pt\noindent
{\tiny ${}^a$Center for Applied Statistics, Renmin University of China, Beijing 100872, China
\vskip 10pt\noindent\tiny
${}^b$School of Statistics, Renmin University of China, Beijing 100872, China
\vskip 10pt\noindent
${}^c$Department of Mathematical Sciences, Tsinghua
University, Beijing 100084, China
}\noindent
\footnote{{$^*$ {\bf e-mail}: guangh@ruc.edu.cn}}\noindent
\footnote{{$^\dag $ Corresponding author, \  {\bf e-mail}: hujq20@mails.tsinghua.edu.cn }}  \noindent
 \footnote{{$\ddag $ {\bf e-mail}: liangzongxia@mail.tsinghua.edu.cn}}}
\numberwithin{equation}{section}
\maketitle
\noindent
\begin{abstract}
This paper investigates an infinite horizon, discounted, consumption-portfolio problem in a market with one bond, one liquid risky asset, and one illiquid risky asset with proportional transaction costs. We consider an agent with liquidity preference, modeled by a Cobb-Douglas utility function that includes the liquid wealth. We analyze the properties of the value function and divide the solvency region into three regions: the buying region, the no-trading region, and the selling region, and prove that all three regions are non-empty. We mathematically characterize and numerically solve the optimal policy and prove its optimality. Our numerical analysis sheds light on the impact of various parameters on the optimal policy, and some intuition and economic insights behind it are also analyzed. We find that liquidity preference encourages agents to retain more liquid wealth and inhibits consumption, and may even result in a negative allocation to the illiquid asset. The liquid risky asset not only affects the location of the three regions but also has an impact on consumption. However, whether this impact on consumption is promoted or inhibited depends on the degree of risk aversion of agents.
\vskip 10pt  \noindent
2020 Mathematics Subject Classification: 91G10, 49L12, 91B05, 93E20
\vskip 10pt  \noindent
JEL Classifications: G11, C61, E21
\vskip 10 pt  \noindent
Keywords: Consumption-portfolio choice; Singular stochastic control; Transaction costs; Multiple assets; Liquidity preference; Hamilton-Jacobi-Bellman equation.
\end{abstract}
\vskip10pt
\setcounter{equation}{0}

\section{{{\bf Introduction}}}

Since the seminal work of  \cite{Merton1969, merton1975optimum}, the investment-consumption problems in continuous time have been extensively studied. In a perfect, frictionless market, the optimal policy for a constant relative risk aversion (CRRA) investor is to maintain a constant fraction of total wealth in each asset and consumption.  \cite{MAGILL1976} introduced proportional transaction costs to the Merton's model and discovered the existence of a non-transaction region. Despite the attention this problem has received in the literature, many studies on transaction costs overlook the impact of liquidity preference on decision-making processes. This paper aims to fulfill such a gap.

The concept of liquidity preference was introduced by \cite{k36} as one of his three psychological laws. Initially, liquidity preference referred mainly to a preference for money. People tend to prefer money because of its flexibility in use, and they would rather sacrifice some potential profits than hold onto a certain amount of money. One way to model money preference is by incorporating it into the utility function. While a series of macroeconomic literature, starting with \cite{sidrauski1967rational} and \cite{brock1974money}, have used ``money in the utility function" for research, most studies on consumption-portfolio decisions have overlooked the role of money. \cite{kraft2019consumption} first considered ``money in the utility function" in the consumption-portfolio problem and demonstrated that in an infinite-horizon setting, liquidity preference leads to a decrease in investment in risky assets. However, their work only considered a Black-Scholes market and neglected transaction costs. Compared to \cite{kraft2019consumption}, this paper deals with three assets: a risk-free bond, a liquid risky asset, and an illiquid risky asset, where the risk-free bond and the liquid risky asset together constitute liquid wealth, and trading in the illiquid asset incurs proportional transaction costs. It is worth noting that the essence of money preference is liquidity preference. More than a hundred years ago, \cite{k36} equated money preference and liquidity preference, but now that various financial derivatives are abundant, liquidity preference should not simply refer to money preference. To capture this liquidity preference, we add liquid wealth to the utility function. 

Traditional economics holds that individual happiness comes from consumption. We believe that this is one-sided and that holding liquid wealth may also be an important factor affecting happiness. Because of different cultures and national conditions, different countries and regions have different willingness to consume and hold liquid wealth. For example, China is a country with a very strong concept of household savings. Over the years, the growth of household savings in China has been higher than the growth of GDP, and in recent years, it has shown rapid growth. Holding a certain amount of liquid wealth can not only prevent future uncertain events, but also bring more possibilities to future consumption, which gives Chinese residents a sense of security, which Chinese residents also attach great importance to. Residents of countries such as China, Singapore, South Korea and Ireland also value the sense of security that comes from holding liquid assets. The sense of security brought by holding liquid assets and the satisfaction brought by consumption together constitute the happiness of residents, and both need to be equally important reflected in the utility function.

This paper provides a comprehensive investigation  over the liquidity preference of assets. Specifically, we divide liquid wealth into a risk-free bond and a liquid risky asset. Moreover, we introduce liquidity preference and incorporate ``liquid wealth in the utility function" into the consumption-portfolio problem with proportional transaction costs. Liquidity enters the utility function indirectly because it saves time for the agent in conducting transactions in liquid assets. The agent's utility function is given by $$U(c,x)=\frac{1}{p}\left(c^{\theta}x^{1-\theta}\right)^p,$$ where $c$ is the consumption rate, $x$ is the liquid wealth, $p$ is the CRRA risk aversion parameter with $p<1$, and $\theta\in\left(0,1\right)$ is the liquidity preference parameter.

The continuous-time optimal investment problem with transaction costs has been approached using various methods, including stochastic control and viscosity theory (see \cite{Davis1990}, \cite{Sherve1994}, \cite{kabanov2004geometric}, \cite{de2016consumption}), the dual approach and shadow prices (see \cite{Kallsen2010}, \cite{Guasoni2013}, \cite{choi2013shadow}), martingale approach (see \cite{cvitanic1996hedging}), and numerical solutions (see \cite{gennotte1994investment}, \cite{muthuraman2006multidimensional}). Additionally, \cite{constantinides1986capital} discussed the effect of transaction costs on liquidity premium, while \cite{soner2013homogenization} carried out perturbation analysis for small transaction costs. However, most of the existing models with transaction costs only involve a single risky asset, and those with multiple assets are notably harder to analyze. In this paper, we focus on the consumption-investment problem with liquidity preference in the infinite-horizon case and a multi-asset setting. On the computational side in a multi-asset setting, \cite{akian1996investment} considered $n$ uncorrelated risky assets with proportional transaction costs and solved the variational inequality by using a numerical algorithm based on policies, iterations, and multi-grid methods. \cite{chen2013characterization} further considered $n$ correlated risky assets with proportional transaction costs and paid more attention to the shape and location of the non-trading region. This work also considers multiple risky assets, specifically a risk-free bond, a liquid risky asset, and an illiquid risky asset. Risky assets trade continuously, and their returns are potentially correlated. The coexistence of liquid and illiquid assets is more in line with the real financial environment, as reflected in recent literature. For example, \cite{bichuch2018investing} discussed the interaction between liquid and illiquid assets while maximizing the equivalent safe rate, and \cite{hobson2019multi} transformed the underlying HJB equation into a boundary value problem for a first-order differential equation and discussed the well-posedness of the problem. This paper extends \cite{hobson2019multi} by introducing liquidity preference, which leads to changes in the solvency region and the HJB equations. When $\theta=0$, the problem studied in this work reduces to the one in \cite{hobson2019multi}. We pay more attention to the shape and location of the non-trading region by analyzing the HJB equation of the problem. We find that even when liquidity preference is introduced into the utility function, the optimal decision can still be divided into three cone zones: buying, no-trading, and selling regions. Due to the introduction of liquidity preference, the solvency region is limited, so whether all three regions must exist needs to be re-studied. Given the assumption that the value function is known, we provide a characterization of the optimal policy and prove its optimality. Because liquid risk assets and illiquid risk exist at the same time, the HJB equation is inherently non-linear, and liquidity preference also makes it more complex. This puts forward higher requirements for the non-empty proof of the three regions as well as the characterization of the optimal strategy and the proof of its optimality. Finally, the influence of parameters on the boundary of no-trading region and the evolution of investment rate and consumption rate in no-trading region under the optimal policy are investigated by numerical analysis.

The main contributions of this paper are as follows.
\begin{itemize}
	\item Using the HJB equation, we analyze the properties of the value function to determine the shape of the optimal policy. Our analysis leads to a complete proof that all three trading regions (buying, no-trading, and selling regions) are non-empty, regardless of the value of $p$. Specifically, we prove that the three trading regions are non-empty both when $p<0$ and when $0<p<1$. The proof procedure for these two cases is different, but both require more elaborate estimation and calculation.
	
	\item  We provide a characterization of the optimal policy and rigorously prove its optimality. Due to the complexity of the HJB equation, the application of the stopping times is more flexible, and the analysis and estimation processes are more refined. The characterization of the optimal policy also reveals that the essence of the optimal policy has two points, one is the reflection behavior on the boundary of the no-trading region, and the other is the optimal investment and consumption in the no-trading region. This is also the focus of the numerical analysis discussion later.
	
	\item The risk aversion hypothesis is inherently inclusive of all cases where $p<1$, and according to \cite{MEHRA1985}, several studies have explored the possible realistic values of $p$, and the current literature suggests that $p<0$ is a reasonable and realistic range.  However, there are a few studies on the case of $p<0$ in the literature (there are certainly some, such as \cite{Sherve1994}), most papers only consider the case of $0<p<1$, because the case of $p<0$ is difficult and tedious both in mathematical proof and numerical calculation. Therefore, studying the case of $p<0$ in this paper is of great practical significance.
	
	\item We present numerical results based on \cite{Azimzadeh2016}. Our numerical analysis not only examines the influences of various parameters on the location of the three trading regions in the optimal policy but also investigates the evolutions of the investment ratio and consumption ratio in the no-trading region. We conduct numerical calculations and graphical analysis for both the $p>0$ and $p<0$ cases of the problem and obtain several interesting results. First, we find that the introduction of liquidity preference encourages agents to retain more liquid wealth and suppress consumption, but has little impact on the internal investment of liquid wealth. Second, we observe that the intersection of the no-trading region (or even the selling region) and the fourth quadrant may not be empty, but the intersection of the selling region and the fourth quadrant will be small if it exists. Third, we show that the introduction of liquidity risk assets has a certain impact on the location of the no-trading region, while the impact on consumption will depend on the level of the agents' risk aversion and may be reversed. We have made some reasonable intuition and economic insights on the phenomena and results manifested.
\end{itemize}

The paper is structured as follows: Section 2 presents the financial market and the optimization problem. Section 3 discusses the HJB equation and the properties of the value functions, including homotheticity, convexity, boundedness, and continuity, with a focus on boundary continuity. In Section 4, we derive the optimal policy, which consists of three trading regions, namely the buying region, no-trading region, and selling region, and prove that all three regions are non-empty.  Section 5 presents the numerical results and Section 6 is a conclusion.

\vskip 10pt
\section{\bf Market model and problem formulation}

\subsection{\bf The market model}
Let $\left(\Omega,\mathcal{F},\mathbb{P}\right)$ be a complete probability space with an augmented natural filtration $\left\{\mathcal{F}_t\right\}_{t\geq0}$ generated by two standard Brownian motions $\left(B^1,B^2\right)$ with correlation coefficient $\rho\in(-1,1)$. 

There are three assets in the economy: a risk-free bond $S^0$ earning a constant interest rate $r$ and two risky assets, one of which is a liquid risky asset $S^1$ while the other is an illiquid risky asset $S^2$ (unlike $S^1$, trading in $S^2$ incurs proportional transaction cost), which follow two bivariate geometric Brownian motions
\begin{eqnarray}
	\frac{dS^i_t}{S^i_t}=\left(r+\alpha_i\right)dt+\sigma_idB^i_t,\quad i=1,\ 2,\nonumber
\end{eqnarray}
where $r+\alpha_i\left(\alpha_i>0\right)$ is the mean rate of return of the risky asset $S^i$ and $\sigma_i$ represents the volatility coefficient of $S^i$, $i=1,2$.

Let $X_t$ denote the total value of the liquid wealth, i.e., the risk-free bond $S^0$ and the liquid risky asset $S^1$, at time $t$. Let $Y_t$ denote the value of the illiquid asset at time $t$. Define $\pi_t$ as the proportion of liquid wealth invested in the liquid risky asset at time $t$, and $c_t$ as the nonnegative consumption rate at time $t$. The processes $L_t$ and $M_t$ are both nonnegative, right-continuous, and nondecreaing, and they respectively represent the cumulative amounts of purchases and sales of the illiquid asset. In other words,
$$dL_t\ge 0,\quad dM_t\ge 0.$$
Trading the illiquid asset $S^2$ incurs a proportional transaction cost of $\lambda\in[0,1)$ on purchases and $\mu\in[0,1)$ on sales. A trading strategy is defined by a quadruple $\left(\pi,c,L,M\right)$, where $$c=\{c_t\}_{t\ge 0}\;,\;\pi=\{\pi_t\}_{t\ge 0}\;,\;L=\{L_t\}_{t\ge 0}\;,\;M=\{M_t\}_{t\ge 0}.$$ 
Suppose that transaction costs are paid in cash and consumption is from the cash account. Assuming that the strategy $\left(\pi,c,L,M\right)$ is self financing, 
$X:=\{X_t\}_{t\ge 0}$ and $Y:=\{Y_t\}_{t\ge 0}$ evolve according to the equations 
\begin{eqnarray}
	dX_t&=&\left[\left(r+\alpha_1\pi_t\right)X_t-c_t\right]dt + \sigma_1\pi_tX_tdB^1_t-dL_t+\left(1-\mu\right)dM_t,\quad X_{0-}=x,\label{xt}\\
	dY_t&=&\left(r+\alpha_2\right)Y_tdt+\sigma_2Y_tdB^2_t+\left(1-\lambda\right)dL_t-dM_t, \quad Y_{0-}=y.\label{yt}
\end{eqnarray}

\subsection{\bf Problem formulation}

We assume that the agent is not only concerned with the level of consumption, but also with the liquidity of their wealth. Specifically, the agents prefer to hold as much liquid wealth as possible while maximizing their expected utility from consumption. This is due to the fact that liquid wealth provide additional benefits due to their liquidity (see, for example, \cite{obstfeld1996foundations} and the references therein). Moreover, transaction costs may also incentivize the agent to hold more liquid wealth. The sense of security brought by holding liquid assets and the satisfaction brought by consumption together constitute the happiness of residents, and both need to be equally important reflected in the utility function. Consequently, the agent aims to maximize the objective function:
$$\mathbb{E}\left[\int_{0}^{\infty}e^{-\beta t}U\left(c_t,X_t\right)dt\right],$$ 
where $\beta>0$ is a constant discount rate. The agent's preference function incorporates a CRRA risk aversion parameter $p\in(-\infty,1)\setminus\{0\}$ and a liquidity preference parameter $\theta\in\left(0,1\right)$, and is given by $$U\left(c,x\right)=\frac{\left(c^{\theta}x^{1-\theta}\right)^p}{p},$$
which is a Cobb-Douglas function that captures the weight of consumption $c$ and liquid wealth $x$. A smaller risk aversion parameter $p$ indicates greater risk aversion by the agent, while a smaller liquidity preference parameter $\theta$ implies a higher proportion of liquid wealth in the utility function, and a greater desire to hold liquid wealth. Notably, when $\theta=1$, the utility function reduces to the classical consumption utility, which has been studied in \cite{hobson2019multi}. We always believe that liquidity preference exists and assume $\theta<1$. Due to the form of this utility function, it is necessary that $X\ge 0$, indicating that the agent should not only hold the illiquid wealth but always prefers to hold a certain amount of liquid wealth. We do not consider the degenerate case where $p=0$.

The net wealth of the agent after instantaneous liquidation of the illiquid asset is $$X_t+(1-\mu)Y_t^+-{Y_t^-\over 1-\lambda}.$$

We say that a trading strategy is admissible if the corresponding net wealth process and the liquid wealth process are all nonnegative. For this, following \cite{MAGILL1976}, we define the solvency region $$\Gamma=\{\left(x,y\right)\mid  x>0,x+\frac{y}{1-\lambda}>0\}$$
and we parition the boundary into
$$\partial_1\Gamma=\{\left(x,y\right)\mid x=0,y\ge0\}, \quad\partial_2\Gamma=\{\left(x,y\right)\mid x\ge0,x+\frac{y}{1-\lambda}=0\}.$$ 
For the sake of proof of continuity on $\partial_1\Gamma$, the region of the initial position $\left(X_{0-},Y_{0-}\right)$ can be expanded to $$\Lambda=\{\left(x,y\right)\mid x+\left(1-\mu\right)y>0,x+\frac{y}{1-\lambda}>0\}.$$
Compared with $\Lambda$, the solvency region $\Gamma$ imposes a restriction $x>0$. The agent with initial position $\left(X_{0-},Y_{0-}\right)$ on $\overline{\Lambda}\setminus\overline{\Gamma}=\{(x,y)\mid x<0,x+(1-\mu)y\ge0\}$ uses the illiquid asset to cover the short position in the liquid wealth and arrives at position $(0, Y_{0-}+\frac{X_{0-}}{1-\mu})$ on $\overline{\Gamma}$  instantaneously. In the closure of the expansion region $\overline{\Lambda}$, the net wealth after instantaneous liquidation is nonnegative. If the agent sells the illiquid asset $Y$ in the region $\overline{\Lambda}\setminus\overline{\Gamma}$, the liquid wealth $X$ is still nonnegative. The regions $\Gamma$ and $\Lambda$ are shown in Fig.~\ref{figure1}.

\begin{figure}[htbp]
	\centering
	\includegraphics[scale=0.8]{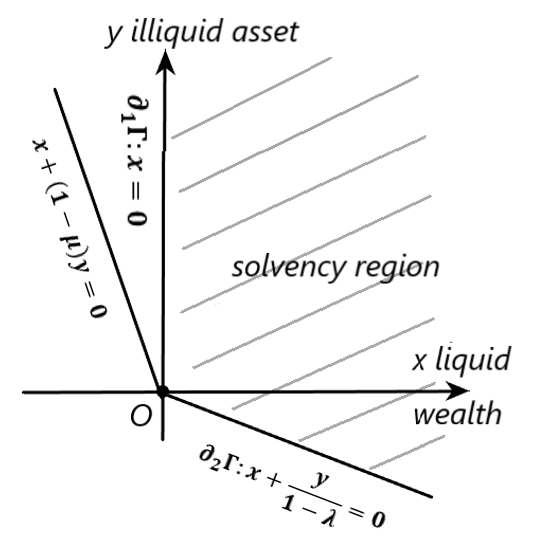} 
	\caption{The solvency region}
	\label{figure1}
\end{figure}

For an initial position $\left(X_{0-},Y_{0-}\right)=\left(x_0,y_0\right)\in\overline{\Lambda}$, we denote by $\mathcal{A}\left(x_0,y_0\right)$ the set of admissible policy $\left(\pi,c,L,M\right)$, under which $\left(X,Y\right)$ given by Eqs.~(\ref{xt})-(\ref{yt}) is always in $\overline{\Gamma}$. Then the objective of the agent is to maximize the expected lifetime discounted utility from consumption and liquidity preference, i.e.,
\begin{eqnarray}\label{problem}
	v\left(x,y\right)=\sup_{\left(\pi,c,L,M\right)\in\mathcal{A}\left(x,y\right)}\mathbb{E}
	\left[\int_{0}^{\infty}e^{-\beta t}U\left(c_t,X_t\right)dt\right].
\end{eqnarray}

Before investigating Problem \eqref{problem}, we propose some assumptions on the parameters. 
In \cite{Davis1990},  the necessary and sufficient condition for finiteness of  the value function in the optimization problem without trading the illiquid asset to be finite is
\begin{eqnarray}
	\beta-rp>\frac{p\alpha_1^2}{2\left(1-p\right)\sigma_1^2}.\label{assum1}
\end{eqnarray}
\cite{hobson2019multi} formulates another assumption
\begin{eqnarray}
	\beta-rp>\frac{p\left(\alpha_1^2\sigma_2^2+\alpha_2^2\sigma_1^2-2\rho\alpha_1\alpha_2\sigma_1\sigma_2\right)}{2\left(1-\rho^2\right)\left(1-p\right)\sigma_1^2\sigma_2^2},
	\label{assum2}
\end{eqnarray}
which is necessary and sufficient for the value function in the optimization problem without transaction costs to be finite. In the above two problems, there is no illiquid asset, and then liquidity preference is no longer applicable, so that $\theta$ can only be taken as $1$.

In what follows, we will assume the following.
\begin{assumption}\label{ass}
	Throughout this paper, we assume that (\ref{assum1}) and (\ref{assum2}) hold.
\end{assumption}

\section{\bf Properties of the value function}
In this  section, similar to \cite{Sherve1994}, we present several properties about the value function in Problem \eqref{problem}.
\subsection{\bf The HJB equation}
Given $\left(x_0,y_0\right)\in\overline{\Lambda},\left(\pi,c,L,M\right)\in\mathcal{A}\left(x_0,y_0\right),\varphi\in\mathcal{C}^2\left(\Lambda\right)$ and an almost surely finite stopping time $\tau$, It\^{o}'s rule yields
\begin{eqnarray}
	\varphi\left(x_0,y_0\right)\!\!&=&\!\!e^{-\beta\tau}\varphi\left(X_{\tau},Y_{\tau}\right)+\!\!\int_{0}^{\tau}e^{-\beta t}\!\!\left(\mathcal{L}\varphi+c_t\varphi_x\right) dt- \!\!\int_{0}^{\tau}e^{-\beta t}\!\!\sigma_1\pi_tX_t\varphi_x dB^1_t -\!\!\int_{0}^{\tau}e^{-\beta t}\!\!\sigma_2Y_t\varphi_y dB^2_t\nonumber\\
	&+&\int_{0}^{\tau}e^{-\beta t}\left\{\left[-\left(1-\mu\right)\varphi_x+\varphi_y\right]dM^c_t+\left[\varphi_x-\left(1-\lambda\right)\varphi_y\right]dL^c_t\right\}\nonumber\\
	&+&\sum_{0\le t\le \tau} e^{-\beta t}\left[\varphi\left(X_{t-},Y_{t-}\right)-\varphi\left(X_t,Y_t\right)\right],\label{ito}
\end{eqnarray}
where the second-order differential operator $\mathcal{L}$ is defined by 
$$\left(\mathcal{L}\varphi\right)\left(x,y\right)=\beta\varphi-\left(r+\alpha_1\pi\right)x\varphi_x-\left(r+\alpha_2\right)y\varphi_y-\frac{1}{2}\sigma_1^2\pi^2x^2\varphi_{xx}-\frac{1}{2}\sigma_2^2y^2\varphi_{yy}-\rho\sigma_1\sigma_2\pi xy\varphi_{xy}.$$

Then the Hamilton-Jacobi-Bellman (HJB) equation of Problem \eqref{problem} is
\begin{eqnarray}\label{zyshjb}
	\min\left\{\min_{\pi,c}\{\mathcal{L}\varphi+c\varphi_x-U\left(c,x\right)\},-\left(1-\mu\right)\varphi_x+\varphi_y,\varphi_x-\left(1-\lambda\right)\varphi_y\right\}=0,
\end{eqnarray}
which is equivalent to
\begin{eqnarray}
	\min\left\{\min_{\pi}\{\mathcal{L}\varphi-\tilde{U}\left(x,\varphi_x\right)\},-\left(1-\mu\right)\varphi_x+\varphi_y,\varphi_x-\left(1-\lambda\right)\varphi_y\right\}=0,\label{hjb}
\end{eqnarray}
where $$\tilde{U}\left(x,\tilde{x}\right)=\frac{1-\theta p}{p}\theta^{\frac{\theta p}{1-\theta p}}x^{\frac{\left(1-\theta\right)p}{1-\theta p}}{\tilde{x}}^{-\frac{\theta p}{1-\theta p}}$$ 
and the minimum point is attained at
\begin{eqnarray}
	c_*=\theta^{\frac{1}{1-\theta p}}x^{\frac{\left(1-\theta\right)p}{1-\theta p}}\varphi_x^{-\frac{1}{1-\theta p}}.\nonumber
\end{eqnarray}


Furthermore, if we assume $\varphi_{xx} <0$, which is naturally satisfied when $\varphi$ is strictly concave, $\mathcal{L}\varphi-\tilde{U}\left(x,\varphi_x\right)$ takes the minimum value at the point $$\pi_*=-\frac{\alpha_1\varphi_x+\rho\sigma_1\sigma_2y\varphi_{xy}}{\sigma_1^2x\varphi_{xx}}. $$ Then we define the operator
\begin{eqnarray}
	\overline{\mathcal{L}}\varphi&:=&\mathcal{L}\varphi-\tilde{U}\left(x,\varphi_x\right)|_{\pi=\pi_*}\nonumber\\
	&=&\beta\varphi-rx\varphi_x-\left(r+\alpha_2\right)y\varphi_y-\frac{1}{2}\sigma_2^2y^2\varphi_{yy} +\frac{\left(\alpha_1\varphi_x+\rho\sigma_1\sigma_2y\varphi_{xy}\right)^2}{2\sigma_1^2\varphi_{xx}}-\tilde{U}\left(x,\varphi_x\right).\label{fxxhjb}
\end{eqnarray}

The original HJB equation (\ref{zyshjb}) is linear, as is the HJB equation (\ref{hjb}) with respect to the second-order term. However, the final operator (\ref{fxxhjb}) and the corresponding HJB equation are nonlinear.

\subsection{\bf Properties of the value function}

This subsection examines several key properties of value functions, including homotheticity, convexity, boundedness, and continuity. Analyzing these properties enables us to understand the shape and structure of the optimal policy, which we explore in the following section.

\begin{proposition}\label{propo}
	The value function $v$ defined in Problem \eqref{problem} has the following properties:
	\item[(1)] $v$ has the homotheticity property 
	$$ v\left(mx,my\right)=m^pv\left(x,y\right), \quad\forall \left(x,y\right)\in\overline{\Lambda},\quad m>0.$$
	\item[(2)] $v$ is concave in $\overline{\Lambda}$.
	\item[(3)] $v$ is continuous in $\Lambda$. In particular, $v$ is continuous in $\Gamma\cup\partial_1\Gamma$.
	\item[(4)] $v$ has the lower bound in $\overline{\Lambda}$
	\begin{eqnarray}
		v\left(x,y\right)\ge\begin{cases}
			\frac{C_*}{p}\left(x+\left(1-\mu\right)y\right)^p, &\forall \left(x,y\right)\in\overline{\Lambda}, y\ge0,\\
			\frac{C_*}{p}\left(x+\frac{y}{1-\lambda}\right)^p, &\forall \left(x,y\right)\in\overline{\Lambda}, y<0,
		\end{cases}
	\end{eqnarray}
	where 
	\begin{eqnarray}
		C_*=\left(1-\theta p\right)^{1-\theta p}\theta^{\theta p}\left[\beta-rp-\frac{p\alpha_1^2}{2\left(1-p\right)\sigma_1^2}\right]^{\theta p-1}.\label{c*}
	\end{eqnarray}
	\item[(5)] $v$ satisfies the principle of dynamic programming. Let $\left(x,y\right)\in\Lambda$, $\Omega$ be an open subset of $\Lambda$ containing $\left(x,y\right)$, and $$\tau :=\inf\{t\ge0\mid \left(X_t,Y_t\right)\notin\overline{\Omega}\}.$$ Then, $\forall t\in\left[0,\infty\right]$, we have
	\begin{eqnarray}
		v\left(x,y\right)=\sup_{\left(\pi,c,L,M\right)\in\mathcal{A}\left(x,y\right)}\mathbb{E}\left[\int_{0}^{t\wedge\tau}e^{-\beta s}u\left(c_s,X_s\right)ds+1_{\{t\wedge\tau<\infty\}}e^{-\beta\left(t\wedge\tau\right)}v\left(X_{t\wedge\tau},Y_{t\wedge\tau}\right)\right].
	\end{eqnarray}
\end{proposition}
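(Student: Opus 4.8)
The plan is to treat the five items in an order that lets each build on the previous one, since concavity is the real workhorse behind both the continuity and (via a comparison strategy) the lower bound. The first move I would make is to reparametrize the control by the \emph{dollar} amount invested in the liquid risky asset, $N_t:=\pi_t X_t$, so that the dynamics \eqref{xt}--\eqref{yt} become affine in the state--control tuple $(X,Y,N,c,L,M)$; all of the scaling and convexity arguments are transparent in these coordinates.

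For item (1), I would exploit that $U$ is positively homogeneous of degree $p$ in $(c,x)$ together with the fact that, in the $N$-coordinates, the SDEs are linear and homogeneous in $(X,Y,N,c)$ and in the cumulative trading processes $(L,M)$. Concretely, given $(\pi,c,L,M)\in\mathcal A(x,y)$ with state $(X,Y)$, the scaled policy $(\pi,\,mc,\,mL,\,mM)$ steers the initial point $(mx,my)$ along $(mX,mY)$ and is again admissible because $\overline\Gamma$ is a cone; since the objective then scales by $m^p$ one gets $v(mx,my)\ge m^pv(x,y)$, and the reverse inequality follows from the same map with factor $1/m$. Item (2) uses the same linear structure: for $(x_i,y_i)\in\overline\Lambda$, $\kappa\in[0,1]$ and $\varepsilon$-optimal policies written in $N$-coordinates, the convex combination of the six processes is admissible for $\kappa(x_1,y_1)+(1-\kappa)(x_2,y_2)$ (here convexity of $\overline\Gamma$ is exactly what I need), produces the convex combination of the two state paths, and---by joint concavity of $U$ in $(c,x)$, which holds for every $p<1$, $p\ne0$, $\theta\in(0,1)$---yields the value at the combination $\ge\kappa v(x_1,y_1)+(1-\kappa)v(x_2,y_2)$ after letting $\varepsilon\downarrow0$.

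For the lower bound (4), I would display one explicit admissible policy: liquidate the illiquid position instantaneously---selling all of $Y$ when $y\ge0$ (arriving at liquid wealth $x+(1-\mu)y$) or buying to cover the short when $y<0$ (arriving at $x+\tfrac{y}{1-\lambda}$)---and thereafter hold the Merton fraction $\pi=\tfrac{\alpha_1}{(1-p)\sigma_1^2}$ in the liquid risky asset together with the consumption rate optimal for the frictionless one-asset problem with utility $U$. Evaluating this candidate reduces to a purely static optimization over $(c,\pi)$, which produces exactly the constant $C_*$ of \eqref{c*}; Assumption~\ref{ass}, specifically \eqref{assum1}, is what makes the bracket positive and the candidate value finite, and since $v$ is a supremum the candidate value is a genuine lower bound for both signs of $p$.

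Item (3) is where the enlargement from $\Gamma$ to $\Lambda$ pays off, and it is the step I expect to be the most delicate. By (2), $v$ is concave on the \emph{convex} set $\overline\Lambda$, and it is finite there: for $p<0$ the bound (4) and $U\le0$ pin it down, while for $p>0$ comparison with the frictionless problem (finite under \eqref{assum2}) gives the upper bound. A finite concave function is automatically continuous on the interior of its domain, and since $\Lambda$ is open this yields continuity throughout $\Lambda$. The point of enlarging the admissible initial region to $\Lambda$ is precisely that $\partial_1\Gamma=\{x=0,\ y\ge0\}$ (away from the origin) lies in the \emph{interior} $\Lambda$ rather than on a true boundary, so the ``boundary continuity'' advertised in the statement is subsumed by interior continuity and no separate, harder boundary estimate is required. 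Finally, for the dynamic programming principle (5) I would run the standard two-sided argument: ``$\le$'' follows by conditioning at $\tau$ and the tower property, while ``$\ge$'' follows by concatenating an arbitrary policy on $[0,t\wedge\tau]$ with $\varepsilon$-optimal continuations, the measurability of the selection being the only genuine technical nuisance; here one leans on the strong Markov property of $(X,Y)$ and, if one prefers, cites the general semigroup/DPP machinery for singular-control problems.
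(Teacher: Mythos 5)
Your proposal is correct and takes essentially the same route as the paper: scaling of policies for homotheticity, joint concavity of $U$ together with linearity of the dynamics for concavity of $v$ (the paper defers to Proposition 3.1 of Shreve--Soner), interior continuity of a finite concave function on the open set $\Lambda$ for item (3) (this is exactly why the paper enlarges $\Gamma$ to $\Lambda$), the liquidate-then-one-asset policy whose value is $\frac{C_*}{p}x^p$ for the lower bound (the paper computes it via the method of Davis--Norman), and the standard two-sided concatenation argument for the dynamic programming principle (the paper cites Corollary 4.2 of Shreve--Soner). Your explicit dollar-amount reparametrization $N_t=\pi_t X_t$ and the spelled-out Merton candidate are just more detailed renderings of the same steps, not a different method.
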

\begin{proof}
	We prove the properties one by one.
	Property (1) is trivial and follows from the fact $$\left(\pi,c,L,M\right)\in\mathcal{A}\left(x,y\right) \quad \iff \quad \left(\pi,mc,mL,mM\right)\in\mathcal{A}\left(mx,my\right).$$
	
	Using the Jacobi matrix of $U$, we can conclude that $U$ is concave with respect to $(c, x)$. Following a similar argument to Proposition 3.1 in \cite{Sherve1994}, we can show that $v$ is also concave, which verifies Property (2). Because a concave function is continuous on the interior of its domain, we can directly deduce Property (3) from Property (2).
	
	Regarding Property (4), one admissible policy is to immediately liquidate the illiquid asset and jump to the $x$-axis, after which there is no further transfer between the illiquid asset and liquid wealth. The resulting optimization problem is equivalent to the problem of optimizing without trading the illiquid asset, and the value function $\tilde{v}$ can be computed using the method described in \cite{Davis1990}. To be specific, the HJB equation for $\tilde{v}$ is
	$$\max_{c,\pi}\left\{-\beta\tilde{v}+\left(r+\alpha_1\pi\right)x\tilde{v}_x+\frac{1}{2}\sigma_1^2\pi^2x^2\tilde{v}_{xx}-c\tilde{v}_x+U\left(c,x\right)\right\}=0,$$ that is,
	$$-\beta\tilde{v}-rx\tilde{v}_x+\frac{\alpha_1^2\tilde{v}_x^2}{\sigma_1^2\tilde{v}_{xx}}+\frac{1-\theta p}{p}\theta^{\frac{\theta p}{1-\theta p}}x^{\frac{\left(1-\theta\right)p}{1-\theta p}}\tilde{v}_x^{\frac{\theta p}{\theta p-1}}=0,$$
	and the value function $\tilde{v}$ is $$\tilde{v}\left(x\right)=\frac{C_*}{p}x^p,$$ where $C_*$ is given by Eq.~(\ref{c*}). 
	
	To prove Property (5), we can refer to Corollary 4.2 in \cite{Sherve1994} by replacing  $U_p\left(x\right)$ with $U\left(c,x\right)$.
\end{proof}

In order to prove the continuity of $v$ on $\partial_2\Gamma$, we only need to derive the value of $v$ on $\partial_2\Gamma$. According to the following proposition, we know $v=0$ when $0<p<1$ and $v=-\infty$ when $p<0$ on $\partial_2\Gamma$.

\begin{proposition}\label{propo2}
	If $\left(x_0,y_0\right)\in\partial_2\Gamma$, all admissible policies must involve using the liquid wealth to cover the short position in the illiquid asset and remaining at position $(0,0)$.
\end{proposition}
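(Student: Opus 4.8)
The plan is to work with the net liquidation value in the short-$Y$ regime,
$$Z_t := X_t + \frac{Y_t}{1-\lambda},$$
which is exactly the quantity the solvency constraint $(X_t,Y_t)\in\overline{\Gamma}$ forces to stay nonnegative, and which vanishes on $\partial_2\Gamma$: since $(x_0,y_0)\in\partial_2\Gamma$ means $x_0\ge 0$ and $y_0=-(1-\lambda)x_0$, we have $Z_0=0$. First I would differentiate $Z$ using Eqs.~(\ref{xt})--(\ref{yt}) and record three structural facts. (i) The purchase terms cancel, so $dL$ leaves $Z$ unchanged. (ii) The sale terms contribute $\left[(1-\mu)-\frac{1}{1-\lambda}\right]dM_t$, whose coefficient is $\le 0$ because $1-\mu\le 1\le \frac{1}{1-\lambda}$; hence $Z$ has no upward jumps and its singular part is nonincreasing. (iii) The diffusion part has quadratic-variation density
$$q_t = \sigma_1^2\pi_t^2X_t^2 + \frac{\sigma_2^2}{(1-\lambda)^2}Y_t^2 + \frac{2\rho\sigma_1\sigma_2}{1-\lambda}\pi_tX_tY_t = \Big(\sigma_1\pi_tX_t + \rho\tfrac{\sigma_2}{1-\lambda}Y_t\Big)^2 + (1-\rho^2)\tfrac{\sigma_2^2}{(1-\lambda)^2}Y_t^2,$$
which, because $|\rho|<1$, is a positive-definite quadratic form in $(\pi_tX_t,Y_t)$ and therefore vanishes if and only if $\pi_tX_t=0$ and $Y_t=0$.

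The core step is to show $Z_t\equiv 0$, equivalently $(X_t,Y_t)\equiv(0,0)$. Since $Z$ is nonnegative with only downward jumps, a jump from the boundary $\{Z=0\}$ would violate $Z\ge 0$, so no jumps occur on that set and the occupation-times/local-time identity yields $\int_0^t \mathbf{1}_{\{Z_s=0\}}\,d\langle Z\rangle_s = \int_0^t \mathbf{1}_{\{Z_s=0\}}q_s\,ds = 0$. Thus $q_s=0$ for a.e.\ $s$ with $Z_s=0$, and by the nondegeneracy in (iii) this forces $Y_s=0$, whence $X_s = Z_s - \frac{Y_s}{1-\lambda} = 0$; that is, $\{Z_s=0\}$ coincides up to a null set with $\{(X_s,Y_s)=(0,0)\}$. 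On this set the drift of $Z$ equals $(r+\alpha_1\pi_s)X_s - c_s + \frac{r+\alpha_2}{1-\lambda}Y_s = -c_s\le 0$, the diffusion is absent since $q_s=0$, and the singular part is nonincreasing by (ii); hence $Z$ cannot increase away from $0$. Starting from $Z_0=0$ with $Z\ge 0$, this traps the process: setting $\tau:=\inf\{t: Z_t>0\}$, continuity at the boundary gives $Z_\tau=0$ and $(X_\tau,Y_\tau)=(0,0)$, and the above shows $Z$ cannot become positive there, so $\tau=\infty$ and $Z_t\equiv 0$.

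It then remains to read off the economic content and the boundary value. The only admissible way to pass from $(x_0,y_0)$ to $(0,0)$ is an immediate purchase $\Delta L_0 = x_0$: buying raises $Y$ by $(1-\lambda)\Delta L$ and lowers $X$ by $\Delta L$, so $\Delta L_0=x_0$ sends $(x_0,-(1-\lambda)x_0)\mapsto(0,0)$ while keeping $X\ge 0$; selling only drives $Y$ more negative and $Z$ below $0$, and is therefore excluded. This is precisely ``using the liquid wealth to cover the short position in the illiquid asset.'' Once at $(0,0)$, the multiplicative noise is frozen, any consumption or purchase would push $X<0$, and any sale would push $Z<0$, so the agent is forced to remain at $(0,0)$ with $c\equiv 0$. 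Consequently $v(x_0,y_0)=\int_0^\infty e^{-\beta t}U(0,0)\,dt$, which equals $0$ when $0<p<1$ and $-\infty$ when $p<0$, consistent with the statement preceding the proposition.

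The step I expect to be the main obstacle is the rigorous passage from ``$Z\ge 0$ with $Z_0=0$'' to ``$(X,Y)\equiv(0,0)$'': the naive argument that the diffusion coefficient must vanish on the boundary has to be made precise in the presence of the singular controls $L,M$ and an unbounded, sign-indefinite drift. The clean route is the local-time identity $\int_0^t\mathbf{1}_{\{Z_s=0\}}\,d\langle Z\rangle_s=0$ together with the $|\rho|<1$ nondegeneracy of $q_s$; care is also needed to justify that $Z$ has no upward jumps and that the admissibility/integrability assumptions make $Z$ a genuine semimartingale on the relevant time set, so that the identity and the trapping argument apply.
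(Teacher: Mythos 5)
Your structural observations (i)--(iii) are correct and closely parallel what the paper exploits: the process $Z_t=X_t+\frac{Y_t}{1-\lambda}$ is unchanged by purchases, pushed down by sales and consumption, and its diffusion density $q_t$ is a nondegenerate quadratic form in $(\pi_tX_t,Y_t)$ because $|\rho|<1$. The genuine gap is the ``trapping'' step. All the facts you establish are conditions \emph{on the set} $\{Z_s=0\}$ (and only for a.e.\ $s$ in that set, since both the occupation-time identity and your drift computation are a.e.\ statements); they say nothing about the dynamics off that set, where the drift $(r+\alpha_1\pi_s)X_s-c_s+\frac{r+\alpha_2}{1-\lambda}Y_s$ can be large and positive and the diffusion is active. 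Such boundary conditions do not prevent escape. A concrete counterexample to the principle you invoke: $Z_t=B_t^2$ solves $dZ_t=2\sqrt{Z_t}\,dW_t+\mathbf{1}_{\{Z_t>0\}}\,dt$, a nonnegative continuous process started at $0$, with no jumps, whose diffusion \emph{and} drift coefficients vanish identically on $\{Z=0\}$, yet $Z_t>0$ for a.e.\ $t>0$. The reason the occupation-time identity has no force here is exactly that in any escape scenario the zero set $\{s:Z_s=0\}$ has Lebesgue measure zero, so every ``a.e.\ on $\{Z=0\}$'' conclusion is vacuous. Hence your argument does not exclude admissible policies whose net position leaves $\partial_2\Gamma$, which is the whole content of the proposition.

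What is needed --- and what the paper does --- is a global martingale argument that uses the dynamics everywhere, not only on the boundary. The paper first rules out an initial sale ($M_0=0$) because a sale jump would make $Z_0<0$; so the initial jump slides along $\partial_2\Gamma$. It then introduces stopping times $\tau_n$ (capping $|\pi_tX_t|$ by $n$ and confining $Y_t$ to $(y_0-1,0)$), and bounds $0\le e^{-r\tau_n}Z_{\tau_n}$ from above by the sum of the stochastic integrals $\int_0^{\tau_n}e^{-rt}\pi_tX_t\,d(\alpha_1t+\sigma_1B^1_t)+\int_0^{\tau_n}e^{-rt}\frac{\sigma_2}{1-\lambda}Y_t\,dB^2_t$, after discarding the nonpositive terms $-c_t\,dt$, $\frac{\alpha_2}{1-\lambda}Y_t\,dt$ and the $dM$ contribution. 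Girsanov then yields an equivalent measure $\mathbb{Q}$ under which these integrators are (independent) martingales, so the upper bound has zero $\mathbb{Q}$-expectation; being nonnegative, it vanishes $\mathbb{Q}$-a.s., hence $\mathbb{P}$-a.s. Only now does your nondegeneracy observation (iii) enter: the identically vanishing martingale has zero quadratic variation, forcing $\pi_tX_t=0$ and $Y_t=0$ a.e.\ on $[0,\tau_n]$, which contradicts $Y_t<0$ there unless $\tau_n=0$; letting $n\to\infty$ gives $\tau=0$, hence $Y_0\ge0$, and combined with $(X_0,Y_0)\in\partial_2\Gamma$ this pins $(X_0,Y_0)=(0,0)$. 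Equivalently phrased: under $\mathbb{Q}$, $e^{-rt}Z_t$ is a nonnegative local supermartingale started at $0$ and is therefore indistinguishable from $0$. This change-of-measure (or supermartingale) step is the essential ingredient your proposal is missing, and it cannot be replaced by the local-time identity; your step (iii) is the right tool, but only after the process has been shown to be degenerate by the global argument.
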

\begin{proof}
	For any nonnegative $(L_0,M_0)$ at time zero. Note that $X_0=x_0-L_0+\left(1-\mu\right)M_0$ and $Y_0=y_0+\left(1-\lambda\right)L_0-M_0$, we obtain $$X_0+\frac{Y_0}{1-\lambda}=\left(1-\mu-\frac{1}{1-\lambda}\right)M_0\ge0,$$ which implies $M_0=0$ and then $\left(X_0,Y_0\right)\in\partial_2\Gamma$. 
	
	Let $$\tau_n:=1\wedge\inf\{t\ge0\mid|\pi_tX_t|>n\; \mbox{or} Y_t\notin\left(y_0-1,0\right)\}$$ and $$\tau:=1\wedge\inf\{t\ge0\mid Y_t\notin\left(y_0-1,0\right)\},$$ then
	\begin{eqnarray}
		0&\le& e^{-r\tau_n } \left(X_{\tau_n}+\frac{1}{1-\lambda}Y_{\tau_n}\right)\nonumber\\
		&=& \int_{0}^{\tau_n}e^{-rt}\left[\alpha_1\pi_tX_t+\frac{\alpha_2}{1-\lambda}Y_t-c_t\right]dt+\int_{0}^{\tau_n}e^{-rt}\left(1-\mu-\frac{1}{1-\lambda}\right)dM_t\nonumber\\
		&+&\int_{0}^{\tau_n}e^{-rt}\sigma_1\pi_tX_tdB^1_t +\int_{0}^{\tau_n}e^{-rt}\frac{\sigma_2}{1-\lambda}Y_tdB^2_t\nonumber\\
		&\le&\int_{0}^{\tau_n}e^{-rt}\pi_tX_td\left(\alpha_1t+\sigma_1B^1_t\right) +\int_{0}^{\tau_n}e^{-rt}\frac{\sigma_2}{1-\lambda}Y_tdB^2_t, \qquad \mathbb{P}-a.s..\nonumber
	\end{eqnarray}
	It is worth noting that $\frac{B^1-\rho B^2}{\sqrt{1-\rho^2}}$ and $B^2$ are two standard Brownian motions that are independent of each other. Applying Girsanov's Theorem, we can find an equivalent measure $\mathbb{Q}$ on which $B^2$ and $\left\{\alpha_1t+\sigma_1\left(B^1_t-\rho B^2_t\right)\right\}_{t\ge 0}$ are two independent Brownian motions. As a result, we have
	\begin{eqnarray}
		\mathbb{E}^{\mathbb{Q}} \left[\int_{0}^{\tau_n}e^{-rt}\pi_tX_td\left(\alpha_1t+\sigma_1B^1_t\right) + e^{-rt}\frac{\sigma_2}{1-\lambda}Y_tdB^2_t\right]=0,\nonumber
	\end{eqnarray}
	from which we have
	\begin{eqnarray}
		\int_{0}^{\tau_n}e^{-rt}\pi_tX_td\left(\alpha_1t+\sigma_1B^1_t\right) + e^{-rt}\frac{\sigma_2}{1-\lambda}Y_tdB^2_t=0,\quad \mathbb{Q}-a.s.,\nonumber
	\end{eqnarray}
	implying $$\tau_n=0,\quad \mathbb{Q}-a.s.,$$ and then $$\tau_n=0,\quad \mathbb{P}-a.s..$$ 
	Letting $n\rightarrow +\infty$, we have $$\tau=0,\quad \mathbb{P}-a.s.,$$ because $\tau_n\uparrow\tau$. Then we must have $Y_0\ge0$ as $M_0=0$ implies $Y_0\ge y_0$, but we have proved $\left(X_0,Y_0\right)\in\partial_2\Gamma$, we conclude $\left(X_0,Y_0\right)=\left(0,0\right)$.
\end{proof}

To establish continuity of $v$ on $\partial_2\Gamma$, we need to discuss two cases separately, $0<p<1$ and $p<0$. When $0<p<1$, we can derive the results regarding the form of $v$ near the boundaries $\partial_1\Gamma$ and $\partial_2\Gamma$. These results are also applicable in the subsequent section.

\begin{theorem}\label{v=phi}
	Assume $0<p<1$, there exist $\delta_1>0$ sufficiently small and $\delta_2<1$ sufficiently close to $1$ such that
	\begin{eqnarray}
		v\left(x,y\right)&=&v\left(\frac{\delta_1\left(x+\left(1-\mu\right)y\right)}{1-\mu+\delta_1},\frac{x+\left(1-\mu\right)y}{1-\mu+\delta_1}\right)\nonumber\\
		&=&\frac{A}{p}\left(x+\left(1-\mu\right)y\right)^p \qquad\qquad\qquad\qquad\qquad if \quad x\ge0,\;y>0,\;\frac{x}{y}<\delta_1,\nonumber\\
		v\left(x,y\right)&=&v\left(\frac{x+\frac{y}{1-\lambda}}{1-\delta_2},-\frac{\delta_2\left(1-\lambda\right)\left(x+\frac{y}{1-\lambda}\right)}{1-\delta_2}\right)\nonumber\\
		&=&\frac{B}{p}\left(x+\frac{y}{1-\lambda}\right)^p \qquad\qquad\qquad\qquad\qquad if \quad -\delta_2<\frac{y}{\left(1-\lambda\right)x}\le-1\nonumber,
	\end{eqnarray}
	where $$A:=\frac{pv\left(\delta_1,1\right)}{\left(1-\mu+\delta_1\right)^p}, \quad\quad B:=\frac{pv\left(1,-\left(1-\lambda\right)\delta_2\right)}{\left(1-\delta_2\right)^p}.$$
\end{theorem}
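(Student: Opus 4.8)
The two displayed identities are mirror images of one another: the first asserts that on the thin cone adjacent to $\partial_1\Gamma$ it is optimal to sell the illiquid asset instantaneously down to the ray $\{x=\delta_1 y\}$, while the second asserts that on the thin cone adjacent to $\partial_2\Gamma$ it is optimal to buy instantaneously up to the ray $\{y=-\delta_2(1-\lambda)x\}$. I will prove the first and only indicate the symmetric changes for the second. Throughout I exploit Property (1) of Proposition \ref{propo}: writing $k=x/y$ and $v(x,y)=y^p h(k)$ for $y>0$, the claim $v(x,y)=\frac{A}{p}(x+(1-\mu)y)^p$ is equivalent to $h(k)=\frac{A}{p}(k+1-\mu)^p$ on $[0,\delta_1]$, and the constant $A$ is fixed precisely so that equality holds at the outer edge $k=\delta_1$.

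\textbf{Lower bound.} For $(x,y)$ in the cone the policy ``sell the illiquid asset instantaneously along the direction $(1-\mu,-1)$ until the ray $x=\delta_1 y$ is reached, then follow an optimal policy from there'' is admissible, keeps the process in $\overline{\Gamma}$, and produces no utility during the initial instantaneous jump. Since $x+(1-\mu)y$ is conserved along the selling direction, the landing point is exactly $\left(\frac{\delta_1(x+(1-\mu)y)}{1-\mu+\delta_1},\frac{x+(1-\mu)y}{1-\mu+\delta_1}\right)$, and homotheticity evaluates $v$ there as $\frac{A}{p}(x+(1-\mu)y)^p$. Hence $v(x,y)\ge\frac{A}{p}(x+(1-\mu)y)^p$, which already delivers the middle equality of the statement as an inequality.

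\textbf{Upper bound (the crux).} Put $\bar\Phi(x,y)=\frac{A}{p}(x+(1-\mu)y)^p$ and check that, for $\delta_1$ small, $\bar\Phi$ is a classical supersolution of \eqref{hjb} on the open cone $\mathcal C=\{y>0,\ 0<x<\delta_1 y\}$. Direct differentiation gives $\bar\Phi_y=(1-\mu)\bar\Phi_x$, so the selling inequality $-(1-\mu)\bar\Phi_x+\bar\Phi_y\ge0$ holds with equality, while $\bar\Phi_x-(1-\lambda)\bar\Phi_y=\bar\Phi_x\left(1-(1-\lambda)(1-\mu)\right)\ge0$ since $(1-\lambda)(1-\mu)\le1$ and $\bar\Phi_x>0$. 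The remaining requirement $\overline{\mathcal L}\bar\Phi\ge0$, with $\overline{\mathcal L}$ as in \eqref{fxxhjb}, is obtained by computation: as $x\to0$ the penalty term $\tilde U(x,\bar\Phi_x)$ vanishes because its exponent $(1-\theta)p/(1-\theta p)$ is positive, so the sign of $\overline{\mathcal L}\bar\Phi$ along $\partial_1\Gamma$ is governed by a parameter expression that is strictly positive under Assumption \ref{ass}; by continuity in $k$ this sign persists on all of $\mathcal C$ once $\delta_1$ is small enough. With the supersolution property in hand, I apply the It\^o expansion \eqref{ito} to $\bar\Phi$ along $(X,Y)$ under an arbitrary admissible policy, stopped at the first hitting time $\tau$ of the ray $\{x=\delta_1 y\}$: every drift contribution is $\ge0$ by $\overline{\mathcal L}\bar\Phi\ge0$, the absolutely continuous and jump parts of the $L,M$ terms are $\ge0$ by the two trading inequalities, and the stochastic integrals are martingales killed by localization. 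This yields $\bar\Phi(x,y)\ge \mathbb E\left[\int_0^{\tau}e^{-\beta t}U(c_t,X_t)\,dt+e^{-\beta\tau}\bar\Phi(X_\tau,Y_\tau)\right]$; since $\bar\Phi=v$ on the exit ray, the dynamic programming principle of Property (5), applied with $\Omega=\mathcal C$, upgrades the right-hand side to the full payoff of the policy, and taking the supremum gives $v\le\bar\Phi$ on $\mathcal C$.

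\textbf{Main obstacle and the second identity.} I expect the real work to lie in the upper bound, on two points: (i) showing that $\overline{\mathcal L}\bar\Phi\ge0$ near $\partial_1\Gamma$ reduces, after the utility term drops out, to an inequality implied by \eqref{assum1}--\eqref{assum2}, and then propagating it to positive $k$ using the smallness of $\delta_1$; and (ii) making the verification rigorous up to $\partial_1\Gamma$, where the candidate feedback control $\pi_*=-\frac{\alpha_1\bar\Phi_x+\rho\sigma_1\sigma_2 y\bar\Phi_{xy}}{\sigma_1^2 x\bar\Phi_{xx}}$ blows up, the $X$-dynamics degenerate, and $X\equiv0$ forces $c\equiv0$ so that no utility accrues on this set; this is controlled via localization together with the continuity and bounds already recorded in Proposition \ref{propo}. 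For the second identity the scheme is symmetric with selling replaced by buying: the candidate $\bar\Psi(x,y)=\frac{B}{p}\bigl(x+\frac{y}{1-\lambda}\bigr)^p$ satisfies $\bar\Psi_x-(1-\lambda)\bar\Psi_y=0$, so the buying inequality now binds, Proposition \ref{propo2} pins down the boundary value on $\partial_2\Gamma$ (namely $v=0$ when $0<p<1$), and the same supersolution-plus-verification argument, stopped at the ray $\{y=-\delta_2(1-\lambda)x\}$ where $\bar\Psi=v$, yields $v=\bar\Psi$ on the cone adjacent to $\partial_2\Gamma$.
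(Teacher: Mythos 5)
Your proposal is correct and follows essentially the same route as the paper: the lower bound via the instantaneous-sale-to-the-ray policy plus homotheticity, and the upper bound by verifying that $\frac{A}{p}\left(x+\left(1-\mu\right)y\right)^p$ is a supersolution of the HJB equation on the thin cone (the utility term vanishes near $\partial_1\Gamma$ precisely because $\theta<1$, and Assumption \ref{ass} makes the limiting parameter expression strictly positive), followed by It\^{o}'s formula with localization and the dynamic programming principle. The only refinements the paper adds are the uniform bound $A\ge C_*$ from Proposition \ref{propo}(4) — which removes the circularity of choosing $\delta_1$ small while $A$ itself depends on $\delta_1$ — and the use of $\varphi\ge v$ on all of $\Lambda\setminus D_1$ (not merely $\varphi=v$ on the exit ray) so as to handle exits that overshoot the ray through a jump of $L$ or $M$.
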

\begin{proof}
	We first show the results of $v$ near $\partial_1\Gamma$. Define the wedge $$D_1:=\{\left(x,y\right)\in\Lambda\mid  x\ge0,y>0,\frac{x}{y}<\delta_1\}$$ and a strictly concave function
	$$\varphi\left(x,y\right):=\frac{A}{p}\left(x+\left(1-\mu\right)y\right)^p.$$ From Property (4) of Proposition \ref{propo}, we have
	\begin{eqnarray}
		\frac{A}{p}=\frac{v\left(\delta_1,1\right)}{\left(1-\mu+\delta_1\right)^p}\ge \frac{C_*}{P}. \label{a>c}
	\end{eqnarray}  
	
	Note that $$\varphi\left(x,y\right)=v\left(\frac{\delta_1\left(x+\left(1-\mu\right)y\right)}{1-\mu+\delta_1},\frac{x+\left(1-\mu\right)y}{1-\mu+\delta_1}\right),$$ 
	we have $\varphi\le v$ on  $D_1$ and $\varphi\ge v$ on $\Lambda\setminus D_1$. Moreover, for $\left(x,y\right)\in D_1$,  $$-\left(1-\mu\right)\varphi_x+\varphi_y\ge0,\varphi_x-\left(1-\lambda\right)\varphi_y\ge0.$$ Using Eq.~(\ref{a>c}) and Assumption (\ref{assum2}), we obtain
	\begin{eqnarray}
		\!\!\overline{\mathcal{L}}\varphi\!\!\!\!\!\!&=&\!\!\!\! A\left(x+\left(1-\mu\right)y\right)^p\times\nonumber\\
		&&\!\!\!\!\left\{\frac{\beta-rp}{p}-\frac{1-p}{2}\left[\frac{\alpha_1}{\left(1-p\right)\sigma_1}\left(1+\left(1-\mu\right)\frac{y}{x}\right)-\rho\sigma_2\left(1-\mu\right)\frac{y}{x}\right]^2\frac{x^2}{\left(x+\left(1-\mu\right)y\right)^2}\right.\nonumber\\
		&&\!\!\!\!-\alpha_2\frac{\left(1-\mu\right)y}{x+\left(1-\mu\right)y}+\frac{1}{2}\sigma_2^2\left(1-p\right)\left[\frac{\left(1-\mu\right)y}{x+\left(1-\mu\right)y}\right]^2\nonumber\\
		&&\!\!\!\!\left.-\frac{1-\theta p}{p}\theta^{\frac{\theta p}{1-\theta p}}A^{-\frac{1}{1-\theta p}}\left(\frac{x}{x+\left(1-\mu\right)y}\right)^{\frac{\left(1-\theta\right)p}{1-\theta p}}\right\}\nonumber\\
		&\ge&\!\!\!\! A\left(x+\left(1-\mu\right)y\right)^p\times\nonumber\\
		&&\!\!\!\!\left\{\frac{\beta-rp}{p}-\frac{1-p}{2}\left[\frac{\alpha_1}{\left(1-p\right)\sigma_1}\left(1+\left(1-\mu\right)\frac{y}{x}\right)-\rho\sigma_2\left(1-\mu\right)\frac{y}{x}\right]^2\frac{x^2}{\left(x+\left(1-\mu\right)y\right)^2}\right.\nonumber\\
		&&\!\!\!\!-\alpha_2\frac{\left(1-\mu\right)y}{x+\left(1-\mu\right)y}+\frac{1}{2}\sigma_2^2\left(1-p\right)\left[\frac{\left(1-\mu\right)y}{x+\left(1-\mu\right)y}\right]^2\nonumber\\
		&&\!\!\!\!\left.-\left[\frac{\beta-rp}{p}-\frac{\alpha_1^2}{2\left(1-p\right)\sigma_1^2}\right]\left(\frac{x}{x+\left(1-\mu\right)y}\right)^{\frac{\left(1-\theta\right)p}{1-\theta p}}\right\}\nonumber\\
		&\overset{t:=\frac{\left(1-\mu\right)y}{x}}{=}&\!\!\!\! A\left(x+\left(1-\mu\right)y\right)^p\left\{ \left[\frac{\beta-rp}{p}-\frac{\alpha_1^2}{2\left(1-p\right)\sigma_1^2}\right]\left[1-\left(1+t\right)^{-\frac{\left(1-\theta\right)p}{1-\theta p}}\right]\right.\nonumber\\
		&&\!\!\!\!+\left.\frac{1}{2}\left(1-\rho^2\right)\left(1-p\right)\sigma_2^2\left(\frac{t}{1+t}\right)^2+\left(\frac{\rho\alpha_1\sigma_2}{\sigma_1}-\alpha_2\right)\frac{t}{1+t}\right\}\nonumber\\
		&\ge&\!\!\!\! A\left(x+\left(1-\mu\right)y\right)^p\left\{ \left[\frac{\beta-rp}{p}-\frac{\alpha_1^2}{2\left(1-p\right)\sigma_1^2}\right]\left[1-\left(1+t\right)^{-\frac{\left(1-\theta\right)p}{1-\theta p}}\right]\right.\nonumber\\
		&&\!\!\!\!-\left.\frac{\left(\rho\alpha_1\sigma_2-\alpha_2\sigma_1\right)^2}{2\left(1-\rho^2\right)\left(1-p\right)\sigma_1^2\sigma_2^2}\right\}\nonumber\\
		&:=&\!\!\!\! A\left(x+\left(1-\mu\right)y\right)^p f\left(t\right).\label{ft}
	\end{eqnarray}
	As $t\uparrow+\infty$, from assumption (\ref{assum2}), we have $$f\left(t\right)\longrightarrow \frac{\beta-rp}{p}-\frac{\alpha_1^2\sigma_2^2+\alpha_2^2\sigma_1^2-2\rho\alpha_1\alpha_2\sigma_1\sigma_2}{2\left(1-\rho^2\right)\left(1-p\right)\sigma_1^2\sigma_2^2}>0.$$ As such, there exists a sufficiently small $\delta_1>0$ such that $\overline{\mathcal{L}}\varphi\ge 0$ on $D_1$.
	
	Define the trapezoid $$H_n:=\{\left(x,y\right)\in\overline{D}_1\mid\frac{1}{n}\le y\le n\}$$ 
	and the stopping times
	$$ T_n :=\inf\{t\ge0\mid\left(X_t,Y_t\right)\notin H_n\},\; T :=\inf\{t\ge0\mid\left(X_t,Y_t\right)\notin \overline{D}_1\},\;\kappa_m:=\inf\{t\ge0\mid|\pi_t|>m\}.$$ Then $\lim\limits_{n\rightarrow\infty} T_n=T$ almost surely and $$\{T<\infty\}=\bigcup_{n=1}^{\infty}\{T_n=T\le n\}.$$
	
	For any $(x,y)\in D_1$ and some policy $(\pi,c,L,M)\in\mathcal{A}(x,y)$, note that $\pi$ and $(X_t,Y_t)$ are bounded for $t<T_n\wedge\kappa_m$. Using the formula in Eq.~(\ref{ito}) and taking $\tau=T_n\wedge\kappa_m\wedge n$, we obtain
	\begin{eqnarray}
		\varphi\left(x,y\right)&=&\mathbb{E}\left[e^{-\beta T_n\wedge \kappa_m\wedge n}\varphi\left(X_{T_n\wedge \kappa_m\wedge n},Y_{T_n\wedge \kappa_m\wedge n}\right)\right]+\mathbb{E}\left[\int_{0}^{T_n\wedge \kappa_m\wedge n}e^{-\beta t}\left(\mathcal{L}\varphi+c_t\varphi_x\right) dt\right]\nonumber\\
		&&+\mathbb{E}\left[\int_{0}^{T_n\wedge \kappa_m\wedge n}e^{-\beta t}\left\{\left[-\left(1-\mu\right)\varphi_x+\varphi_y\right]dM^c_t+\left[\varphi_x-\left(1-\lambda\right)\varphi_y\right]dL^c_t\right\}\right]\nonumber\\
		&&+\sum_{0\le t\le T_n\wedge \kappa_m\wedge n} e^{-\beta t}\mathbb{E}\left[\varphi\left(X_{t-},Y_{t-}\right)-\varphi\left(X_t,Y_t\right)\right]\nonumber\\
		&\ge&\mathbb{E}\left[e^{-\beta T_n\wedge \kappa_m\wedge n}\varphi\left(X_{T_n\wedge \kappa_m\wedge n},Y_{T_n\wedge \kappa_m\wedge n}\right)\right]+\mathbb{E}\left[\int_{0}^{T_n\wedge \kappa_m\wedge n}e^{-\beta t}U\left(c_t,X_t\right) dt\right].\nonumber		
	\end{eqnarray}
    Taking the limit $m\rightarrow \infty$,
	\begin{eqnarray}
		\varphi\left(x,y\right)&\ge&
		\mathbb{E}\left[e^{-\beta T_n\wedge n}\varphi\left(X_{T_n\wedge n},Y_{T_n\wedge n}\right)\right]+\mathbb{E}\left[\int_{0}^{T_n\wedge n}e^{-\beta t}U\left(c_t,X_t\right) dt\right],\nonumber\\
		&\ge&\mathbb{E}\left[1_{\{T_n=T\le n\}}e^{-\beta T}v\left(X_{T},Y_{T}\right)\right]+\mathbb{E}\left[\int_{0}^{T_n\wedge n}e^{-\beta t}U\left(c_t,X_t\right) dt\right],\nonumber
	\end{eqnarray}
    and then taking the limit $n\rightarrow \infty$,
	\begin{eqnarray}
		 \varphi\left(x,y\right)&\ge&\mathbb{E}\left[1_{\{T<\infty\}}e^{-\beta T}v\left(X_{T},Y_{T}\right)\right]+\mathbb{E}\left[\int_{0}^{T}e^{-\beta t}U\left(c_t,X_t\right) dt\right].\nonumber
	\end{eqnarray}
	Maximizing the right side over $(\pi,c,L,M)\in\mathcal{A}(x,y)$ and applying the principle of dynamic programming (Property (5) of Proposition \ref{propo}), we conclude $\varphi(x,y)\ge v(x,y)$ on $D_1$. Hence, we have $\varphi(x,y)=v(x,y)$ on $D_1$.
	
	The proof of the second part (the form of $v$ near $\partial_2\Gamma$) is similar. Define the wedge $$D_2:=\{\left(x,y\right)\in\Lambda\mid -\delta_2<\frac{y}{\left(1-\lambda\right)x}\le-1\}$$ and a strictly concave function
	$$\varphi\left(x,y\right):=\frac{B}{p}\left(x+\frac{y}{1-\lambda}\right)^p.$$ From the property (4) of Proposition \ref{propo}, we have
	\begin{eqnarray}
		\frac{B}{p}=\frac{v\left(1,-\left(1-\lambda\right)\delta_2\right)}{\left(1-\delta_2\right)^p}\ge \frac{C_*}{P}. \label{b>c}
	\end{eqnarray}  
	As $$\varphi\left(x,y\right)=v\left(\frac{x+\frac{y}{1-\lambda}}{1-\delta_2},-\frac{\delta_2\left(1-\lambda\right)\left(x+\frac{y}{1-\lambda}\right)}{1-\delta_2}\right),$$ we have $\varphi\le v$ on $D_2$ and $\varphi\ge v$ on $\Lambda\setminus D_2$. Moreover, for $(x,y)\in D_2$, we have $$-\left(1-\mu\right)\varphi_x+\varphi_y\ge0,\quad \varphi_x-\left(1-\lambda\right)\varphi_y\ge0.$$ Using Eq.~(\ref{a>c}) and Assumption (\ref{assum2}), we obtain
	\begin{eqnarray}
		\!\!\overline{\mathcal{L}}\varphi\!\!\!\!\!\!&=&\!\!\!\! B\left(x+\frac{y}{1-\lambda}\right)^p\times\nonumber\\
		&&\!\!\!\!\left\{\frac{\beta-rp}{p}-\frac{1-p}{2}\left[\frac{\alpha_1}{\left(1-p\right)\sigma_1}\left(1+\frac{y}{\left(1-\lambda\right)x}\right)-\rho\sigma_2\frac{y}{\left(1-\lambda\right)x}\right]^2\frac{\left(1-\lambda\right)^2x^2}{\left(\left(1-\lambda\right)x+y\right)^2}\right.\nonumber\\
		&&\!\!\!\!-\alpha_2\frac{y}{\left(1-\lambda\right)x+y}+\frac{1}{2}\sigma_2^2\left(1-p\right)\left[\frac{y}{\left(1-\lambda\right)x+y}\right]^2\nonumber\\
		&&\!\!\!\!\left.-\frac{1-\theta p}{p}\theta^{\frac{\theta p}{1-\theta p}}B^{-\frac{1}{1-\theta p}}\left(\frac{\left(1-\lambda\right)x}{\left(1-\lambda\right)x+y}\right)^{\frac{\left(1-\theta\right)p}{1-\theta p}}\right\}\nonumber\\
		&\ge&\!\!\!\! B\left(x+\frac{y}{1-\lambda}\right)^p\times\nonumber\\
		&&\!\!\!\!\left\{\frac{\beta-rp}{p}-\frac{1-p}{2}\left[\frac{\alpha_1}{\left(1-p\right)\sigma_1}\left(1+\frac{y}{\left(1-\lambda\right)x}\right)-\rho\sigma_2\frac{y}{\left(1-\lambda\right)x}\right]^2\frac{\left(1-\lambda\right)^2x^2}{\left(\left(1-\lambda\right)x+y\right)^2}\right.\nonumber\\
		&&\!\!\!\!-\alpha_2\frac{y}{\left(1-\lambda\right)x+y}+\frac{1}{2}\sigma_2^2\left(1-p\right)\left[\frac{y}{\left(1-\lambda\right)x+y}\right]^2\nonumber\\
		&&\!\!\!\!\left.-\left[\frac{\beta-rp}{p}-\frac{\alpha_1^2}{2\left(1-p\right)\sigma_1^2}\right]\left(\frac{\left(1-\lambda\right)x}{\left(1-\lambda\right)x+y}\right)^{\frac{\left(1-\theta\right)p}{1-\theta p}}\right\}\nonumber\\
		&\overset{t:=\frac{y}{\left(1-\lambda\right)x}}{=}&\!\!\!\! B\left(x+\frac{y}{1-\lambda}\right)^p\frac{1}{\left(1+t\right)^2}\left\{ \left[\frac{\beta-rp}{p}-\frac{\alpha_1^2}{2\left(1-p\right)\sigma_1^2}\right]\left[\left(1+t\right)^2-\left(1+t\right)^{2-\frac{\left(1-\theta\right)p}{1-\theta p}}\right]\right.\nonumber\\
		&&\!\!\!\!+\left.\frac{1}{2}\left(1-\rho^2\right)\left(1-p\right)\sigma_2^2t^2+\left(\frac{\rho\alpha_1\sigma_2}{\sigma_1}-\alpha_2\right)t\left(1+t\right)\right\}\nonumber\\
		&:=&\!\!\!\! B\left(x+\frac{y}{1-\lambda}\right)^p\frac{1}{\left(1+t\right)^2}g\left(t\right)\label{gt}.
	\end{eqnarray}
	
	As $t\downarrow -1$, we have $$g\left(t\right)\longrightarrow \frac{1}{2}\left(1-\rho^2\right)\left(1-p\right)\sigma_2^2 > 0.$$ Therefore, for $\delta_2<1$ and sufficiently close to $1$, we have $\overline{\mathcal{L}}\varphi\ge 0$ on $D_2$.
	
	Define the trapezoids and stopping times
	$$K_n:=\{\left(x,y\right)\in\overline{D_2}\mid\frac{1}{n}\le x\le n\}, S_n :=\inf\{t\ge0\mid\left(X_t,Y_t\right)\notin K_n\}, S :=\inf\{t\ge0\mid\left(X_t,Y_t\right)\notin \overline{D}_2\}.$$ Then $\lim\limits_{n\rightarrow\infty} S_n=S$ almost surely and $$\{S<\infty\}=\bigcup_{n=1}^{\infty}\{S_n=S\le n\}.$$ 
	We can still prove $\varphi(x,y)\ge v(x,y)$ on $D_2$, and therefore, $\varphi(x,y)= v(x,y)$ on $D_2$.
\end{proof}
\begin{remark}
	The condition $\theta\neq 1$ is necessary for the existence of $\delta_1$. If $\theta=1$, then $f\left(t\right)$ in Eq.~(\ref{ft}) becomes $$f\left(t\right)=-\frac{\left(\rho\alpha_1\sigma_2-\alpha_2\sigma_1\right)^2}{2\left(1-\rho^2\right)\left(1-p\right)\sigma_1^2\sigma_2^2},$$ and the subsequent proof is no longer valid. When $\theta=1$, the existence of $\delta_1$ depends on other parameters, which will be demonstrated in the numerical analysis section. In this paper, we assume $\theta\neq1$ and only consider the case of $\theta=1$ in the numerical analysis section.
\end{remark}

By referring to Theorem \ref{v=phi}, we observe $v=0$ on $\partial_2\Gamma$ when $0<p<1$. Therefore, we can directly deduce the following:
\begin{corollary}\label{feikong}
	Assuming $0<p<1$, $v$ is continuous on $\partial_2\Gamma$.
\end{corollary}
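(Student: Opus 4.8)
The plan is to read the continuity straight off the explicit representation of $v$ furnished by Theorem \ref{v=phi}. Recall that for $0<p<1$ that theorem produces a wedge $D_2$ abutting $\partial_2\Gamma$ on which $v(x,y)=\frac{B}{p}\left(x+\frac{y}{1-\lambda}\right)^p$, with $\frac{B}{p}\ge\frac{C_*}{p}>0$ (positivity of $C_*$ coming from Assumption \eqref{assum1}). The key geometric observation I would make is that $D_2$ together with $\partial_2\Gamma$ constitutes a one-sided relative neighborhood, inside $\overline\Lambda$, of every boundary point $(x_0,y_0)\in\partial_2\Gamma$ with $x_0>0$: near such a point the only active solvency constraint is $x+\frac{y}{1-\lambda}\ge 0$, since the other constraint $x+(1-\mu)y\ge 0$ holds strictly there, so any approach from within $\overline\Lambda$ is forced to be an approach through $\overline{D_2}$, where the closed-form expression is valid.

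First I would record the boundary value. By Proposition \ref{propo2}, any admissible policy starting on $\partial_2\Gamma$ is driven to $(0,0)$, where both consumption and liquid wealth vanish; since $U(c,x)=\frac{(c^{\theta}x^{1-\theta})^p}{p}=0$ whenever $c=0$ or $x=0$ (as $p>0$), this gives $v\equiv 0$ on $\partial_2\Gamma$. Equivalently, this same value is recovered by substituting $x+\frac{y}{1-\lambda}=0$ into the formula of Theorem \ref{v=phi}, which yields $\frac{B}{p}\cdot 0^p=0$.

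Next I would fix $(x_0,y_0)\in\partial_2\Gamma$ with $x_0>0$ and take any sequence $(x_n,y_n)\in\overline\Lambda$ with $(x_n,y_n)\to(x_0,y_0)$. For all large $n$ these points lie in $\overline{D_2}$, so $v(x_n,y_n)=\frac{B}{p}\left(x_n+\frac{y_n}{1-\lambda}\right)^p$. Because $0<p<1$, the map $s\mapsto s^p$ is continuous on $[0,\infty)$, and $x_n+\frac{y_n}{1-\lambda}\to x_0+\frac{y_0}{1-\lambda}=0$, whence $v(x_n,y_n)\to 0=v(x_0,y_0)$. This settles continuity at each $(x_0,y_0)\in\partial_2\Gamma$ with $x_0>0$. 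The remaining vertex $(0,0)$ also belongs to $\partial_1\Gamma$, so continuity there is already delivered by Property (3) of Proposition \ref{propo}; alternatively one checks directly that the $D_2$-formula forces $v\to 0$ along any approach through $\overline{D_2}$, while the lower bound in Property (4) pins $v\ge 0$, so a squeeze finishes the corner.

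I do not anticipate a genuine obstacle: the statement is labelled a corollary precisely because Theorem \ref{v=phi} has already performed the substantive analytic work (the verification argument establishing $\overline{\mathcal{L}}\varphi\ge 0$ on $D_2$ and the resulting identification of $v$ with $\varphi$). The only steps demanding a little care are bookkeeping in nature, namely confirming that $\overline{D_2}$ genuinely exhausts a one-sided neighborhood of $\partial_2\Gamma\setminus\{0\}$ within $\overline\Lambda$, so that no admissible approach escapes the region where the closed-form expression holds, and handling the corner $(0,0)$ separately by invoking Property (3).
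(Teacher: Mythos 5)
Your argument is correct and is essentially the paper's own: the paper deduces the corollary in exactly this way, combining Proposition \ref{propo2} (which pins every admissible policy started on $\partial_2\Gamma$ at $(0,0)$, so that $v=0$ there when $0<p<1$) with the explicit formula $v=\frac{B}{p}\left(x+\frac{y}{1-\lambda}\right)^p$ on the wedge $D_2$ from Theorem \ref{v=phi}, which tends to $0$ as the boundary is approached. Your additional bookkeeping (checking that $\overline{D_2}$ is a one-sided neighborhood of $\partial_2\Gamma\setminus\{(0,0)\}$ in $\overline{\Lambda}$, and treating the corner $(0,0)$ via Proposition \ref{propo}) merely makes explicit what the paper leaves implicit.
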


We will now demonstrate the continuity of $v$ on $\partial_2\Gamma$ in the case where $p<0$. As $v=-\infty$ on $\partial_2\Gamma$, we need to prove that $v$ has a limit of $-\infty$ at $\partial_2\Gamma$, which can be stated in the following proposition:

\begin{proposition}\label{propo4}
	Assuming $p<0$, we can conclude that $v$ has a limit of $-\infty$ on  $\partial_2\Gamma$. Consequently, we can also deduce that $v$ is continuous on $\partial_2\Gamma$.
\end{proposition}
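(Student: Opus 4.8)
The plan is to adapt the construction of Theorem \ref{v=phi} to the regime $p<0$ and produce a smooth supersolution on a wedge hugging $\partial_2\Gamma$ that furnishes an upper bound for $v$ blowing down to $-\infty$. Since Proposition \ref{propo2} already gives $v=-\infty$ on $\partial_2\Gamma$ itself, the whole content of the statement is the limit $\lim_{(x,y)\to\partial_2\Gamma}v(x,y)=-\infty$; once this is shown, continuity (in the extended-real sense) is immediate. Because a lower bound tending to $-\infty$ carries no information, everything hinges on producing an \emph{upper} bound $v\le\varphi$ with $\varphi\to-\infty$; the lower bound of Proposition \ref{propo}(4) will enter only internally, to pin down the constant in the comparison function.

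I would take the same wedge $D_2$ adjacent to $\partial_2\Gamma$ and the same candidate $\varphi(x,y)=\frac{B}{p}(x+\frac{y}{1-\lambda})^p$ as in Theorem \ref{v=phi}, with $B=\frac{p\,v(1,-(1-\lambda)\delta_2)}{(1-\delta_2)^p}$. Here $p<0$ forces $v<0$, hence $B>0$, $\frac{B}{p}<0$, and $\varphi$ is strictly concave with $\varphi_{xx}=B(p-1)(x+\frac{y}{1-\lambda})^{p-2}<0$; Proposition \ref{propo}(4) gives $\frac{B}{p}\ge\frac{C_*}{p}$, i.e. $0<B\le C_*$. The gradient constraints $-(1-\mu)\varphi_x+\varphi_y\ge0$ and $\varphi_x-(1-\lambda)\varphi_y\ge0$ hold on $D_2$ exactly as before, since $\varphi_y=\varphi_x/(1-\lambda)$ makes the second an equality and the first equal to $\varphi_x[\frac{1}{1-\lambda}-(1-\mu)]\ge0$. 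The decisive observation is that the computation \eqref{gt} of $\overline{\mathcal{L}}\varphi$ survives the sign change of $p$: in the step replacing $\frac{1-\theta p}{p}\theta^{\frac{\theta p}{1-\theta p}}B^{-\frac{1}{1-\theta p}}$ by $\frac{\beta-rp}{p}-\frac{\alpha_1^2}{2(1-p)\sigma_1^2}$ (the same expression with $C_*$ in place of $B$), for $p<0$ the bound $B\le C_*$ gives $B^{-\frac{1}{1-\theta p}}\ge C_*^{-\frac{1}{1-\theta p}}$ while the prefactor $\frac{1-\theta p}{p}$ is now negative, so the two reversals cancel and the inequality points the same way. The terminal limit $g(t)\to\frac12(1-\rho^2)(1-p)\sigma_2^2>0$ as $t\downarrow-1$ is independent of the sign of $p$, so for $\delta_2<1$ close enough to $1$ one again obtains $\overline{\mathcal{L}}\varphi\ge0$ on $D_2$, i.e. $\varphi$ is a supersolution there.

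Next I would run the verification. Applying \eqref{ito} to $\varphi$ along an arbitrary admissible policy, localized by $\kappa_m=\inf\{t:|\pi_t|>m\}$ and by the exit times from the truncated wedges $K_n$ of Theorem \ref{v=phi}, the supersolution property together with the gradient constraints (which make the continuous-trade integrals and the jump sum nonnegative) gives $\varphi(x,y)\ge\mathbb{E}[e^{-\beta\tau}\varphi(X_\tau,Y_\tau)]+\mathbb{E}[\int_0^\tau e^{-\beta t}U(c_t,X_t)\,dt]$. Passing to the limit $m,n\to\infty$, using $\varphi=v$ on the outer ray $\{t=-\delta_2\}$ and maximizing over policies through the dynamic programming principle (Proposition \ref{propo}(5)) with $\Omega=D_2$, yields $v\le\varphi$ on $D_2$. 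Since $\varphi(x,y)=\frac{B}{p}(x+\frac{y}{1-\lambda})^p\to-\infty$ as $N:=x+\frac{y}{1-\lambda}\to0^{+}$ (because $\frac{B}{p}<0$), and any sequence in $\Gamma$ converging to a point of $\partial_2\Gamma$ with positive $x$-coordinate eventually lies in $D_2$ with $N\to0^{+}$, this upper bound forces $v(x,y)\to-\infty$; as $v=-\infty$ on $\partial_2\Gamma$ by Proposition \ref{propo2}, continuity follows.

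The main obstacle is precisely the verification inequality $v\le\varphi$ for $p<0$, where the sign of $\varphi$ is reversed relative to Theorem \ref{v=phi}. For $0<p<1$ one has $\varphi\ge0$ and may simply discard the terminal term $\mathbb{E}[e^{-\beta\tau}\varphi(X_\tau,Y_\tau)]$ on the events where the process has not yet reached the favorable boundary; with $p<0$ we have $\varphi\le0$, so this discarding is invalid and the terminal term must be retained and controlled. Worse, $\varphi\to-\infty$ along the $\partial_2\Gamma$-edge of $\overline{D}_2$, so on any event where the controlled process exits $\overline{D}_2$ through $\partial_2\Gamma$ the terminal term is $-\infty$ and the inequality degenerates. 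The remedy is Proposition \ref{propo2}: driving $(X,Y)$ onto $\partial_2\Gamma$ forces the agent to $(0,0)$ and hence a continuation value $-\infty$, so such policies are dominated and the supremum is governed by policies that exit only through the outer ray $\{t=-\delta_2\}$ (where $\varphi=v$ is finite) or never exit (where the discount factor $e^{-\beta\tau}$ extinguishes the terminal contribution). Making the passage to the limit in $\mathbb{E}[e^{-\beta\tau}\varphi(X_\tau,Y_\tau)]$ rigorous therefore demands a careful monotone/Fatou argument tailored to the sign $-\varphi\ge0$, coordinated with the localizations $\kappa_m$ and $K_n$; this is the delicate, case-specific analysis that separates $p<0$ from $0<p<1$.
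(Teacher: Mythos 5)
Your supersolution computation for $p<0$ is correct: the observation that Proposition \ref{propo}(4) now gives $B\le C_*$, and that this reversal cancels against the sign change of the prefactor $\frac{1-\theta p}{p}$, so that the computation (\ref{gt}) still yields $\overline{\mathcal{L}}\varphi\ge 0$ near $\partial_2\Gamma$, is valid. The genuine gap is the verification inequality $v\le\varphi$ on $D_2$, which you correctly single out as the crux but then do not prove; and the remedies you sketch do not close it. The It\^{o} argument makes $Z_t:=e^{-\beta(t\wedge S)}\varphi\left(X_{t\wedge S},Y_{t\wedge S}\right)+\int_0^{t\wedge S}e^{-\beta s}U\,ds$ a \emph{local} supermartingale, but for $p<0$ this process is nonpositive and unbounded below, and a nonpositive local supermartingale need not be a supermartingale: Fatou applied to $-Z\ge0$ gives $\mathbb{E}\left[Z_t\right]\ge\limsup_k\mathbb{E}\left[Z_{\tau_k\wedge t}\right]$, which is the wrong direction (compare $Z=-M$ with $M$ a strict nonnegative local martingale, where $\mathbb{E}\left[Z_t\right]>Z_0$). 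Concretely, you need a lower bound on $\limsup_k\mathbb{E}\left[e^{-\beta\tau_k}\varphi\left(X_{\tau_k},Y_{\tau_k}\right)\right]$, i.e.\ uniform integrability of the negative parts over all admissible policies, and nothing you invoke supplies it. Your claim that on non-exiting paths ``the discount factor extinguishes the terminal contribution'' is unjustified: an admissible policy can drive $N_t:=X_t+\frac{Y_t}{1-\lambda}\to0$ while staying strictly inside $D_2$, and then $e^{-\beta t}N_t^{p}$ (with $p<0$) need not vanish; likewise $\varphi$ is unbounded below along the outer ray near the origin, so even exits through that ray are not automatically controlled. Proposition \ref{propo2} handles only policies that actually reach $\partial_2\Gamma$, not those that approach it. So the key inequality for $p<0$ remains unproven, and establishing it is essentially as hard as the proposition itself.

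The paper sidesteps exactly this obstacle and never runs a verification at negative exponent. Its proof uses the pointwise utility comparison $\frac{c^p}{p}\le\log c-\frac{1}{p}\le\frac{c^q}{q}-\frac{1}{q}-\frac{1}{p}$ (applied to $c^\theta x^{1-\theta}$) to get $v_p\le v_q-\frac{1}{\beta q}-\frac{1}{\beta p}$ for $0<q<1$; it then bounds $v_q$ on a wedge $D_2(q)$ by the explicit comparison function $\frac{D(q)}{q}\left(x+\frac{y}{1-\lambda}\right)^q$ via the Theorem \ref{v=phi} machinery, where the exponent is positive, everything is nonnegative, and discarding terminal terms is legitimate; finally it sends $q\downarrow0$ to obtain a logarithmic upper bound $\frac{1}{\beta}\left[\log\left(x+\frac{y}{1-\lambda}\right)+\mathrm{const}\right]-\frac{1}{\beta p}$, which tends to $-\infty$ at $\partial_2\Gamma$. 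If you wish to rescue your direct route, you would have to prove class-D/uniform-integrability control of $e^{-\beta\tau_k}\varphi\left(X_{\tau_k},Y_{\tau_k}\right)$ uniformly over admissible policies; the paper's reduction to positive exponents is designed precisely so that this estimate is never needed.
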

\begin{proof}
	Let us consider $0<p<1$ as a variable. We can define the increasing function $$G(p):=\frac{\beta-rp}{p}-\frac{\alpha_1^2}{2\left(1-p\right)\sigma_1^2},$$ which is positive for small $p>0$. Define $$\varphi:=\frac{D\left(p\right)}{p}\left(x+\frac{y}{1-\lambda}\right)^p,$$ where $$D\left(p\right) := \theta^{\theta p}\left[\frac{pG\left(p\right)}{1-\theta p}\right]^{\theta p-1}, \quad i.e.,\quad G\left(p\right)=-\frac{1-\theta p}{p}\theta^{\frac{\theta p}{1-\theta p}}D\left(p\right)^{-\frac{1}{1-\theta p}}.$$ Let $t:=\frac{y}{\left(1-\lambda\right)x}$, then 
	\begin{eqnarray}
		\overline{\mathcal{L}}\varphi &=& D\left(p\right)\left(x+\frac{y}{1-\lambda}\right)^p\frac{1}{\left(1+t\right)^2}g\left(t\right),\nonumber
	\end{eqnarray}
	where the function $g\left(t\right)$ is defined in Eq.~(\ref{ft}). From the proof of Theorem \ref{v=phi}, we can establish $v\leq \varphi$ on $D_2(p)$. Additionally, as $f(t)$ is monotonically increasing with respect to $p$, we can choose $D_2(p)$ decreasing with $p$.
	
	For $p<0$, let $0<q<\epsilon<1$ be chosen. If $(x,y)\in D_2(\epsilon)$, the inequality $$\frac{c^p}{p}\leq \log c -\frac{1}{p} \leq \frac{c^q}{q}-\frac{1}{q}-\frac{1}{p}$$ implies $$v_p\leq v_q-\frac{1}{\beta q} -\frac{1}{\beta p}\leq\frac{1}{q}\left[D(q)\left(x+\frac{y}{1-\lambda}\right)^q-\frac{1}{\beta}\right]-\frac{1}{\beta p}.$$
	As $(x,y)\longrightarrow (x_0,y_0)\in\partial_2\Gamma$, we can take $q\downarrow 0$ such that
	\begin{eqnarray}
		\lim\limits_{\left(x,y\right)\rightarrow\left(x_0,y_0\right)}v_p\left(x,y\right)&\le&\lim\limits_{\left(x,y\right)\rightarrow\left(x_0,y_0\right)}\lim\limits_{q\downarrow 0}\frac{1}{q}\left[D\left(q\right)\left(x+\frac{y}{1-\lambda}\right)^q-\frac{1}{\beta}\right]-\frac{1}{\beta p}\nonumber\\
		&=&\lim\limits_{\left(x,y\right)\rightarrow\left(x_0,y_0\right)}\frac{1}{\beta}\left[\log\left(x+\frac{y}{1-\lambda}\right)+\theta\log\beta+\frac{r-\beta}{\beta}+\frac{\alpha_1^2}{2\sigma_1^2\beta}\right]-\frac{1}{\beta p}\nonumber\\
		&=&-\infty.\nonumber
	\end{eqnarray}
	Then $v$ has a limit $-\infty$ on $\partial_2\Gamma$ and the proposition holds. 
\end{proof}

Specifically, if $p<0$,  then $v<0$ near $\partial_2\Gamma$ and we can obtain a stronger conclusion.
\begin{corollary}
	Assume $p<0$, then $v<0$ on $\overline{\Gamma}$.
\end{corollary}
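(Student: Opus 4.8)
The inequality $v\le 0$ on $\overline\Gamma$ is immediate: for $p<0$ and any admissible pair $(c,x)$ one has $U(c,x)=\frac1p\left(c^{\theta}x^{1-\theta}\right)^{p}\le 0$ (with the convention $U=-\infty$ when $c=0$ or $x=0$), so every payoff $\mathbb{E}\left[\int_0^{\infty}e^{-\beta t}U(c_t,X_t)\,dt\right]$ in \eqref{problem} is nonpositive, and hence so is the supremum. The entire content of the corollary is therefore the \emph{strict} inequality. The plan is to dominate $v$ from above by the value of a relaxed problem that can be computed explicitly and seen to be a strictly negative multiple of a power of wealth.

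The relaxation I would use is the frictionless market obtained by setting $\lambda=\mu=0$, i.e.\ allowing costless transfers between the liquid and illiquid positions. Any $(\pi,c,L,M)\in\mathcal{A}(x,y)$ produces paths $(c_t,X_t,Y_t)$; in the frictionless market the same triple remains attainable, since not paying the transaction costs only relaxes the solvency constraints, and it yields exactly the same payoff. Hence $v(x,y)\le V_0(x,y)$, where $V_0$ is the value of the frictionless problem. Because trading is now costless only the total wealth $x+y$ matters and the liquid wealth $X$ becomes a free control, so $V_0$ is a classical Merton problem with the Cobb--Douglas ``liquid wealth in the utility'' of \cite{kraft2019consumption}. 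By the homotheticity of Proposition \ref{propo}(1) we have $V_0(x,y)=\frac{K}{p}(x+y)^{p}$ for a constant $K$. Finiteness of $V_0$ for $p<0$ is guaranteed by Assumption \ref{ass}: it is bounded above by $0$ and below by the pure--liquid Merton value $\frac{C_*}{p}(x+y)^p>-\infty$ (convert $y$ to liquid at no cost and apply Proposition \ref{propo}(4)). Solving the associated Merton HJB explicitly, exactly as in \cite{Davis1990, kraft2019consumption}, gives $K>0$, whence $V_0(x,y)=\frac{K}{p}(x+y)^{p}<0$ whenever $x+y>0$.

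It then suffices to observe that $x+y\ge 0$ on $\overline\Gamma$: from $x\ge 0$, $x+\frac{y}{1-\lambda}\ge 0$ and $\frac{1}{1-\lambda}\ge 1$ one gets $x+y\ge x+\frac{y}{1-\lambda}\ge 0$ when $y<0$, and trivially $x+y\ge 0$ when $y\ge 0$. Consequently $v(x,y)\le \frac{K}{p}(x+y)^{p}$, which is a finite strictly negative number when $x+y>0$ and equals $-\infty$ when $x+y=0$; in both cases $v(x,y)<0$. This covers every point of $\overline\Gamma$, including $\partial_1\Gamma$ and $\partial_2\Gamma$, and is consistent with Proposition \ref{propo4} (on $\partial_2\Gamma$ we already know $v=-\infty<0$, and at the origin the only admissible policy freezes at $(0,0)$ with $c\equiv 0$, forcing $v=-\infty$).

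The main obstacle is the second step. Making the domination $v\le V_0$ fully rigorous requires care precisely because the liquid wealth $X_t$, which enters the utility, is a constrained portfolio value in the original problem but a free control in the relaxed one; one must check that removing the transaction costs genuinely preserves attainability of the pair $(c_t,X_t)$ while only enlarging the solvency region. One must also verify the sign and well-posedness of the Merton constant $K$, i.e.\ that the explicit frictionless solution is finite and satisfies $K>0$ under \eqref{assum1}--\eqref{assum2} for $p<0$. Once $K>0$ is in hand, strict negativity is automatic from $p<0$.
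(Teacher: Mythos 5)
Your skeleton (for $p<0$ one has $v\le 0$ for free, so the whole point is a strictly negative upper bound) is sound and genuinely different from the paper's argument, but as written there is a real gap, and it sits exactly where you placed your caveat. Homotheticity plus the sandwich $\frac{C_*}{p}(x+y)^p\le V_0\le 0$ only yield $V_0(x,y)=\frac{K}{p}(x+y)^p$ with $K\ge 0$ finite; the entire content of the corollary is ruling out $K=0$, i.e.\ strict negativity of a value function under $p<0$, and your proposal defers precisely this to ``solving the Merton HJB explicitly, exactly as in \cite{Davis1990,kraft2019consumption}''. Neither citation covers your auxiliary problem: the benchmark in \cite{Davis1990} behind \eqref{assum1} has no liquidity preference, and \cite{kraft2019consumption} treats a single risky asset, whereas your $V_0$ has two correlated risky assets and a utility evaluated at the endogenous non-illiquid part of wealth. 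The deeper issue is directional: the paper's Proposition \ref{propo}(4) can cite \cite{Davis1990} because a \emph{lower} bound only requires evaluating one candidate strategy; your bound $v\le V_0=\frac{K}{p}(x+y)^p$ requires the \emph{upper}-bound (verification) half for $p<0$ --- an It\^o/supersolution and transversality argument of the same order of difficulty as the paper's Theorem \ref{v=phi} --- which cannot be dispatched by citation. Your first flagged obstacle is also genuine but minor and fixable: the ``same triple $(c_t,X_t,Y_t)$'' is literally \emph{not} attainable frictionlessly, since matching both the $X$- and $Y$-dynamics forces $\lambda\,dL_t+\mu\,dM_t\equiv 0$; the correct statement is a domination --- replicate $Y$ via frictionless trades $d\tilde{L}_t=(1-\lambda)\,dL_t$, $d\tilde{M}_t=dM_t$, note that $\tilde{X}-X$ solves a linear SDE with nonnegative forcing $\lambda\,dL_t+\mu\,dM_t$, hence $\tilde{X}_t\ge X_t$, and use that $U(c,x)$ is increasing in $x$.

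For contrast, the paper's proof needs no auxiliary control problem at all: Proposition \ref{propo4} already gives $v<0$ at some point (indeed $v\to-\infty$ at $\partial_2\Gamma$); homotheticity (Proposition \ref{propo}(1)) then makes $v<0$ along the entire ray through that point and the origin; and any $Q\in\overline{\Gamma}$ lies in the cone of points reachable by admissible purchases and sales of $S^2$ from a point $P$ sufficiently far out on that ray, whence $v(Q)\le v(P)<0$. If you want to keep your architecture with minimal repair, drop the frictionless problem entirely: you do not need an explicit negative upper bound, only one strictly negative value of $v$ itself, and Proposition \ref{propo4} supplies it; the local-to-global step is then the homotheticity-plus-reachability argument above.
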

\begin{proof}
	$v$ is negative at some point, then homotheticity implies that $v$ is negative on a ray passing through $(0,0)$. From this ray, we can reach every point in $\overline{\Gamma}$ by buying and selling the illiquid risky asset $S^2$, thereby ensuring $v<0$ on $\overline{\Gamma}$.
\end{proof}

\section{\bf Optimal policy}
To determine the optimal portfolio with transaction costs, we typically first identify the selling region, buying region, and no-trading region. In this context, the region $\Gamma$ is divided into three wedges $SR$, $NT$, and $BR$ by two rays passing through the origin based on the value function $v$. It can be shown that all three regions are non-empty. The wedge $SR$ represents the selling region, $NT$ represents the no-trading region, and $BR$ represents the buying region for the illiquid risky asset $S^2$.

The value function $v$ satisfies Proposition \ref{propo}, which includes homotheticity, continuity, and convexity. As a result, the conclusions of Chapter 6 in \cite{Sherve1994} are established. However, as the HJB equation (\ref{fxxhjb}) is not linear, improving the second-order smoothness using the viscosity solution theory is particularly challenging. Fortunately, \cite{hobson2019multi} established an important result when $\theta=1$ by studying a boundary value problem for a first-order differential equation. This result in \cite{hobson2019multi} is actually true for any $\theta$, and we can modify the form of $U$ in this literature to apply it. Following \cite{hobson2019multi}, we have the following theorem.
\begin{theorem}
	The value function $v$ is the unique solution of the HJB equation (\ref{hjb}).\label{C2}
\end{theorem}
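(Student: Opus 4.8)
The plan is to prove the theorem in two stages — first that $v$ solves the HJB equation~(\ref{hjb}), and then that the solution is unique — and throughout to exploit the homotheticity of Proposition~\ref{propo}(1) to collapse the two-dimensional free-boundary problem into a one-dimensional boundary value problem, following the strategy of \cite{hobson2019multi}. The conclusions of Chapter~6 of \cite{Sherve1994} already give the cone structure ($SR$, $NT$, $BR$), so the remaining content is to produce a genuine solution and establish uniqueness despite the nonlinearity of~(\ref{fxxhjb}), which blocks a direct second-order viscosity-comparison argument.

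First I would use the scaling $v(mx,my)=m^p v(x,y)$ to reduce $v$ to a function of a single scalar variable. Since $\Gamma\subset\{x>0\}$, I set $z:=y/x\in(-(1-\lambda),\infty)$ and write $v(x,y)=\frac{1}{p}\,x^{p}\,\Psi(z)$, so that the three components of~(\ref{hjb}) translate into conditions on $\Psi$: (i) the nonlinear operator $\overline{\mathcal{L}}\varphi=0$ of~(\ref{fxxhjb}) becomes a second-order ODE for $\Psi$ on the open interval $(z_-,z_+)$ corresponding to the no-trading wedge $NT$; and (ii) the gradient constraints $-(1-\mu)\varphi_x+\varphi_y=0$ and $\varphi_x-(1-\lambda)\varphi_y=0$ become value-matching and smooth-fit conditions for $\Psi$ at the two free endpoints $z_-$ (buying boundary) and $z_+$ (selling boundary). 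The explicit forms of $v$ near $\partial_1\Gamma$ and $\partial_2\Gamma$ from Theorem~\ref{v=phi}, together with the $-\infty$ limit of Proposition~\ref{propo4} when $p<0$, supply the data that pin down the unknown locations $z_\pm$.

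The key device, borrowed from \cite{hobson2019multi}, is to differentiate once and introduce a new dependent variable (essentially the ratio $\varphi_y/\varphi_x$, i.e. the locally optimal holdings), converting the second-order ODE for $\Psi$ into a first-order ODE whose existence and uniqueness under the free-boundary conditions can be analyzed directly, bypassing the delicate second-order machinery. Since \cite{hobson2019multi} treats only $\theta=1$, the step I must redo is to feed in our Cobb–Douglas utility: after the pointwise minimization over $c$ that produced $\tilde{U}(x,\varphi_x)$ and the explicit minimizer $\pi_*$, the resulting operator remains positively homogeneous of the correct degree, so the reduction is structurally identical with $U$ replaced by $\tilde{U}$. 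I would then verify that the monotonicity and regularity hypotheses underlying the first-order boundary value problem survive this replacement for every $\theta\in(0,1)$ and for both signs of $p$.

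Finally, with a smooth candidate $\varphi$ built from the unique BVP solution, I would run a verification argument to identify $\varphi$ with $v$. Applying It\^{o}'s formula~(\ref{ito}) along an arbitrary admissible $(\pi,c,L,M)$ and using $\overline{\mathcal{L}}\varphi\ge0$ together with the two gradient inequalities shows that $e^{-\beta t}\varphi(X_t,Y_t)+\int_0^t e^{-\beta s}U(c_s,X_s)\,ds$ is a supermartingale, whence $\varphi\ge v$; running the same computation along the reflecting policy that keeps $(X,Y)$ inside $\overline{NT}$ with the feedback controls $\pi_*$ and $c_*$ turns the inequalities into equalities and yields $\varphi\le v$, exactly as in the localization with stopping times $T_n\wedge\kappa_m\wedge n$ already used in Theorem~\ref{v=phi}. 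Uniqueness follows because any solution of~(\ref{hjb}) with the same homotheticity and boundary behavior reduces to a solution of the same first-order BVP, which is unique. The main obstacle I anticipate is precisely this transplantation step: checking that the first-order reformulation of \cite{hobson2019multi} stays well posed once the consumption term is the liquidity-weighted $\tilde{U}$ rather than pure power utility, together with the careful treatment of the $p<0$ regime, where $v\to-\infty$ on $\partial_2\Gamma$ and the integrability of the supermartingale terms must be controlled.
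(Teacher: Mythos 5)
Your plan is essentially the paper's own proof: the paper simply defers to Theorems 4.1 and 4.3 of \cite{hobson2019multi}, asserting that the change of utility does not affect the argument, and your proposal reconstructs exactly what that deferral entails --- homothetic reduction to a one-variable problem, the first-order ODE/free-boundary reformulation, a verification argument, and uniqueness through the first-order BVP, with the Cobb--Douglas replacement of $U$ by $\tilde U$ flagged as the only step needing re-checking. The only (cosmetic) difference is your choice of scaling variable $z=y/x$ versus the paper's bounded variable $z=y/(x+y)$ used in Eq.~(\ref{newhjb}).
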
 
\begin{proof}
	The proof is similar to Theorem 4.1 and Theorem 4.3 in \cite{hobson2019multi} and the update of the utility function does not make the proof more difficult. 
	
\end{proof}

Theorem \ref{C2} implies that $v$ is twice continuously differentiable, i.e., $v\in \mathcal{C}^2$. Hence, the main results in Section6 Convex analysis of the value function in \cite{Sherve1994} can be summarized and presented in the following theorem without providing a proof.

\begin{theorem}\label{convexanalysis}
	The value function associated with Problem \eqref{problem} exhibits the following properties.
	\item[(1)]
	\begin{eqnarray}  v_x>0,\;v_y>0,\;
		\forall\left(x,y\right)\in\Gamma\nonumber.
	\end{eqnarray}
	\item[(2)]$\Gamma$ can be partitioned into three wedges, SR, BR, and NT (which may be an empty set), by two rays passing through $(0,0)$, where
	\begin{eqnarray}
		&&-\left(1-\mu\right)v_x+v_y=0, \;\forall\left(x,y\right)\in SR,\nonumber\\ &&v_x-\left(1-\lambda\right)v_y=0,\;\forall\left(x,y\right)\in BR,\nonumber\\
		&&-\left(1-\mu\right)v_x+v_y>0, \;v_x-\left(1-\lambda\right)v_y>0,\overline{\mathcal{L}}v=0,\;
		\forall\left(x,y\right)\in NT.	\nonumber
	\end{eqnarray}
	\item[(3)] There exist constants $a>0$ and $b>0$ such that
	\begin{eqnarray}
		v\left(x,y\right)&=&\frac{a}{p}\left(x+\left(1-\mu\right)y\right)^p,\quad\forall\left(x,y\right)\in SR,\nonumber\\
		v\left(x,y\right)&=&\frac{a}{p}\left(x+\frac{y}{1-\lambda}\right)^p,\quad\forall\left(x,y\right)\in BR.\nonumber
	\end{eqnarray}
\end{theorem}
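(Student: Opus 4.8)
The plan is to exploit the regularity $v\in\mathcal{C}^2$ granted by Theorem \ref{C2}, together with the homotheticity and concavity from Proposition \ref{propo} and the fact that $v$ solves the HJB equation \eqref{hjb}, and to follow the convex-analytic program of Chapter 6 in \cite{Sherve1994}, adapted to the Cobb--Douglas utility $U(c,x)$.

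For Part (1) I would first fix the sign of the gradient using the two gradient constraints in \eqref{hjb}. Since $v$ solves the HJB equation, each of its three terms is nonnegative, so everywhere on $\Gamma$ we have $v_y\ge(1-\mu)v_x$ and $v_x\ge(1-\lambda)v_y$. Chaining these gives $v_x\ge(1-\lambda)v_y\ge(1-\lambda)(1-\mu)v_x$, and because $(1-\lambda)(1-\mu)<1$ the factor $1-(1-\lambda)(1-\mu)$ is positive, forcing $v_x\ge 0$; then $v_y\ge(1-\mu)v_x\ge 0$. Strict positivity follows from the observation that an agent with marginally more liquid (resp.\ illiquid) wealth can reproduce any admissible policy and consume the surplus at strictly positive marginal utility, so that a zero interior derivative would violate the strict monotonicity already visible in the lower bound of Property (4); with $v\in\mathcal{C}^2$ this excludes $v_x=0$ or $v_y=0$ in the interior of $\Gamma$.

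For Part (2) the cone structure is immediate from homotheticity: differentiating $v(mx,my)=m^pv(x,y)$ shows that $v_x$ and $v_y$ are homogeneous of degree $p-1$, hence each of the three defining relations is homogeneous and the sets $SR$, $NT$, $BR$ are invariant under scaling, i.e.\ unions of rays through $(0,0)$. It is then natural to track the ratio $m(x,y):=v_y/v_x$, which by homogeneity is constant along each ray and therefore a function of the angular variable alone; the regions correspond respectively to $m=1-\mu$, to $1-\mu<m<1/(1-\lambda)$, and to $m=1/(1-\lambda)$. The main obstacle is to show that these are genuinely ordered wedges separated by at most two rays rather than disconnected or interleaved sets. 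I would establish this by proving that $m$ is monotone in the angular variable, which is exactly where the concavity of $v$ must be used decisively, through the negative semidefiniteness of the Hessian and, inside $NT$, through the nonlinear equation $\overline{\mathcal{L}}v=0$. Monotonicity renders each of the three sets a connected arc and, since $1/(1-\lambda)>1-\mu$, forces the order $BR$, $NT$, $SR$ as $y/x$ increases, the two separating rays being the angles where $m$ leaves the value $1/(1-\lambda)$ and where it reaches $1-\mu$ (these may coincide, so $NT$ can be empty). The orientation and the non-degeneracy of the two extreme wedges are anchored by Theorem \ref{v=phi}: the wedge $D_1$ near $\partial_1\Gamma$ carries the form $\tfrac{A}{p}(x+(1-\mu)y)^p$ and so lies in $SR$, while $D_2$ near $\partial_2\Gamma$ carries $\tfrac{B}{p}(x+\tfrac{y}{1-\lambda})^p$ and so lies in $BR$.

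For Part (3) I would read the explicit form off the active constraint. In $SR$ the identity $-(1-\mu)v_x+v_y=0$ states that the directional derivative of $v$ in the selling direction $(1-\mu,-1)$ vanishes; since the quantity $x+(1-\mu)y$ is invariant along this direction, $v$ depends only on $x+(1-\mu)y$ there, and homogeneity of degree $p$ upgrades this to $v(x,y)=\tfrac{a}{p}(x+(1-\mu)y)^p$ for a constant $a$. Symmetrically, in $BR$ the identity $v_x-(1-\lambda)v_y=0$ makes $v$ constant along the buying direction $(-1,1-\lambda)$, along which $x+\tfrac{y}{1-\lambda}$ is invariant, yielding $v(x,y)=\tfrac{b}{p}(x+\tfrac{y}{1-\lambda})^p$. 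Positivity of $a$ and $b$ follows from $v_x>0$ of Part (1), since $v_x=a(x+(1-\mu)y)^{p-1}$ on $SR$ and $v_x=b(x+\tfrac{y}{1-\lambda})^{p-1}$ on $BR$ with positive power factors, valid for both signs of $p$. I expect the monotonicity of $m$ via concavity in Part (2) to be the genuinely delicate step, precisely as in \cite{Sherve1994}; Parts (1) and (3) are then short consequences.
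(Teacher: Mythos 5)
Your proposal is correct and takes essentially the same route as the paper: the paper gives no proof of this theorem at all, remarking only that once $v\in\mathcal{C}^2$ is established (Theorem \ref{C2}), the statements are those of Section 6 of \cite{Sherve1994} — precisely the convex-analytic program you reconstruct (gradient signs from the HJB constraints plus strict monotonicity and concavity, cone structure from homotheticity, wedge ordering from concavity, explicit power form from the active constraint plus homogeneity). If anything, your sketch supplies more detail than the paper does; in particular, the one step you defer — monotonicity of $v_y/v_x$ in the angular variable — is indeed true and follows from exactly the tool you name, since for the homogeneous concave function $v$ the nonnegativity of the Hessian determinant reduces, in the variable $z=y/(x+y)$, to $p\,u\,u''+(1-p)(u')^2\le 0$, which is the sign of the angular derivative of $v_y/v_x$.
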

These three regions represent the different cases in which the minimum is taken in the HJB equation. Additionally, they also have a more practical interpretation. The form of the value function $v$ in Theorem \ref{convexanalysis}-(3) implies that the agent will immediately sell the illiquid risky asset $S^2$ to reach $\overline{NT}$ in $SR$, and immediately sell the illiquid risky asset $S^2$ to reach $\overline{NT}$ in $BR$, which is why $SR$ is known as the selling region and $BR$ is known as the buying region. The natural question arises whether all three regions exist. Theorem \ref{v=phi} shows that $SR$ and $BR$ are non-empty when $0<p<1$. To discuss further results, we first transform the value function $v(x,y)$ into a univariate function.

To simplify the analysis, we can express the value function $v(x,y)$ as a function of a single variable $z:=\frac{y}{x+y}$, where $z$ represents the proportion of wealth invested in the illiquid risky asset $S^2$. We define the interval $I:=\left(-\frac{1-\lambda}{\lambda},1\right)$ and
$$u\left(z\right) = v\left(1-z,z\right),\forall z\in \overline{I}\quad i.e.,\quad v\left(x,y\right)=\left(x+y\right)^pu\left(\frac{y}{x+y}\right), \forall\left(x,y\right)\in\overline{\Gamma}\backslash\{\left(0,0\right)\}.$$

Because $u(z) = v(1-z,z)$, the function $u$ inherits concavity from $v$. Furthermore, $u\in\mathcal{C}^2$ and $u$ is the unique classical solution of the following second-order differential equation.

\begin{eqnarray}
	\!\!\!\!\min\!\!\!\!&&\left\{\beta u-p\left[r+\alpha_2z-\frac{\sigma_2^2}{2}\left(1-p\right)z^2\right]u-\left[\alpha_2z\left(1-z\right)-\sigma_2^2\left(1-p\right)z^2\left(1-z\right)\right]u'\right.\nonumber\\
	&&-\frac{1}{2}\sigma_2^2z^2\left(1-z\right)^2u''+\frac{1-\theta p}{p}\theta^{\frac{\theta p}{1-\theta p}}\left(1-z\right)^\frac{\left(1-\theta\right)p}{1-\theta p}\left(pu-zu'\right)^\frac{\theta p}{\theta p-1}\nonumber\\
	&&-\max_{\pi}\Big\{\frac{\pi^2}{2}\sigma_1^2\left[-p\left(1-p\right)\left(1-z\right)^2u+2\left(1-p\right)\left(1-z\right)^2zu'+z^2\left(1-z\right)^2u''\right]\nonumber\\
	&&+\rho\sigma_1\sigma_2\pi\left[-p\left(1-p\right)z\left(1-z\right)u+\left(1-p\right)z\left(1-z\right)\left(2z-1\right)u'-z^2\left(1-z\right)^2u''\right]\nonumber\\
	&&+\alpha_1\pi\left[p\left(1-z\right)u-z\left(1-z\right)u'\right]\Big\}, pu+\frac{1}{\mu}\left(1-\mu z\right)u',\left.pu-\frac{1}{\lambda}\left(1-\lambda\left(1-z\right)\right)u'\right\}\nonumber\\&=&0.\label{newhjb}
\end{eqnarray}
\begin{remark}
	In the HJB equation (\ref{newhjb}), we can express the term $$\frac{1-\theta p}{p}\theta^{\frac{\theta p}{1-\theta p}}\left(1-z\right)^\frac{\left(1-\theta\right)p}{1-\theta p}\left(pu-zu'\right)^\frac{\theta p}{\theta p-1}$$   as
	$$ \min_{d}\left\{\left(pu-zu^{\prime}\right)-\frac{\left(d^{\theta}z^{1-\theta}\right)^p}{p}\right\},$$
	where the minimum point is attained at $d^*=\frac{c}{x+y}$. This expression is more suitable for numerical computation, and we will use it in Section \ref{numans}.
\end{remark}
This paper introduces a liquid risk asset and incorporates liquidity preference. Although we limit the solvency region $\Lambda$ to $\Gamma$ with liquidity preference, all the three wedges $NT$, $SR$, and $BR$ are non-empty.
\begin{proposition}\label{sanquyufeikong}
	Whether $p<0$ or $0<p<1$, all the three wedges are non-empty.
	\item[(1)] $NT\neq \emptyset$. 
	\item[(2)] $SR\neq \emptyset$.
	\item[(3)] $BR\neq \emptyset$.	
\end{proposition}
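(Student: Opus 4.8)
The plan is to work throughout with the single-variable reduction $u(z)=v(1-z,z)$ on $\overline{I}=[-\tfrac{1-\lambda}{\lambda},1]$, using that $v\in\mathcal{C}^2$ (Theorem~\ref{C2}) so that $u\in\mathcal{C}^2(\overline{I})$ is concave and solves \eqref{newhjb}. Under this reduction the three wedges become subintervals of $\overline{I}$: writing $q_s(z):=pu+\tfrac1\mu(1-\mu z)u'$ and $q_b(z):=pu-\tfrac1\lambda(1-\lambda(1-z))u'$, a direct computation with $z=\tfrac{y}{x+y}$ gives $-(1-\mu)v_x+v_y=\mu(x+y)^{p-1}q_s$ and $v_x-(1-\lambda)v_y=\lambda(x+y)^{p-1}q_b$. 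Hence $SR=\{q_s=0\}$ sits at the top, near $z=1$ (i.e. $\partial_1\Gamma$), $BR=\{q_b=0\}$ sits at the bottom, near $z=-\tfrac{1-\lambda}{\lambda}$ (i.e. $\partial_2\Gamma$), and $NT=\{\overline{\mathcal{L}}v=0\}$ lies in between; the explicit forms of Theorem~\ref{convexanalysis}(3) read $u=\tfrac{a}{p}(1-\mu z)^p$ on $SR$ and $u=\tfrac{b}{p}(1+\tfrac{\lambda z}{1-\lambda})^p$ on $BR$. I would first prove that $SR$ and $BR$ are non-empty, and then deduce $NT\neq\emptyset$ from a smooth-pasting obstruction.

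For $0<p<1$ the non-emptiness of $SR$ and $BR$ is exactly Theorem~\ref{v=phi}: the wedges there satisfy $D_1\subset SR$ and $D_2\subset BR$. For $p<0$ this must be redone, and I would adapt the comparison scheme of Theorem~\ref{v=phi}: near $\partial_1\Gamma$ take $\varphi=\tfrac{A}{p}(x+(1-\mu)y)^p$ and near $\partial_2\Gamma$ take $\psi=\tfrac{B}{p}(x+\tfrac{y}{1-\lambda})^p$, each pinned to coincide with $v$ on the inner ray of a thin wedge ($D_1$, $D_2$). The algebraic identities \eqref{ft} and \eqref{gt} for $\overline{\mathcal{L}}\varphi$ and $\overline{\mathcal{L}}\psi$ carry over, and the limits $f(t)\to(\text{positive, by }\eqref{assum2})$ as $t\uparrow\infty$ and $g(t)\to\tfrac12(1-\rho^2)(1-p)\sigma_2^2>0$ as $t\downarrow-1$ are unchanged; so after choosing the wedges thin we again obtain $\overline{\mathcal{L}}\varphi\ge0$ on $D_1$ and $\overline{\mathcal{L}}\psi\ge0$ on $D_2$, provided the constants are fixed consistently (note that \eqref{a>c} reverses to $A\le C_*$ when $p<0$, so the step bounding the liquidity term must be redone in the correct direction). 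With the two obstacle inequalities also in force on the respective wedges, the It\^o identity \eqref{ito} together with the dynamic-programming principle (Proposition~\ref{propo}(5)) yields $\varphi\ge v$ on $D_1$ and $\psi\ge v$ on $D_2$; combined with the inner-ray normalisation this forces $\varphi=v$ and $\psi=v$ there, so $D_1\subset SR$ and $D_2\subset BR$, proving (2) and (3).

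Granting that $SR$ and $BR$ are non-empty, I would prove (1) by contradiction. If $NT=\emptyset$ the two dividing rays coincide, so $\overline{SR}$ and $\overline{BR}$ meet along a single ray $R$. On $\overline{SR}$ the form of Theorem~\ref{convexanalysis}(3) gives $\nabla v\parallel(1,1-\mu)$, while on $\overline{BR}$ it gives $\nabla v\parallel(1,\tfrac1{1-\lambda})$, and both gradients are genuine since $v_x,v_y>0$ (Theorem~\ref{convexanalysis}(1)). But $v\in\mathcal{C}^1(\Gamma)$ (Theorem~\ref{C2}), so $\nabla v$ is continuous across $R$, which would require $(1,1-\mu)\parallel(1,\tfrac1{1-\lambda})$, i.e. $(1-\mu)(1-\lambda)=1$; this is impossible once transaction costs are present, since then $(1-\mu)(1-\lambda)<1$. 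Hence $NT\neq\emptyset$. The degenerate possibilities that $\Gamma$ is entirely $SR$ or entirely $BR$ are excluded the same way, or directly via the boundary values $v=0$ (Corollary~\ref{feikong}) and $v=-\infty$ (Proposition~\ref{propo4}) on $\partial_2\Gamma$, which are incompatible with the $SR$-form remaining valid up to $\partial_2\Gamma$.

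The main obstacle is the $p<0$ step for $SR$ and, above all, for $BR$. There both $v$ and the running utility $U$ are unbounded below and $v=-\infty$ on $\partial_2\Gamma$, so the verification inequality $\psi\ge v$ on $D_2$ cannot be obtained by the bounded-integrand estimates that suffice when $0<p<1$. The plan is to localise with stopping times (trapezoids $K_n$ as in Theorem~\ref{v=phi}) that keep the state strictly inside $\overline{D_2}$ and away from $\partial_2\Gamma$, to control $\mathbb{E}\big[\int e^{-\beta t}U\,dt\big]$ and the boundary term $\mathbb{E}\big[e^{-\beta S}v(X_S,Y_S)\big]$ using the exact $-\infty$ blow-up rate pinned down in Proposition~\ref{propo4} together with the lower bound of Proposition~\ref{propo}(4), and only then pass to the limit. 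Keeping track of the reversed inequalities forced by $p<0$ (such as \eqref{a>c} becoming $A\le C_*$, which affects the sign in the liquidity-term estimate of $\overline{\mathcal{L}}\varphi$ and $\overline{\mathcal{L}}\psi$) throughout these estimates is the delicate part of the argument.
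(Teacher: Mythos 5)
Your part (1) and your treatment of $0<p<1$ do match the paper: there too, $SR,BR\neq\emptyset$ for $0<p<1$ is read off Theorem \ref{v=phi}, and $NT\neq\emptyset$ follows from the same smooth-pasting contradiction you give (on a common boundary ray both gradient identities $-(1-\mu)v_x+v_y=0$ and $v_x-(1-\lambda)v_y=0$ would hold, forcing $v_x=v_y=0$ since $(1-\mu)(1-\lambda)<1$, against Theorem \ref{convexanalysis}-(1)). The divergence is the case $p<0$ in (2) and (3), and this is exactly where the paper abandons the comparison scheme of Theorem \ref{v=phi} and argues through the one-dimensional equation \eqref{newhjb} instead: for $SR$, when $p<0$ the consumption term carries the factor $(1-z)^{\frac{(1-\theta)p}{1-\theta p}}\to+\infty$ as $z\to1$, so if $SR=\emptyset$ the first branch of \eqref{newhjb} forces $v_x(1-z,z)=pu(z)-zu'(z)\to0$, which is contradicted by a direct homotheticity/admissible-jump bound giving $\lim_{z\to1}v_x(1-z,z)=(1-\lambda)p\,v(0,1)>0$ (note $p<0$, $v<0$); for $BR$, if $BR=\emptyset$ the $NT$ equation near $z=-\frac{1-\lambda}{\lambda}$ is solved for $u''=F(z,u,u')$ with $F$ Lipschitz (using that $pu-zu'=v_x\to+\infty$ there keeps the consumption term Lipschitz), so $u$ would extend with a finite limit to the endpoint, contradicting $v=-\infty$ on $\partial_2\Gamma$ (Proposition \ref{propo4}).

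Your replacement for these two arguments---rerunning the verification of Theorem \ref{v=phi} for $p<0$---has a genuine gap which your last paragraph gestures at but does not fill. The supersolution part does survive, though not for the reason you state: for $p<0$ the inequality \eqref{assum2} divides through by a negative $p$, so the finite constant that bounds $f$ in \eqref{ft} for $0<p<1$ becomes negative; what saves the construction is that the exponent $-\frac{(1-\theta)p}{1-\theta p}$ turns positive and $\frac{\beta-rp}{p}-\frac{\alpha_1^2}{2(1-p)\sigma_1^2}<0$, whence $f(t)\to+\infty$ rather than to the assum2-constant. The real break is the comparison step $\varphi\ge v$ (resp.\ $\psi\ge v$). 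For $0<p<1$ it closes because $\varphi$, $v$ and $U$ are nonnegative, so the stopped terms $\mathbb{E}\bigl[e^{-\beta\tau}\varphi(X_\tau,Y_\tau)\bigr]$ on the events where the process has not yet left the wedge through a ray can simply be dropped from the right-hand side of the It\^o inequality. For $p<0$ these terms are negative, and a negative term cannot be dropped from the right of a ``$\ge$''; worse, on the cut boundaries of your localizing trapezoids the only known comparison is $\varphi\le v$ (that inequality comes from the admissible jump to the inner ray, i.e.\ it points the wrong way), so matching the stopped It\^o identity against the dynamic programming principle there is circular. Nor do the trapezoids $K_n=\{1/n\le x\le n\}$ keep the state away from $\partial_2\Gamma$: on every $K_n$ the quantity $x+\frac{y}{1-\lambda}$ still has infimum $0$, so $\psi$ is unbounded below on $K_n$; and for an arbitrary admissible policy there is no uniform control of $\mathbb{E}\bigl[(X_\tau+\frac{Y_\tau}{1-\lambda})^p\bigr]$, $p<0$, that would make these boundary terms vanish in the limit. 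Proposition \ref{propo4} does not rescue this either: it pins down only a logarithmic upper bound for $v$ near $\partial_2\Gamma$, while the lower bound of Proposition \ref{propo}-(4) blows down at a power rate, so no ``exact rate'' is available to squeeze the boundary terms. Supplying a transversality argument valid for all admissible policies (or restricting to policies with finite payoff and exploiting finiteness of the utility integral) is the missing idea; the paper's ODE-type arguments above are precisely the device it uses to avoid having to do this.
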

\begin{proof}
	
	\item[(1)]
	When $p<0$, we first claim that $SR\neq \Gamma$ and $BR\neq\Gamma$. 
	
	If $SR = \Gamma$, according to continuity of $v$ and Theorem \ref{convexanalysis}-(3), $v$ must be finite on the $\partial_2\Gamma$, which is a contradiction. On the other hand, if $BR=\Gamma$, we can always buy illiquid risky asset $S^2$ in such a way that liquid wealth $X$ becomes zero. In this case, the value function $v$ would not change, and it would be infinite on $\partial_1\Gamma$, which is also a contradiction.
	
	Proposition \ref{feikong} demonstrates that $SR$ and $BR$ are both non-empty when $0<p<1$.
	
	Whenever $p<0$ or $0<p<1$, if $NT = \emptyset$, then $SR$ and $BR$ must share a common boundary $H$ in $\Gamma$. However, according to the definition of $SR$ and $BR$, we know $v_x=v_y=0$ on $H$, which contradicts Theorem \ref{convexanalysis}-(1).
	
	\item[(2)] We only need to consider the case when $p<0$. It is worth noting that $u$ is finite at $z=1$, and we have
	$$ (1-z)u^{\prime}\left(z\right)\rightarrow 0,\quad (1-z)^2u^{\prime\prime}\left(z\right)\rightarrow 0, \quad as\quad z\rightarrow 1. $$
	If $SR= \emptyset$, then as $z\rightarrow 1 $, Eq.~(\ref{newhjb}) implies $$\lim\limits_{z\rightarrow 1}v_x\left(1-z,z\right)=\lim\limits_{z\rightarrow 1}pu\left(z\right)-zu^{\prime}\left(z\right)=0.$$
	However, this contradicts
	\begin{eqnarray}
		v_x\left(1-z,z\right)&=&\lim\limits_{h\downarrow 0}\frac{1}{h}\left[v\left(1-z+h,z\right)-v\left(1-z,z\right)\right]\nonumber
		\\ &=&\lim\limits_{h\downarrow 0}\frac{1}{h}\left[\left(\frac{h}{1-z+\frac{z}{1-\lambda}}+1\right)^p-1\right]v\left(0,1\right)\nonumber\\
		&=& \left(1-z+\frac{1}{1-\lambda}\right)^{-1}pv\left(1-z,z\right),\nonumber
	\end{eqnarray}
	so that $$\lim\limits_{z\rightarrow 1}v_x\left(1-z,z\right)=\left(1-\lambda\right)pv\left(0,1\right)>0.$$
	
	\item[(3)]It is necessary to consider only the case where $p<0$. If $BR= \emptyset$, then we can show that $u$ satisfies the first term of HJB equation (\ref{newhjb}) near $z=-\frac{1-\lambda}{\lambda}$. Specifically, we can express the equation as a quadratic function of $u^{\prime\prime}$ of the form
	$$-d_1\left(z\right)\left(u^{\prime\prime}\right)^2+d_2\left(z,u,u^{\prime}\right)u^{\prime\prime}+d_3\left(z,u,u^{\prime}\right)=0 ,$$
	where $$d_1\left(z\right)=\left(1-\rho^2\right)\sigma_1^2\sigma_2^2z^4\left(1-z\right)^4.$$
	Moreover,
	$$\frac{1-\theta p}{p}\theta^{\frac{\theta p}{1-\theta p}}\left(1-z\right)^\frac{\left(1-\theta\right)p}{1-\theta p}t^\frac{\theta p}{\theta p-1}$$ is Lipschitz continuous about $t$ on any half-line of the form $\left[\gamma, \infty\right)$, where $\gamma >0$. Note that $$pu-zu^{\prime}=v_x\left(1-z,z\right)\ge pv\left(1-z,z\right)\left(1-z+\frac{z}{1-\lambda}\right)^{-1},$$ so that $pu-zu^{\prime}$ tends to $+\infty$ as $z\rightarrow -\frac{1-\lambda}{\lambda}$, implying that $d_2$ and $d_3$ are Lipschitz continuous with respect to $u$ and $u^{\prime}$ when $z$ is near $-\frac{1-\lambda}{\lambda}$. By the quadratic formula, we can write $$u^{\prime\prime}=F\left(z,u,u^{\prime}\right),$$ where $F$ is Lipschitz continuous with respect to $u$ and $u^{\prime}$ when $z$ is near $-\frac{1-\lambda}{\lambda}$. Hence, $\lim\limits_{z\rightarrow -\frac{1-\lambda}{\lambda}}u\left(z\right)$ exists and is finite, which leads to a contradiction.
\end{proof}

According to Theorem \ref{convexanalysis}-(2) and Proposition \ref{sanquyufeikong}, it can be shown that there exist two numbers $\eta_2$ and $\eta_1$, with $-\frac{1-\lambda}{\lambda}<\eta_2<\eta_1<1$, such that
\begin{eqnarray}
\partial_1NT&:=&\overline{SR}\cap\overline{NT}=\{\left(x,y\right)\mid x\ge 0, y=\frac{\eta_1}{1-\eta_1}x\},\nonumber\\
\partial_2NT&:=&\overline{BR}\cap\overline{NT}=\{\left(x,y\right)\mid x\ge 0, y=\frac{\eta_2}{1-\eta_2}x\}.\nonumber
\end{eqnarray}
The two rays $\partial_1NT$ and $\partial_2NT$ partition the solvency region $\Gamma$ into three wedges: $SR$, $NT$, and $BR$, see Fig.~\ref{figure2} below.
\begin{figure}[htbp]
	\centering
	\includegraphics[scale=0.45]{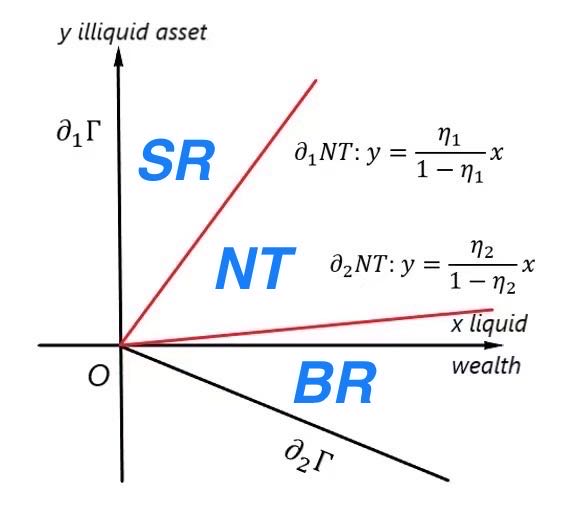}
	\caption{Three regions: $SR$, $NT$ and $BR$.}
	\label{figure2}
\end{figure}

Define the reflection direction on $\partial NT$
$$\gamma\left(x,y\right)=\left(\gamma_1\left(x,y\right),\gamma_2\left(x,y\right)\right):=\left\{\begin{aligned}
	\left(1-\mu,-1\right), \qquad if \left(x,y\right)\in\partial_1 NT,\\
	\left(-1,1-\lambda\right), \qquad if \left(x,y\right)\in\partial_2 NT.
\end{aligned}\right.$$

If the optimal policy exists, i.e., the problem is reachable, then the solvency region $\Gamma$ can be divided into three regions based on the value function $v$, and the optimal policy can be similarly divided. In particular, the HJB equation in the no-trading region $NT$ implies that the optimal policy does not involve trading the illiquid risky asset $S^2$. This region is therefore referred to as the no-trading region. The optimal values of $\left(c^*,\pi^*\right)$ in the no-trading region satisfy the following
\begin{eqnarray}\label{c*1}
	c_t^*&=&\theta^{\frac{1}{1-\theta p}}X_t^{\frac{\left(1-\theta\right)p}{1-\theta p}}v_x\left(X_t,Y_t\right)^{-\frac{1}{1-\theta p}}, \\
\label{pi*}
	\pi_t^*&=&-\frac{\alpha_1v_x\left(X_t,Y_t\right)+\rho\sigma_1\sigma_2Y_tv_{xy}\left(X_t,Y_t\right)}{\sigma_1^2X_tv_{xx}\left(X_t.Y_t\right)}.
\end{eqnarray}
Furthermore, initial position $\left(X_{0-},Y_{0-}\right)$ in $SR$ and $BR$ must immediately jump to the region $\overline{NT}$, and any point $\left(X_t,Y_t\right)$ reaching the boundaries of the no-trading region $\partial_1NT$ and $\partial_2NT$ must be reflected back into the region $NT$.

We will proceed in two steps: first, we will prove the existence of the policy described above; second, we will show its optimality. Without loss of generality, we can assume $(x_0,y_0)\in\overline{NT}$, as we can choose appropriate values of $\left(L^*_0, M^*_0\right)$ such that $(X_0,Y_0)\in\partial NT$ if $(x_0,y_0)\notin\overline{NT}$.
\begin{lemma}\label{op1}
	Assume $(x_0,y_0)\in\overline{NT}$, then there exist continuous processes $X^*$, $Y^*$ and $k$ such that $X^*_0=x_0$, $Y^*_0=y_0$, $k_0=0$ and 
	\begin{eqnarray}
		(X^*_t,Y^*_t)\in\overline{NT}, \quad \forall t\ge 0,\nonumber
	\end{eqnarray}
	\begin{eqnarray}
		dX^*_t&=&\left[\left(r+\alpha_1\pi_t^*\right)X^*_t-c_t^*\right]dt + \sigma_1\pi_t^*X^*_tdB^1_t+\gamma_1\left(X^*_t,Y^*_t\right)dk_t,\nonumber\\
		dY^*_t&=&\left(r+\alpha_2\right)Y^*_tdt+\sigma_2Y^*_tdB^2_t+\gamma_2\left(X^*_t,Y^*_t\right)dk_t,\nonumber\\
		k_t&=&\int_{0}^{t}1_{\{\left(X^*_t,Y^*_t\right)\in\partial NT\}}dk_t,\nonumber
	\end{eqnarray}
	where $\left(c^*,\pi^*\right)$ is shown as Eqs.~(\ref{c*1}) and (\ref{pi*}).
	
	Then the processes $L$ and $M$ have the following
	
	\begin{eqnarray}
		L^*_t=L^*_0+\int_{0}^{t}1_{\{\left(X^*_t,Y^*_t\right)\in\partial_2 NT\}}dk_t,\nonumber
	\end{eqnarray}
	\begin{eqnarray}
		M^*_t=M^*_0+\int_{0}^{t}1_{\{\left(X^*_t,Y^*_t\right)\in\partial_1 NT\}}dk_t,\nonumber
	\end{eqnarray}
	where $L^*_0=0, M^*_0=0$.
\end{lemma}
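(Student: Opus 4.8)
The plan is to exploit the homotheticity of $v$ to collapse the two-dimensional reflected SDE onto a one-dimensional reflected diffusion in the angular variable $z=\frac{y}{x+y}$, and then to recover the radial part by direct integration. First I would record the scaling of the feedback coefficients. Writing $S:=x+y$ and using $v(x,y)=S^p u(z)$ together with Proposition \ref{propo}-(1), one checks that $v_x=S^{p-1}\big(pu(z)-zu'(z)\big)$, that $\pi_*$ in \eqref{pi*} is homogeneous of degree $0$ and hence a function $\Pi(z)$ of $z$ alone, and that $c_*$ in \eqref{c*1} is homogeneous of degree $1$, i.e.\ $c_*=S\,\Psi(z)$. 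Since $v\in\mathcal{C}^2$ by Theorem \ref{C2}, since $\overline{\mathcal{L}}v=0$ is nondegenerate in the interior of $NT$ (so $v$ is in fact smooth there), and since $v_{xx}<0$ on $\overline{NT}$, both $\Pi$ and $\Psi$ are continuous, and locally Lipschitz in the interior, on the compact interval $[\eta_2,\eta_1]$.

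Next I would derive the autonomous dynamics of $z$. Applying It\^o's rule to $z_t=Y_t^*/(X_t^*+Y_t^*)$ in the interior of $NT$ gives
\begin{equation*}
dz_t=b(z_t)\,dt+z_t(1-z_t)\big[-\sigma_1\Pi(z_t)\,dB_t^1+\sigma_2\,dB_t^2\big],
\end{equation*}
whose drift $b$ and diffusion coefficient are functions of $z$ only; the diffusion coefficient is nondegenerate wherever $z(1-z)\neq0$, because $(\sigma_1\Pi-\rho\sigma_2)^2+\sigma_2^2(1-\rho^2)>0$ for $|\rho|<1$. For the reflection, a short computation shows that the inward direction $\gamma$ acts on $z$ through a factor $1/S$: on $\partial_1NT$ (where $z=\eta_1$) the push equals $\frac{\mu z-1}{S}\,dk_t<0$, and on $\partial_2NT$ (where $z=\eta_2$) it equals $\frac{(1-\lambda)+\lambda z}{S}\,dk_t>0$, so the reflection drives $z$ strictly inward at each endpoint. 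Introducing the normalized local time $d\tilde k_t:=dk_t/S_t$ removes the radial factor and renders $z$ an autonomous diffusion on $[\eta_2,\eta_1]$ reflected at both endpoints.

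I would then solve this one-dimensional reflected SDE by the classical two-sided Skorokhod-map construction on an interval: the deterministic Skorokhod map is Lipschitz in the supremum norm, so a Picard iteration using the interior-Lipschitz coefficients yields a strong solution $(z_t,\tilde k_t^1,\tilde k_t^2)$, with $\tilde k^1,\tilde k^2$ nondecreasing and increasing only on $\{z=\eta_1\}$ and $\{z=\eta_2\}$ respectively; uniqueness is inherited from the one-dimensional theory. With $z$ in hand the radial variable is recovered without a further fixed point: $\log S_t$ solves the \emph{linear} equation obtained by applying It\^o to $\log(X^*+Y^*)$, whose coefficients are now known functions of $z_t$ and whose reflection terms are $-\mu\,d\tilde k^1-\lambda\,d\tilde k^2$, so $S_t$ is given explicitly and is strictly positive. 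Setting $X_t^*=(1-z_t)S_t$, $Y_t^*=z_tS_t$, $dk_t=S_t\,d\tilde k_t$, and $L_t^*=\int_0^t \mathbf{1}_{\{(X_s^*,Y_s^*)\in\partial_2NT\}}dk_s$, $M_t^*=\int_0^t \mathbf{1}_{\{(X_s^*,Y_s^*)\in\partial_1NT\}}dk_s$ reproduces all asserted identities, with $(X^*,Y^*)\in\overline{NT}$ equivalent to $z\in[\eta_2,\eta_1]$, continuity automatic, and the support property of $k$ inherited from that of $\tilde k$.

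The main obstacle is the regularity of the feedback coefficients up to $\partial NT$: the reduction requires $\Pi,\Psi$ to be well behaved on the \emph{closed} interval and $v_{xx}$ to stay strictly negative there, which rests on the $\mathcal{C}^2$-pasting of $v$ across the free boundaries $\partial_1NT,\partial_2NT$ (Theorem \ref{C2}) together with interior smoothness from the nondegeneracy of $\overline{\mathcal{L}}v=0$. The dimension reduction is in fact robust to limited boundary regularity, since one-dimensional reflected SDEs are well-posed whenever the coefficients are continuous and the diffusion is nondegenerate in the interior. A secondary point to verify is that the process never reaches the cone vertex $(0,0)$, so that the normalization $d\tilde k=dk/S$ is legitimate; this follows because $S_t>0$ for all $t$ from the explicit solution for $\log S_t$, keeping $(X^*,Y^*)$ in $\overline{\Gamma}\setminus\{(0,0)\}$ where the coefficients are well behaved. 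Should one prefer to avoid the reduction, the same conclusion follows by invoking a general existence and uniqueness theorem for reflected SDEs in convex polyhedral domains with oblique, face-constant reflection, the inward-pointing property of $\gamma$ verified above being precisely the condition such theorems require.
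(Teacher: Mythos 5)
Your overall strategy is the right one, and it is in substance the construction the paper itself relies on: the paper offers no self-contained proof but imports Lemma 9.3 of \cite{Sherve1994}, which exploits exactly the homothetic structure you use, and then lists the modifications needed here (allowing $\eta_1,\eta_2$ of either sign, and localizing with $\kappa_m=\inf\{t\ge0\mid|\pi_t|>m\}$ to handle the extra feedback control $\pi^*$ in the It\^{o}/expectation estimates). Your reduction computations are correct: $\pi_*$ is homogeneous of degree $0$ and $c_*$ of degree $1$, the martingale part of $z$ is $z(1-z)\left[\sigma_2\,dB^2-\sigma_1\Pi(z)\,dB^1\right]$, and the boundary pushes $\frac{\mu z-1}{S}\,dk<0$ on $\partial_1NT$ and $\frac{(1-\lambda)+\lambda z}{S}\,dk>0$ on $\partial_2NT$ are inward, so the obliqueness of $\gamma$ is correctly absorbed into the radial local-time terms $-\mu\,d\tilde k^1-\lambda\,d\tilde k^2$.

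The genuine gap is at the decisive step, the well-posedness of the doubly reflected SDE for $z$. A Picard iteration through the two-sided Skorokhod map needs the coefficients to be Lipschitz on the \emph{closed} interval $[\eta_2,\eta_1]$, in particular at the endpoints where the local time charges; but your own regularity inventory gives only continuity there, because $\Pi$ involves $u''$ (equivalently $v_{xx},v_{xy}$) and Theorem \ref{C2} provides $v\in\mathcal{C}^2$ with no Lipschitz modulus for the second derivatives across the free boundaries, while interior ellipticity says nothing at $\partial NT$. Your fallback --- that continuity of the coefficients plus nondegeneracy of the diffusion in the interior makes the reflected SDE well-posed --- is not a theorem at the strength you need: it yields weak solutions, whereas (by Barlow-type examples) a merely continuous, nondegenerate diffusion coefficient need not admit a strong solution, and the Lemma, together with the optimality theorem that follows it, requires processes driven by the \emph{given} $(B^1,B^2)$ so that $(c^*,\pi^*,L^*,M^*)$ is an admissible policy. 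Moreover the nondegeneracy assertion itself fails at $z=0$, a point that can lie inside $(\eta_2,\eta_1)$ (the paper's numerical section exhibits $\eta_2<0<\eta_1$); there your appeal to interior smoothness via nondegeneracy of $\overline{\mathcal{L}}v=0$ also breaks down, since the coefficient of $v_{yy}$ is $\tfrac{1}{2}\sigma_2^2y^2$. A repair consistent with the paper's own tools: in the proof of Proposition \ref{sanquyufeikong}(3) the equation in $NT$ is recast as $u''=F\left(z,u,u'\right)$ with $F$ locally Lipschitz wherever $d_1(z)\neq0$ and $pu-zu'>0$; composing $F$ with the $\mathcal{C}^1$ pair $(u,u')$ bootstraps $u''$, hence $\Pi$, to be Lipschitz up to the endpoints whenever $\eta_1,\eta_2\neq0$, leaving only a separate local argument at the degenerate point $z=0$. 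Without an argument of this kind, the construction as written does not go through.
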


\begin{proof}
	
	The proof is based on a similar approach to that used in Lemma 9.3 of \cite{Sherve1994}, but we need to make some modifications to account for our specific models. These modifications are as follows:
	
	First, we assume that $v$ is $\mathcal{C}^2$ on the entire region $\Gamma$, so that the proof holds for any $-\frac{1-\lambda}{\lambda}<\eta_2<\eta_1<1$.
	
	Second, we define the constant $m_2$ as $$m_2:=\max_{\eta_2<z<\eta_1} u_q\left(z,1-z\right),$$ instead of using the value given in \cite{Sherve1994}.
	
	Moreover, in the proof of $\rho\le\tau$, we replace $t\wedge \rho_n\wedge\tau_n$ with $t\wedge \rho_n\wedge\tau_n\wedge \kappa_m$, where $$\kappa_m:=\inf\{t\ge0\mid|\pi_t|>m\}.$$ After taking the expectation of Eq.~(\ref{ito}), we let $m$ tend to $\infty$.
	
	Similarly, in the later stages of the proof, we replace $\eta_n$ with $\eta_n\wedge \kappa_m$, and also let $m$ tend to $\infty$ after taking the expectation of Eq.~(\ref{ito}).
\end{proof}

\begin{theorem}
	$\left(c^*,\pi^*,L^*,M^*\right)$ in Lemma \ref{op1}  is the optimal strategy solving Problem \eqref{problem}.
\end{theorem}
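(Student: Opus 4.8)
The plan is to prove that the candidate policy $(c^*,\pi^*,L^*,M^*)$ constructed in Lemma~\ref{op1} attains the supremum in Problem~\eqref{problem}, by establishing a verification-type argument. The construction in Lemma~\ref{op1} guarantees that starting from $(x_0,y_0)\in\overline{NT}$ the controlled process $(X^*,Y^*)$ stays in $\overline{NT}$ and reflects along the directions $\gamma$ on $\partial NT$, while consuming and investing according to Eqs.~\eqref{c*1}--\eqref{pi*}. Since Theorem~\ref{C2} gives $v\in\mathcal{C}^2$ and $v$ solves the HJB equation~\eqref{hjb}, we have the pointwise identities: $\overline{\mathcal{L}}v=0$ on $NT$, the reflection directions $\gamma$ satisfy $-(1-\mu)v_x+v_y=0$ on $\partial_1NT$ and $v_x-(1-\lambda)v_y=0$ on $\partial_2NT$ (from Theorem~\ref{convexanalysis}-(2)), and the minimizers $c^*,\pi^*$ are exactly the ones realizing $\min_{c}$ and $\min_{\pi}$ in the HJB equation. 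The strategy is to feed the optimal controls into It\^o's formula~\eqref{ito} and show that all the inequalities used to derive the HJB equation become equalities along $(X^*,Y^*)$.

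First I would apply the It\^o formula~\eqref{ito} to $v$ along the optimal process on a stopping interval $[0,\tau]$ where $\tau=t\wedge T_n\wedge\kappa_m$ localizes to keep $(X^*,Y^*)$ bounded, $\pi^*$ bounded, and the stochastic integrals true martingales (exactly the localization device already used in the proofs of Theorem~\ref{v=phi} and Lemma~\ref{op1}). On $NT$ the drift term $\mathcal{L}v+c^*v_x$ equals $U(c^*,X^*)$ because $c^*$ is the minimizer and $\pi^*$ is the minimizer so that $\mathcal{L}v-\tilde U(x,v_x)=\overline{\mathcal{L}}v=0$; the singular terms contribute zero because $dL^{*}$ is supported on $\partial_2NT$ where $v_x-(1-\lambda)v_y=0$ and $dM^*$ is supported on $\partial_1NT$ where $-(1-\mu)v_x+v_y=0$; and there are no jumps since $(X^*,Y^*)$ is continuous by Lemma~\ref{op1}. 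Taking expectations and using that the localized stochastic integrals are mean zero yields the exact identity
\begin{eqnarray}
v(x_0,y_0)=\mathbb{E}\left[e^{-\beta\tau}v(X^*_\tau,Y^*_\tau)\right]+\mathbb{E}\left[\int_0^\tau e^{-\beta t}U(c^*_t,X^*_t)\,dt\right].\nonumber
\end{eqnarray}

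Next I would remove the localization by sending $m\to\infty$ and then $n,t\to\infty$. The term $\mathbb{E}[\int_0^\tau e^{-\beta t}U(c^*_t,X^*_t)\,dt]$ converges to the true objective $\mathbb{E}[\int_0^\infty e^{-\beta t}U(c^*_t,X^*_t)\,dt]$, invoking monotone or dominated convergence according to the sign of $p$. The remaining task is the transversality condition
\begin{eqnarray}
\lim_{t\to\infty}\mathbb{E}\left[e^{-\beta t}v(X^*_t,Y^*_t)\right]=0,\nonumber
\end{eqnarray}
so that the boundary term vanishes and $v(x_0,y_0)$ equals the payoff of the candidate policy; combined with the fact that $v$ dominates every admissible payoff (which follows from the HJB inequality $\overline{\mathcal{L}}v\ge 0$ and the supermartingale property established in Proposition~\ref{propo}-(5)), this gives optimality.

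The main obstacle I expect is precisely this transversality/convergence step, and it must be handled separately for $0<p<1$ and $p<0$. For $0<p<1$ the utility is positive but unbounded in $X$, so I would use the homotheticity and the explicit lower/upper bounds of Proposition~\ref{propo}-(4) together with Theorem~\ref{v=phi} to control $v(X^*_t,Y^*_t)\le\frac{C}{p}(X^*_t+(1-\mu)Y^{*+}_t)^p$ and then estimate the $p$-th moment of the liquidated net wealth using Assumptions~\eqref{assum1}--\eqref{assum2}, which are exactly the finiteness conditions ensuring that the discount rate $\beta$ dominates the growth of $e^{-\beta t}\mathbb{E}[(\text{net wealth})^p]$. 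For $p<0$ the value function is negative, so $e^{-\beta t}v(X^*_t,Y^*_t)\le 0$ gives one inequality for free, while the matching lower estimate requires bounding $\mathbb{E}[e^{-\beta t}|X^*_t|^{p}]$ away from catastrophic blow-up near $\partial_2\Gamma$; here I would exploit that $(X^*,Y^*)$ is confined to $\overline{NT}$, bounded away from $\partial_2\Gamma$ by the ray $\partial_2NT$ with $\eta_2>-\frac{1-\lambda}{\lambda}$, so the Cobb--Douglas singularity is not approached and the estimate reduces to a geometric-Brownian-motion moment bound controlled by Assumption~\ref{ass}. I anticipate that verifying the integrability needed to justify interchanging limit and expectation, and the uniform control of the reflection term in $k_t$, will require the same refined localization and moment estimates flagged in the paper's introduction as the delicate part of the $p<0$ analysis.
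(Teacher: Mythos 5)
Your verification skeleton (It\^o's formula along the candidate process, the HJB equalities on $NT$, the vanishing of the reflection terms on $\partial NT$, localization by $\kappa_m$ and exit times) matches the paper's, but at the two places where the real work lies your plan either omits the key idea or rests on an incorrect claim. For $p<0$, your assertion that confinement of $(X^*,Y^*)$ to $\overline{NT}$ keeps the process away from the Cobb--Douglas singularity is wrong: for $p<0$ the singularity of $v$ and of $U$ sits at the origin $(0,0)$, which is the \emph{vertex} of the wedge $\overline{NT}$ (on $\overline{NT}$ one has $X=0$ only at the origin, since $\eta_1<1$), and nothing in Lemma~\ref{op1} prevents the total wealth $X^*_t+Y^*_t$ from decaying to zero in finite time. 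Ruling this out is precisely the crux of the paper's $p<0$ argument: setting $\tau_0:=\inf\{t\ge 0\mid X^*_t=Y^*_t=0\}$, the paper shows that on $\{\tau_0<\infty\}$ the pathwise identity in Eq.~(\ref{optim-v}) would force the stochastic-integral part to converge to $-\infty$, hence to have infinite quadratic variation, which is impossible because a continuous local martingale with infinite quadratic variation oscillates and cannot converge; therefore $\tau_0=\infty$ a.s. This idea is entirely absent from your proposal, and your substitute --- a ``geometric-Brownian-motion moment bound'' for $\mathbb{E}\left[e^{-\beta t}|X^*_t|^{p}\right]$ --- overlooks that the reflection terms $-\lambda\, dL^*-\mu\, dM^*$ \emph{decrease} wealth and therefore \emph{increase} negative moments, so they push against the bound and cannot be dismissed. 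Note also that the paper needs no transversality at all for $p<0$: since $v<0$ and $U<0$, it drops the boundary term by sign and concludes by monotone convergence once $\tau_0=\infty$ is known, so your ``matching lower estimate'' is not even the right target.

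For $0<p<1$, you propose to prove the full limit $\lim_{t\to\infty}\mathbb{E}\left[e^{-\beta t}v(X^*_t,Y^*_t)\right]=0$ by arguing that $\beta$ dominates the exponential growth rate of $\mathbb{E}\left[\left(X^*_t+(1-\mu)Y^{*+}_t\right)^p\right]$ ``using Assumptions (\ref{assum1})--(\ref{assum2})''. This step is unsupported: those assumptions assert finiteness of value functions in auxiliary frictionless problems; they do not directly bound the moment growth of wealth under the candidate reflected policy, and a crude estimate using only a bound $|\pi^*|\le m$ yields a rate that can exceed $\beta$. The paper proceeds differently and this is where its concrete work lies: (i) it first establishes $L^2$ bounds on $Z=X^*+Y^*$ via Doob's maximal inequality, H\"older and Gronwall, so that after letting $m\to\infty$ the stochastic integrals in Eq.~(\ref{optim-v}) are true martingales and the exact identity (\ref{gujijieguo}) holds for each fixed $t$; (ii) instead of proving the full limit directly, it uses the structural inequality $c^{*\theta}_t X^{*(1-\theta)}_t\ge m_9\left(X^*_t+Y^*_t\right)$ to convert finiteness of the achieved utility into $\int_0^\infty\mathbb{E}\left[e^{-\beta t}\left(X^*_t+Y^*_t\right)^p\right]dt<\infty$, which produces a sequence $t_n\uparrow\infty$ along which $\mathbb{E}\left[e^{-\beta t_n}v(X^*_{t_n},Y^*_{t_n})\right]\to 0$; passing to the limit in (\ref{gujijieguo}) along this subsequence suffices. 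Without either this subsequence trick or a genuinely sharper moment estimate, your transversality step fails, and the uniform-integrability bookkeeping needed to remove the localization before discussing transversality is likewise left implicit in your proposal.
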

\begin{proof}
	Similar to the proof of Lemma 9.5 in \cite{Sherve1994}, there exist constants $m_3$ and $m_4$ such that
	\begin{eqnarray}
		v_x\left(x,y\right)\le m_3\left(x+y\right)^{p-1}, \quad\forall \left(x,y\right)\in\overline{NT},\nonumber
	\end{eqnarray}
	\begin{eqnarray}
		|yv_y\left(x,y\right)|\le m_4\left(x+y\right)^p, \quad\forall \left(x,y\right)\in\overline{NT}.\nonumber
	\end{eqnarray}
	Similarly, there exists a constant $m_5>0$ such that
	\begin{eqnarray}
		|xv_x\left(x,y\right)|\le m_5\left(x+y\right)^p, \quad\forall \left(x,y\right)\in\overline{NT}.\nonumber
	\end{eqnarray}
	Note that the equations $-\left(1-\mu\right)v_x\left(x,y\right)+v_y\left(x,y\right)=0$ and $v_x\left(x,y\right)-\left(1-\lambda\right)v_y\left(x,y\right)=0$ hold when $y=\frac{\eta_1}{1-\eta_1}x$ and $y=\frac{\eta_2}{1-\eta_2}x$, respectively. For any almost surely finite stopping time $\tau$, we have
	\begin{eqnarray}
		v\left(x_0,y_0\right)&=&e^{-\beta\tau}v\left(X^*_{\tau},Y^*_{\tau}\right)+\int_{0}^{\tau}e^{-\beta t}U\left(c^*_t,X^*_t\right) dt\nonumber\\
		&-& \int_{0}^{\tau}e^{-\beta t}\sigma_1\pi^*_tX^*_tv_x\left(X^*_{t},Y^*_{t}\right) dB^1_t -\int_{0}^{\tau}e^{-\beta t}\sigma_2Y^*_tv_y\left(X^*_{t},Y^*_{t}\right) dB^2_t.\label{optim-v}
	\end{eqnarray}
	Let $\kappa_m:=\inf\{t\ge0\mid|\pi^*_t|>m\}$, $\tau_n:=\inf\{t\ge0\mid X^*_t+Y^*_t\le\frac{1}{n}\}$, and $\tau_0:=\inf\{t\ge0\mid X^*_t=Y^*_t=0\}$. When $p<0$, we can replace $\tau$ with $t\wedge \tau_n \wedge \kappa_m$ in Eq.~(\ref{optim-v}), take the expectation, and then let $m\rightarrow \infty$ to obtain
	$$ v\left(x_0,y_0\right)\le \mathbb{E}\left[\int_{0}^{t\wedge\tau_n}e^{-\beta t}U\left(c^*_t,X^*_t\right) dt\right].$$
	In addition, we need to prove that $\tau_0=\infty$ almost surely. Note that
	\begin{eqnarray}
		\lim\limits_{n\rightarrow\infty}\lim\limits_{t\uparrow \tau_n}\left[e^{-\beta \left(t\wedge \tau_n\right)}v_x\left(X^*_{t\wedge \tau_n},Y^*_{t\wedge \tau_n}\right)+ \int_{0}^{t\wedge \tau_n}e^{-\beta s}U\left(c^*_s,X^*_s\right)ds\right]=-\infty\quad \mbox{on}\quad \left\{\tau_0<\infty\right\},\nonumber
	\end{eqnarray}
	we have
	\begin{eqnarray}
		\lim\limits_{n\rightarrow\infty}\lim\limits_{t\uparrow \tau_n}\left[\int_{0}^{t\wedge\tau_n} e^{-\beta s}\sigma_1\pi^*_sX^*_sv_x\left(X^*_{s},Y^*_{s}\right) dB^1_s +\int_{0}^{\tau} e^{-\beta s}\sigma_2Y^*_sv_y\left(X^*_{s},Y^*_{s}\right) dB^2_s\right]=-\infty\nonumber\\ \mbox{on}\quad \left\{\tau_0<\infty\right\} ,\label{=-infty} 
	\end{eqnarray}
	which implies
	\begin{eqnarray}
		\int_{0}^{\tau_0}e^{-2\beta t}\left\{\left[\sigma_1\pi^*_tX^*_tv_x\left(X^*_{t},Y^*_{t}\right)\right]^2 +\left[\sigma_2Y^*_tv_y\left(X^*_{t},Y^*_{t}\right)\right]^2\right.\nonumber\\
		\left.+2\rho\sigma_1\sigma_2\pi^*_tX^*_tY^*_tv_x\left(X^*_{t},Y^*_{t}\right)v_y\left(X^*_{t},Y^*_{t}\right) \right\} dt  = \infty.  \label{=infty}
	\end{eqnarray}
	However, because Eq.~(\ref{=infty}) implies that the limit in Eq.~(\ref{=-infty}) does not exist, we conclude $\tau_0=\infty$ almost surely. Finally, taking the limits $n\rightarrow \infty$ and $t\rightarrow \infty$, we obtain the optimality of $\left(C^*,\pi^*,L^*,M^*\right)$ when $p<0$.
	
	We now turn to the case when $0<p<1$. In this case, we define $Z_t = X^*_t + Y^*_t$. Note that $$\left(1-\eta_1\right)\left(x+y\right)\le x\le\left(1-\eta_2\right)\left(x+y\right)<\frac{1}{\lambda}\left(x+y\right)\quad \forall\left(x,y\right)\in\overline{NT}$$ 
	and
	$$|y|\le \max\{|\eta_1|,|\eta _2|\}\left(x+y\right):= m_6\left(x+y\right),\quad \forall\left(x,y\right)\in\overline{NT},$$ 
	we have
	\begin{eqnarray}
		Z_{t\wedge \kappa_m} &=& Z_0+\int_{0}^{t\wedge \kappa_m}\left[\left(r+\alpha_1\pi^*_s\right)X^*_s+\left(r+\alpha_2\right)Y^*_s-c^*_s\right]ds\nonumber\\
		&+&\sigma_1\int_{0}^{t\wedge \kappa_m}\pi^*_sX^*_sdB^1_s
		+\sigma_2\int_{0}^{t\wedge \kappa_m}Y^*_sdB^2_s
		-\lambda L_s -\mu M_s\nonumber
		\\
		&\le& Z_0+m_7\int_{0}^{t\wedge \kappa_m}Z_sds +\sigma_1 H_{t\wedge \kappa_m} + N_{t\wedge \kappa_m},\nonumber
	\end{eqnarray}
	where $$m_7=r+\frac{\alpha_1 m}{\lambda}+m_6\alpha_2,$$ $$H_t=\int_{0}^{t}\pi^*_sX^*_sdB^1_s$$ and $$N_t=\int_{0}^{t}Y^*_sdB^2_s.$$ 
	Doob's maximal martingale inequality yields 
	\begin{eqnarray}
		\mathbb{E}\left(H^*_{t\wedge \rho_n\wedge \kappa_m}\right)^2\le 4\mathbb{E}\left[H^2_{t\wedge \rho_n\wedge \kappa_m}\right]=4\mathbb{E}\left[\int_{0}^{t\wedge\rho_n\wedge \kappa_m}\pi^{*2}_sX^{*2}_sds\right]\le 4\frac{m^2}{\lambda^2}\mathbb{E}\left[\int_{0}^{t\wedge\rho_n\wedge \kappa_m}Z^2_sds\right],\nonumber
	\end{eqnarray}
	where $$\rho_n=\inf\{t\ge 0\mid Z_t\ge n\},$$ $$Z^*_t=\max_{0\le s\le t}Z_s,\quad H^*_t=\max_{0\le s\le t}|H_s|.$$ Similarly, define $$N^*_t=\max_{0\le s\le t}|N_s|,$$ and then
	\begin{eqnarray}
		\mathbb{E}\left(N^*_{t\wedge \rho_n\wedge \kappa_m}\right)^2\le  4m^2_6\mathbb{E}\left[\int_{0}^{t\wedge\rho_n\wedge \kappa_m}Z^2_sds\right].\nonumber
	\end{eqnarray}
	Using H\"{o}lder's inequality, we find some $m_8>0$ such that for every $T>0$, 
	\begin{eqnarray}
		\mathbb{E}\left(z^*_{t\wedge \rho_n\wedge \kappa_m}\right)^2&\le& m_8\left[Z^2_0+\mathbb{E}\left(\int_{0}^{t\wedge\rho_n\wedge \kappa_m}Z^*_sds\right)^2+\mathbb{E}\left(H^*_{t\wedge \rho_n\wedge \kappa_m}\right)^2+\mathbb{E}\left(N^*_{t\wedge \rho_n\wedge \kappa_m}\right)^2\right]\nonumber\\
		&\le& m_8\left[Z^2_0+\left(T+4\frac{m^2}{\lambda^2}+4m^2_6\right)\int_{0}^{t\wedge\rho_n\wedge \kappa_m}\mathbb{E}\left(Z^*_s\right)^2ds\right],\quad\forall t\in\left[0,T\right].\nonumber
	\end{eqnarray}
	According to Gronwall's inequality, we have
	\begin{eqnarray}
		\mathbb{E}\left(z^*_{t\wedge \rho_n\wedge \kappa_m}\right)^2\le m_8Z_0^2\exp\left[m_8\left(T+4\frac{m^2}{\lambda^2}+4m^2_6\right)t\right],\quad\forall t\in\left[0,T\right].\nonumber
	\end{eqnarray}
	Taking the limit $n\rightarrow \infty$ and setting $t=T$, we obtain
	\begin{eqnarray}
		\mathbb{E}\left(X^*_{T\wedge \kappa_m}+Y^*_{T\wedge \kappa_m}\right)^2\le m_8Z_0^2\exp\left[m_8\left(T+4\frac{m^2}{\lambda^2}+4m^2_6\right)T\right],\quad\forall T\ge 0\nonumber.
	\end{eqnarray}
	Therefore, we have 
	\begin{eqnarray}
		\mathbb{E}\left[\int_{0}^{T\wedge \kappa_m}\left[X^*_s+Y^*_s\right]^2ds\right]<\infty,\quad\forall T\ge0\nonumber,
	\end{eqnarray}
	which implies
	\begin{eqnarray}
		\mathbb{E}\left[\int_{0}^{T\wedge \kappa_m}\left[\sigma_1\pi^*_sX^*_sv_x\left(X^*_{s},Y^*_{s}\right)\right]^2ds\right]<\infty,\quad \mathbb{E}\left[\int_{0}^{T\wedge \kappa_m}\left[\sigma_2Y^*_sv_y\left(X^*_{s},Y^*_{s}\right)\right]^2ds\right]<\infty,\quad\forall T\ge0\nonumber.
	\end{eqnarray}
	
	From Eq.~(\ref{optim-v}), we have
	\begin{eqnarray}
		v\left(x_0,y_0\right)&=&e^{-\beta t\wedge \kappa_m}v\left(X^*_{t\wedge \kappa_m},Y^*_{t\wedge \kappa_m}\right)+\int_{0}^{t\wedge \kappa_m}e^{-\beta s}U\left(c^*_s,X^*_s\right) ds\nonumber,
	\end{eqnarray}
	Taking the limit $m\rightarrow \infty$, we obtain
	\begin{eqnarray}
		v\left(x_0,y_0\right)&=&\mathbb{E}\left[e^{-\beta t}v\left(X^*_{t},Y^*_{t}\right)\right]+\mathbb{E}\left[\int_{0}^{t}e^{-\beta s}U\left(c^*_s,X^*_s\right) ds\right].\label{gujijieguo}
	\end{eqnarray}	
	From Eq.~(\ref{c*1}), we have
	\begin{eqnarray}
		c^{*\theta}_tX^{*1-\theta}_t&=&\theta^{\frac{\theta}{1-\theta p}}X_t^{*\frac{1-\theta}{1-\theta p}}v_x\left(X^*_t,Y^*_t\right)^{-\frac{\theta}{1-\theta p}}\nonumber\\
		&\ge&\theta^{\frac{\theta}{1-\theta p}}\left(1-\eta_1\right)^{\frac{1-\theta}{1-\theta p}}m_3^{-\frac{\theta}{1-\theta p}}\left(X^*_t+Y^*_t\right)\nonumber\\
		&:=& m_9\left(X^*_t+Y^*_t\right).\nonumber
	\end{eqnarray}
	Then
	\begin{eqnarray}
		\int_{0}^{\infty}\mathbb{E}\left[e^{-\beta t}\left(X^*_{t}+Y^*_{t}\right)^p\right] dt\le\frac{p}{m^p_9}\int_{0}^{\infty}\mathbb{E}\left[e^{-\beta t}U\left(c^*_t,X^*_t\right)\right] dt < \infty\nonumber.
	\end{eqnarray}
	Thus, there exists a sequence $t_n\uparrow\infty$ such that
	\begin{eqnarray}
		\lim\limits_{n\rightarrow \infty}\mathbb{E}\left[e^{-\beta t_n}\left(X^*_{t_n}+Y^*_{t_n}\right)^p\right] = 0\nonumber.
	\end{eqnarray}
	It follows that
	\begin{eqnarray}
		\lim\limits_{n\rightarrow \infty}\mathbb{E}\left[e^{-\beta t_n}v\left(X^*_{t_n},Y^*_{t_n}\right)\right]\le \max_{\eta_2<z<\eta_1}u(z)\times\lim\limits_{n\rightarrow \infty}\mathbb{E}\left[e^{-\beta t_n}\left(X^*_{t_n}+Y^*_{t_n}\right)^p\right] = 0.
	\end{eqnarray}
	Substituting $t$ with $t_n$ in Eq.~(\ref{gujijieguo}) and taking the limit as $n\rightarrow\infty$, we obtain the optimality of $\left(C^*,\pi^*,L^*,M^*\right)$.
\end{proof}

\section{\bf Numerical analysis}
\label{numans}
In this section, we use numerical methods to solve the HJB equation \eqref{zyshjb} and analyze the results. We follow the numerical calculation method described in Section5 Numerical methods in \cite{akian1996investment}. Our objective is two-fold: first, we investigate the effect of various parameters on the location of the no-trading region $NT$ in the optimal policy; second, we examine the evolution of the investment and consumption ratios on $NT$. We assume that the time discount rate $\beta$ is higher than the risk-free rate $r$ and that the two risk assets are positively correlated, and we conduct numerical tests with parameter values $\beta=0.1$, $r=7\%$, $\rho=0.4$, $\lambda=0.2$, and $\mu=0.2$.

\subsection{\bf Optimal trading boundaries}
We begin by investigating the relationship between the buying-selling boundaries $\left(\eta_2,\eta_1\right)$ associated with $\partial NT$ and other parameters, such as $\left(\alpha_1,\sigma_1\right)$, $\left(\alpha_2,\sigma_2\right)$, the risk aversion parameter $p$, and the liquidity preference parameter $\theta$.

We compare four scenarios to investigate the effects of $S^1$ with parameters $\left(\alpha_1,\sigma_1\right)$ and $S^2$ with parameters $\left(\alpha_2,\sigma_2\right)$:

\begin{enumerate}[itemindent=1em]
	\item[scenario 1:] $\left(\alpha_1,\sigma_1\right)=\left(4\%,30\%\right)$, $\left(\alpha_2,\sigma_2\right)=\left(8\%,35\%\right)$.
	\item[scenario 2:] There is no $\left(\alpha_1,\sigma_1\right)$, and $\left(\alpha_2,\sigma_2\right)=\left(8\%,35\%\right)$.
	\item[scenario 3:] $\left(\alpha_1,\sigma_1\right)=\left(4\%,30\%\right)$, $\left(\alpha_2,\sigma_2\right)=\left(13\%,35\%\right)$.
	\item[scenario 4:] $\left(\alpha_1,\sigma_1\right)=\left(4\%,30\%\right)$, $\left(\alpha_2,\sigma_2\right)=\left(8\%,50\%\right)$.
\end{enumerate}

We analyze both the cases $p>0$ and $p<0$. In scenario 1, we plot the buy-sell boundaries (illiquid asset-total wealth ratio) $\left(\eta_2,\eta_1\right)$ in Fig.~\ref{gra1}. We observe that both the buying boundary $\eta_2$ and the selling boundary $\eta_1$ increase with an increase in $p$ or $\theta$. This indicates that the width of the buying region $BR$ increases with $\theta$ and $\rho$, while the width of the selling region $SR$ decreases with $\theta$ and $\rho$. Agents with higher risk aversion tend to hold more liquid wealth, and the introduction of liquidity preference indeed encourages the agent to hold more liquid wealth. Therefore, a more liquidity-preferred (smaller $\theta$) or risk-averse (smaller $p$)  agent holds more of the  liquid wealth. We also observe differences in the speed of changes for positive and negative $\eta_1$ and $\eta_2$. When $p$ or $\theta$ decreases and $\eta_1$ is already close to 0, $\eta_1$ decreases slowly, while $\eta_2$ decreases much faster. As long as $\theta\neq 1$, we have $\eta_1<1$, which implies the existence of $SR$. This result is consistent with Conclusion (2) in Proposition \ref{sanquyufeikong}. However, when $\theta=1$, it is possible that $\eta_1=1$ and $SR$ does not exist. The product form of the utility function with $\theta<1$ guarantees the existence of $SR$. In contrast to the results in \cite{Sherve1994}, $\eta_2$ may be less than 0, and even $\eta_1$ may be less than 0, which means that the intersection of $NT$ (and even $SR$) and the fourth quadrant may not be empty. This phenomenon does not occur when $\theta=1$, and  this difference is mainly due to the liquidity preference in the utility function.
\begin{figure}[htbp]
	\centering
	\includegraphics[scale=0.57]{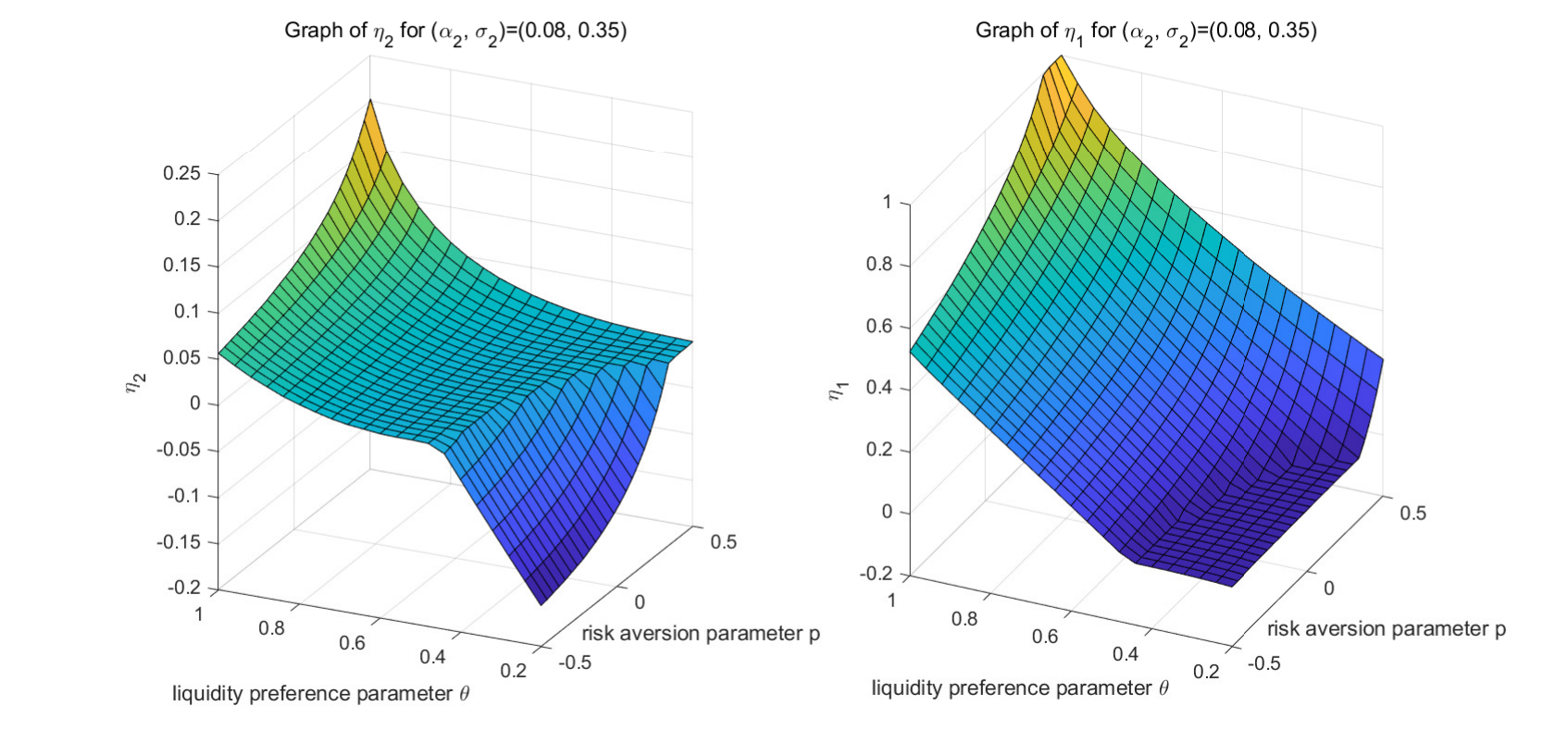}
	\caption{Buy-sell boundaries in scenario 1.}
	\label{gra1}
\end{figure}

\begin{figure}[htbp]
	\centering
	\includegraphics[scale=0.57]{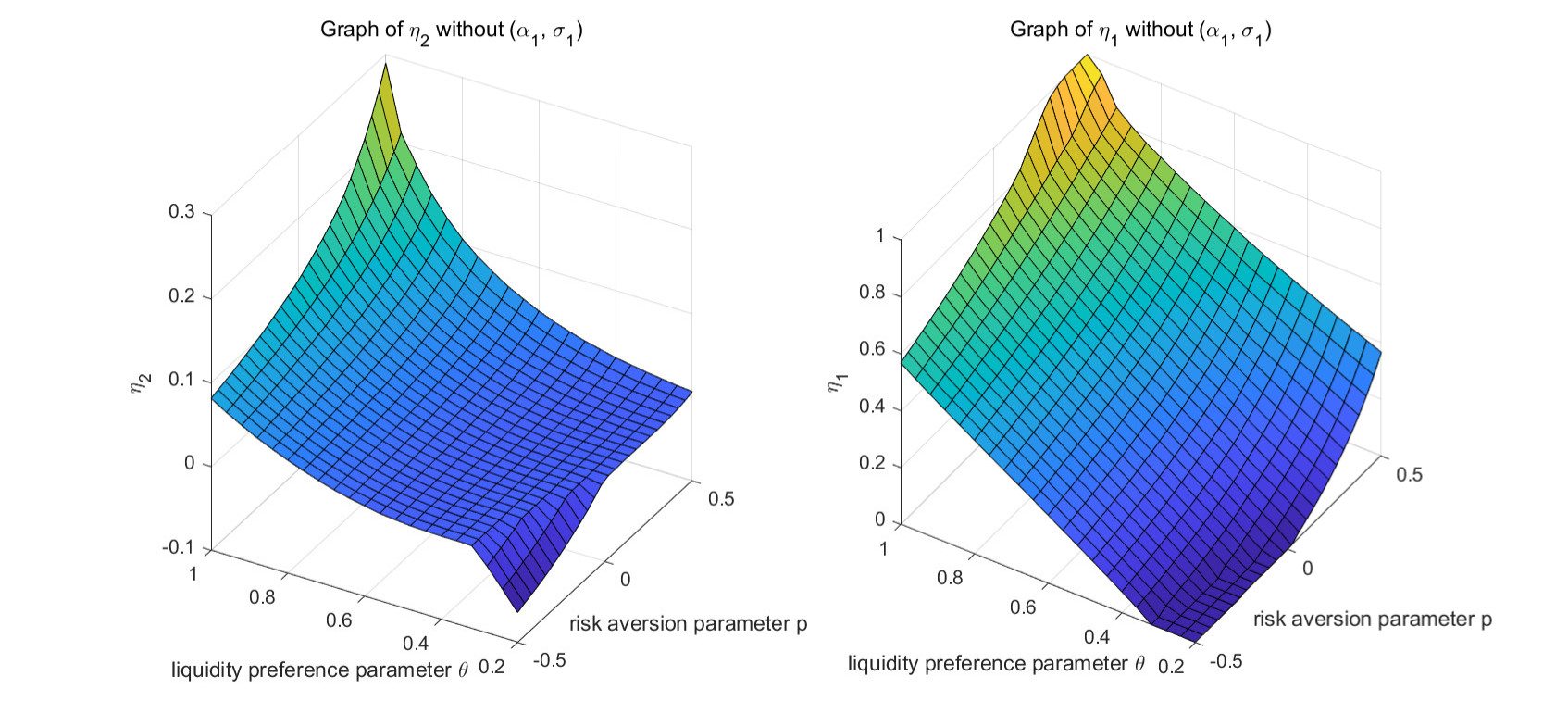}
	\caption{Buy-sell boundaries in scenario 2.}
	\label{gra2}
\end{figure}

\begin{figure}[htbp]
	\centering
	\includegraphics[scale=0.57]{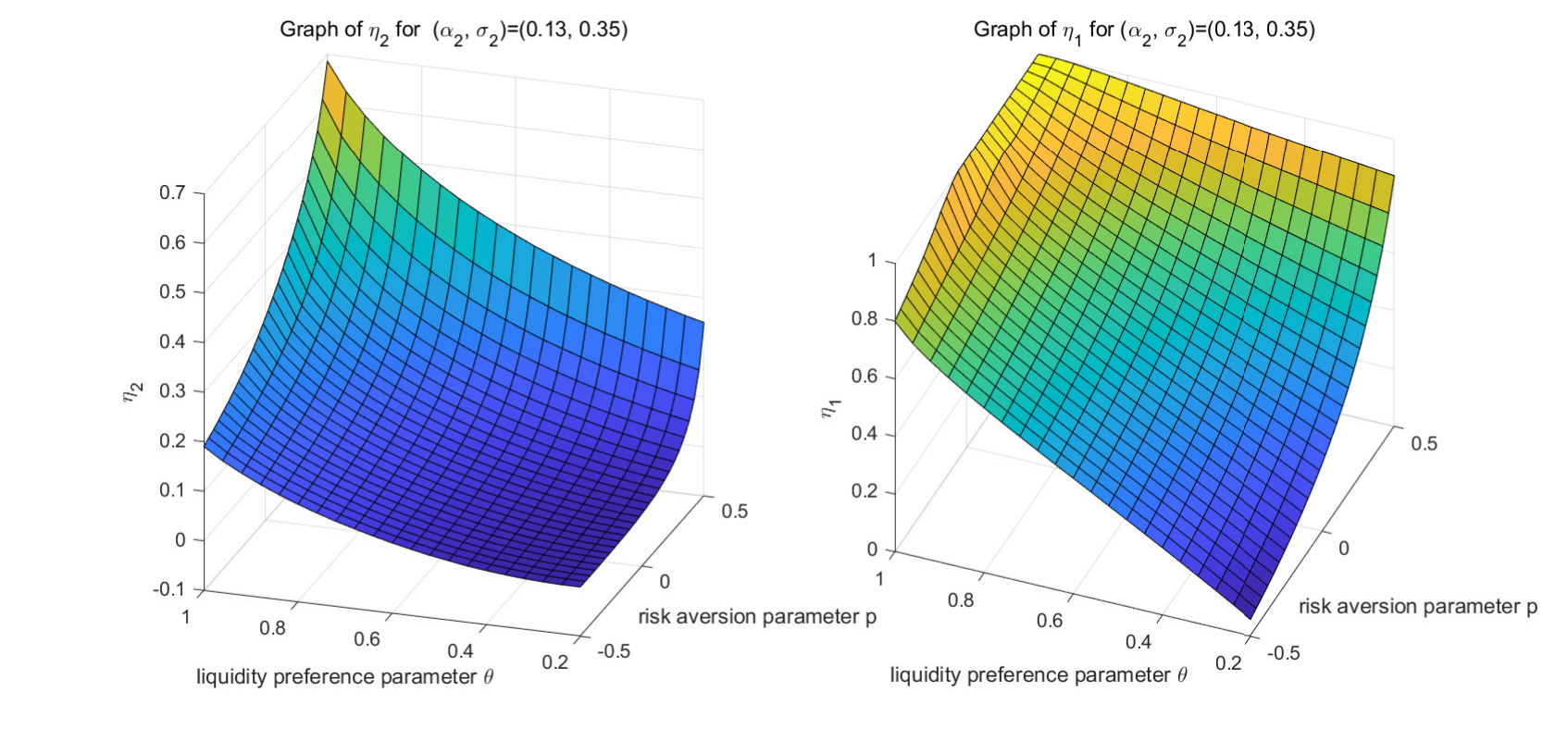}
	\caption{Buy-sell boundaries in scenario 3.}
	\label{gra3}
\end{figure}

\begin{figure}[htbp]
	\centering
	\includegraphics[scale=0.57]{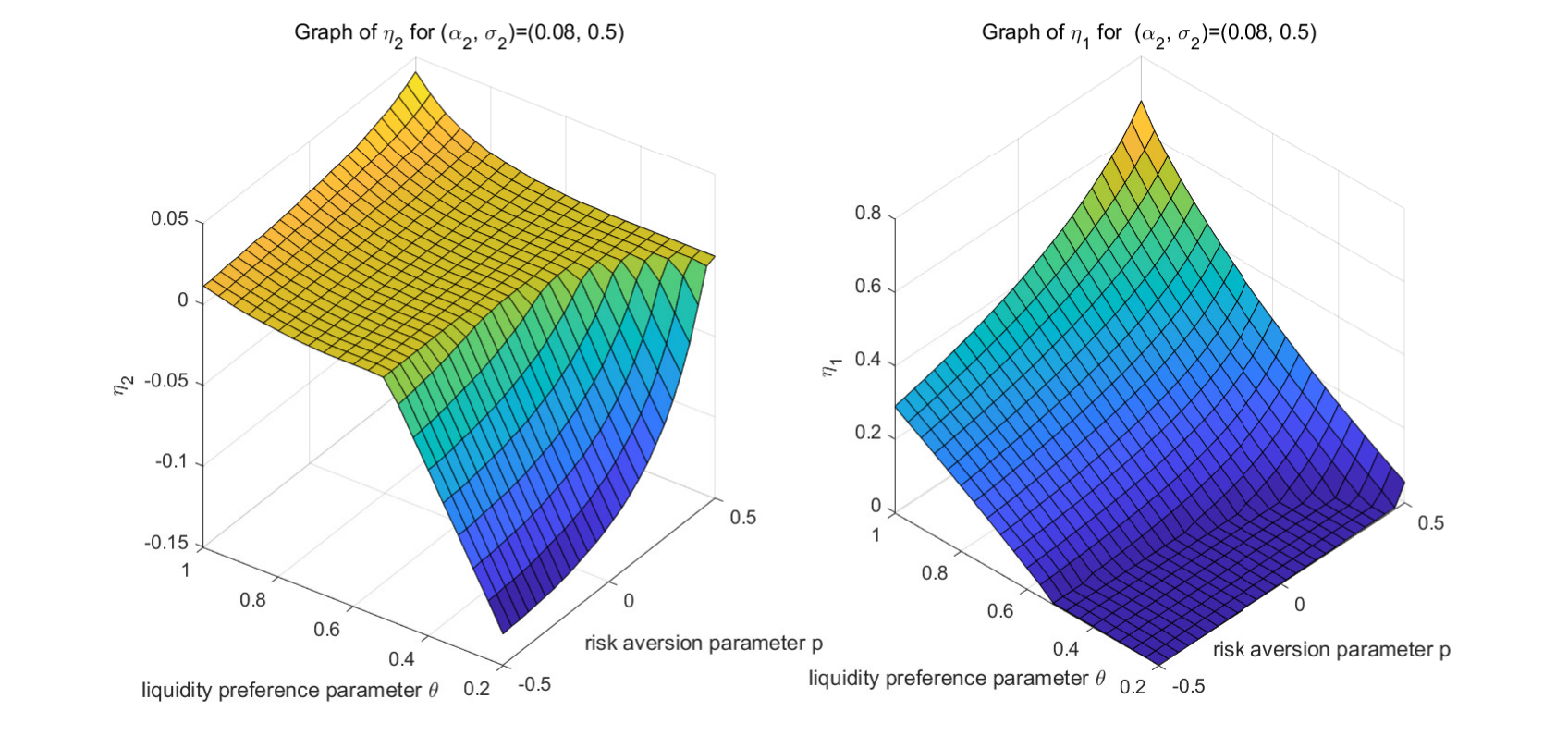}
	\caption{Buy-sell boundaries in scenario 4.}
	\label{gra4}
\end{figure}
We present the buy-sell boundaries $\left(\eta_2,\eta_1\right)$ for scenarios 2-4 in Fig.~\ref{gra2}-\ref{gra4}. The changes in the buy-sell boundary with respect to $\theta$ and $p$ are similar across the different scenarios.

Comparing Fig.~\ref{gra1} and Fig.~\ref{gra2}, we observe that the introduction of the liquid risk asset $S^1$ has a certain influence on the buy-sell boundary. When the liquid assets include the liquid risk asset $S^1$ in addition to the risk-free bond $S^0$, the potential return on liquid assets has risen. This leads to a higher preference for holding liquid wealth, which results in larger values of $\eta_2$ and $\eta_1$ in Fig.~\ref{gra2} compared to those in Fig.~\ref{gra1}.

Comparing Fig.~\ref{gra1} and Fig.~\ref{gra3}, we observe that if the illiquid risky asset $S^2$ has a higher expected rate of return $\alpha_2$, the agent may prefer to allocate more of their wealth to $S^2$, as indicated by the significant improvement in $\eta_2$ and $\eta_1$ values. Naturally, as the expected return on the illiquid asset becomes more attractive, an agent would like to hold more illiquid asset.  

Comparing Fig.~\ref{gra1} and Fig.~\ref{gra4}, we observe that the effect of the volatility coefficient $\sigma_2$ of the illiquid risky asset $S^2$ on the buy-sell boundary is complicated. For agents with high risk aversion and high liquidity preference, if $\eta_1$ is already close to 0 and $\eta_2$ is less than 0, a larger $\sigma_2$ may lead to slightly higher values of $\eta_2$ and $\eta_1$, indicating a preference for holding more of the illiquid asset for larger $\sigma_2$. However, for other agents, a larger $\sigma_2$ may lead to a decrease in $\eta_2$ and $\eta_1$, making them more reluctant to invest in the illiquid risky asset. In practice, an appropriate value of $\theta$ should be close to 1, and thus we should pay more attention to the conclusion where $\theta$ approaches 1, i.e., holding less of the illiquid asset as the volatility increases.

Moreover, the illiquid risky asset $S^2$ has a greater impact on the buy-sell boundaries $\eta_2$ and $\eta_1$ compared to the liquid risk asset $S^1$. In addition, $\eta_2$ is more sensitive to changes in the parameters related to $S^1$ and $S^2$ than $\eta_1$, indicating that the buy boundary is more affected by changes in the expected returns and volatilities of $S^1$ and $S^2$. On the other hand, $\eta_1$ is more sensitive to changes in the parameters $p$ and $\theta$ than $\eta_2$, indicating that the sell boundary is more affected by changes in the agent's risk aversion and liquidity preference. Therefore, to make informed investment decisions, it is important to consider how changes in both the liquid and illiquid assets affect the buy-sell boundaries

\subsection{\bf Optimal investment and consumption}
We now examine the investment ratio $\pi$ and consumption ratio $\frac{c}{x+y}$ in the no-trading region $NT$, focusing on the effect of the liquidity preference parameter $\theta$. 

To investigate this, we fix $\left(\alpha_1,\sigma_1\right)=\left(4\%,30\%\right)$, $\left(\alpha_2,\sigma_2\right)=(8\%,35\%)$, and consider values of $p$ as both positive ($p=0.3$) and negative ($p=-0.3$). We take $\theta=0.2, 0.4, 0.6, 0.8,$ and $1$. The impact of changes to the effective interval $\left(\eta_2,\eta_1\right)$ has been discussed earlier, so we will not address it here. 
\begin{figure}[htbp]
	\centering
	\includegraphics[scale=0.73]{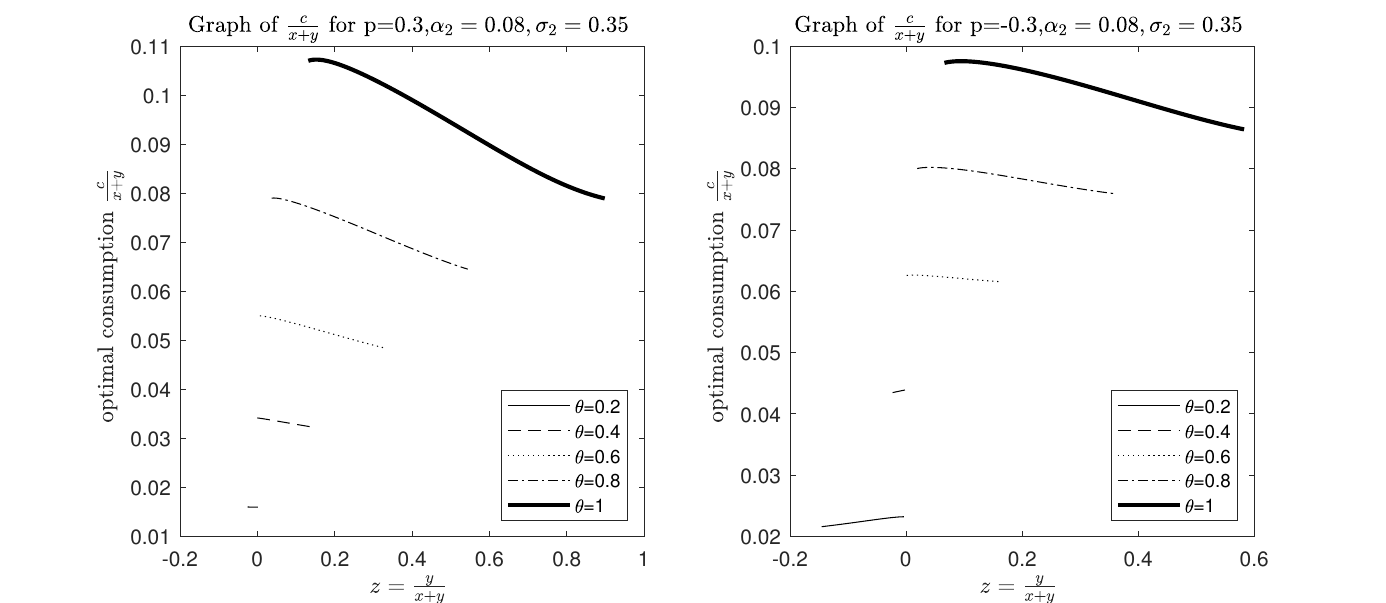}
	\caption{Graph of $\frac{c}{x+y}$, $\theta = 0.2, 0.4, 0.6, 0.8, 1$.}
	\label{tu1}
\end{figure}

Fig.~\ref{tu1} illustrates that the consumption ratio $\frac{c}{x+y}$ decreases as both the ratio of illiquid wealth $z$ and the liquidity preference parameter $\theta$ increase. This can be explained by the fact that a larger proportion of illiquid wealth reduces the amount of liquid wealth available for consumption. As a result, the consumption ratio decreases as $z$ increases. Similarly, a higher liquidity preference parameter $\theta$ indicates a greater preference for holding liquid wealth rather than spending it on consumption, leading to decreased consumption. Therefore, the decrease in the consumption ratio with increasing $\theta$ is consistent with the agent's stronger preference for holding onto liquid wealth rather than using it for consumption.
\begin{figure}[htbp]
	\centering
	\includegraphics[scale=0.73]{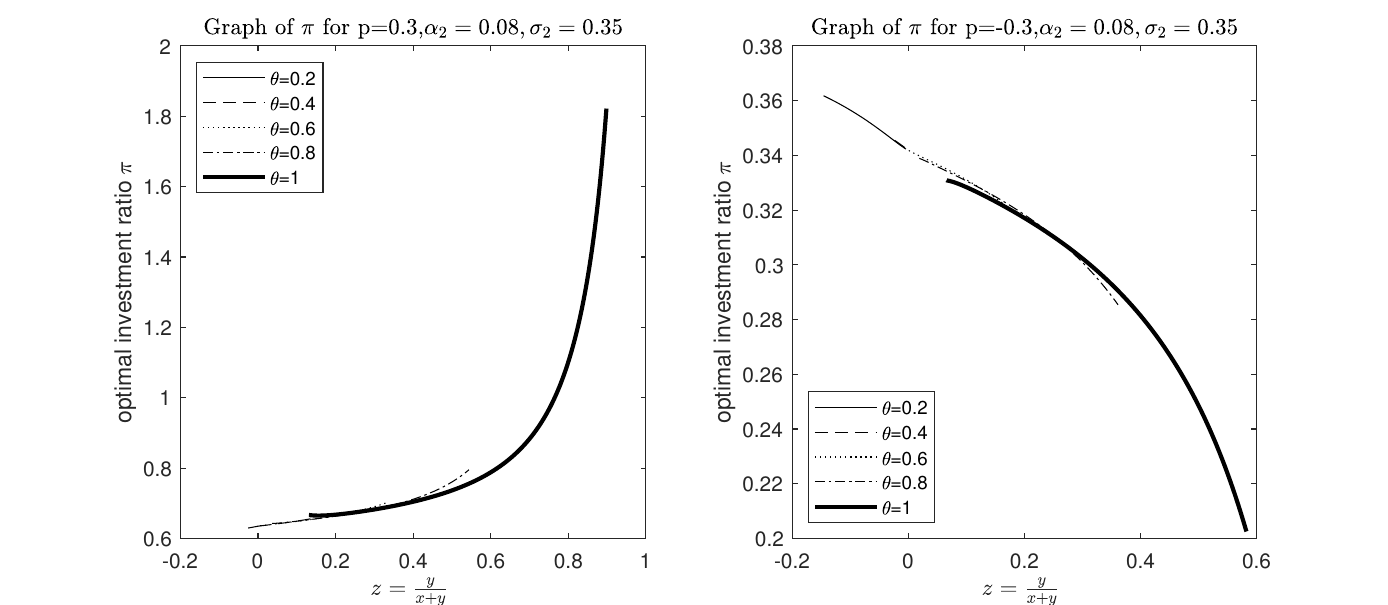}
	\caption{Graph of $\pi$, $\theta = 0.2, 0.4, 0.6, 0.8, 1$.}
	\label{tu2}
\end{figure}

Fig.~\ref{tu2} shows that the liquidity preference parameter $\theta$ mainly affects the investment interval $\left(\eta_2,\eta_1\right)$, but has little impact on the investment ratio $\pi$, which is also intuitive. Liquidity preferences balance the ratio between liquid and illiquid wealth, but makes no distinction within liquid wealth (including when modeling) and is therefore unlikely to affect investment within liquid wealth. For agents with low risk aversion ($p=0.3$), their investment ratio $\pi$ tends to increase as the proportion of illiquid assets $z$ increases. In contrast, for agents with high risk aversion ($p=-0.3$), the opposite is true: their investment ratio $\pi$ tends to decrease as the proportion of illiquid assets increases. This pattern can be explained by the fact that low-risk aversion agents increase their investment in liquid risk assets to balance the liquidity-illiquidity ratio when the proportion of illiquid assets increases, while high-risk aversion agents invest more in liquid risk assets to mitigate risk. Moreover, agents with higher risk aversion levels have smaller investment ratios $\pi$ overall.
\begin{figure}[htbp]
	\centering
	\includegraphics[scale=0.73]{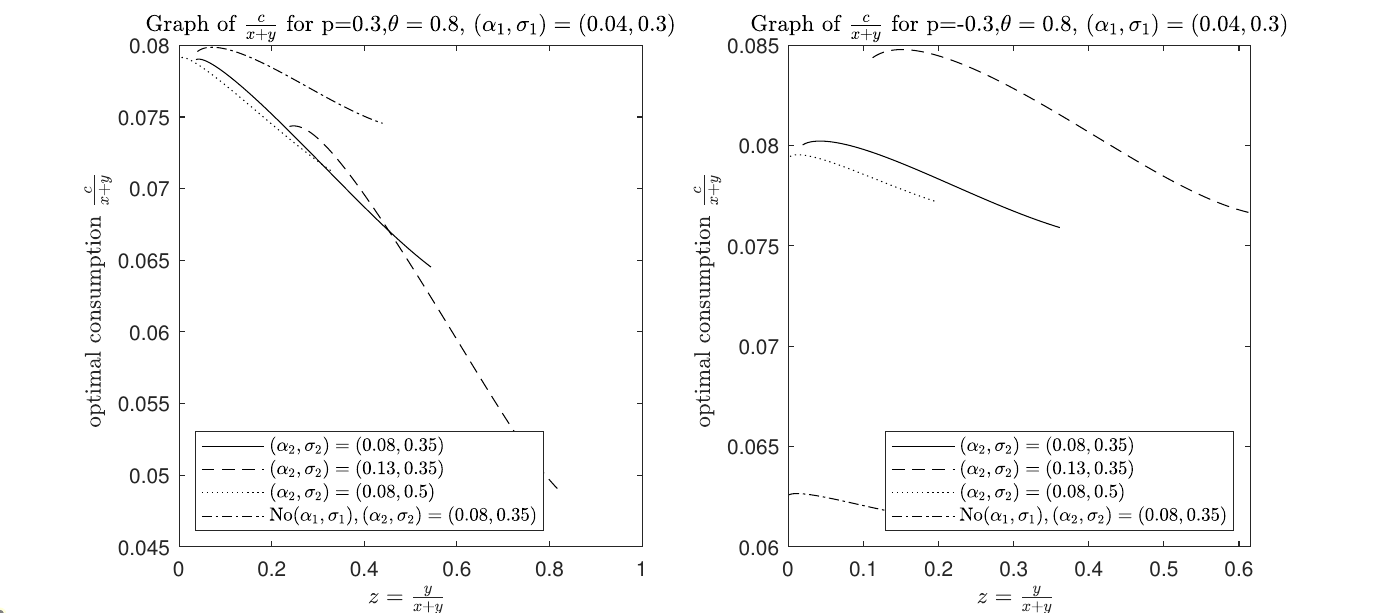}
	\caption{Graph of $\frac{c}{x+y}$, the same parameters as scenarios 1-4.}
	\label{tu3}
\end{figure}

To explore the impact of the parameters related to $S^1$ and $S^2$, we fix $\theta=0.8$, consider values of $p$ as both positive ($p=0.3$) and negative ($p=-0.3$), and take the same parameters as in scenarios 1-4.

Fig.~\ref{tu3} shows that $S^2$ with $\left(\alpha_2,\sigma_2\right)$ has little effect on the consumption ratio $\frac{c}{x+y}$ in the no-trading region $NT$ when $p=0.3$. However, the introduction of the liquid risk asset $S^1$ leads to a smaller consumption ratio for agents with low risk aversion and a larger consumption ratio for agents with high risk aversion, consistent with the previous analysis. This is because low-risk aversion agents seek to balance the liquidity-illiquidity ratio, while high-risk aversion agents aim to mitigate risks. Furthermore, Fig.~\ref{tu3} indicates that the optimal consumption ratio is higher when the expected return of the illiquid asset is larger, while a larger risk of the illiquid asset leads to a smaller optimal consumption ratio.
\begin{figure}[htbp]
	\centering
	\includegraphics[scale=0.73]{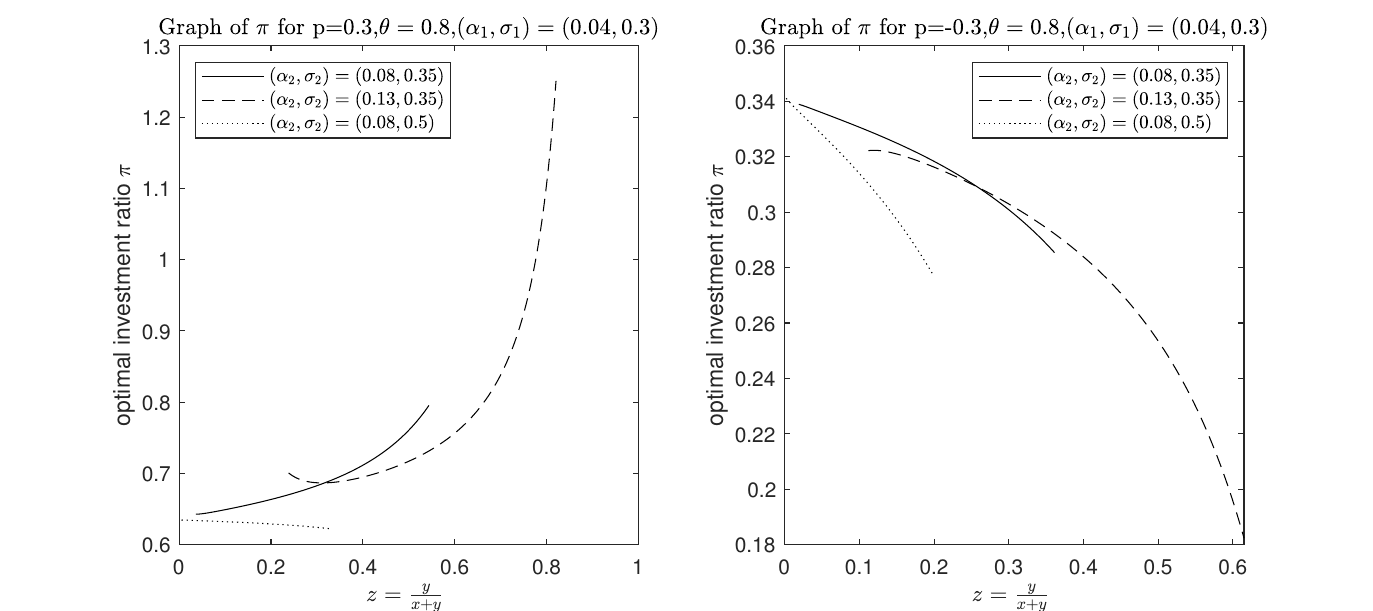}
	\caption{Graph of $\pi$, the same parameters as scenarios 1-4.}
	\label{tu4}
\end{figure}

Fig.~\ref{tu4} shows that the impact of the expected return $\alpha_2$ of the illiquid stock on the optimal investment ratio in the liquid stock is complex and depends on the agent's level of risk aversion and the proportion of illiquid assets $z$. For an agent with low risk aversion ($p=0.3$), the optimal investment ratio in the liquid stock increases with $\alpha_2$ when $z$ is small, indicating that a higher expected return on the illiquid stock leads to a higher investment in the liquid stock. However, when $z$ is large, the optimal investment ratio in the liquid stock decreases with $\alpha_2$, indicating that a higher expected return on the illiquid stock leads to a lower investment in the liquid stock. On the other hand, for an agent with high risk aversion ($p=-0.3$), the optimal investment ratio in the liquid stock decreases with $\alpha_2$ when $z$ is small, indicating that a higher expected return on the illiquid stock leads to a lower investment in the liquid stock. However, when $z$ is large, the optimal investment ratio in the liquid stock increases with $\alpha_2$, indicating that a higher expected return on the illiquid stock leads to a higher investment in the liquid stock. Furthermore, it is worth noting that the illiquid risk asset and the liquid risk asset are positively correlated, meaning that when the illiquid asset performs well, the liquid asset is also likely to perform well. Observing Fig.~\ref{tu4}, we see that the agent always decreases investment in liquid risk assets if the illiquid risk asset has a larger volatility coefficient $\sigma_2$. This can be explained by the fact that a higher volatility of the illiquid asset implies a higher overall risk of the portfolio, which makes the agent more cautious and leads to a smaller allocation to risk assets. In contrast, when the volatility of the illiquid asset is lower, the agent is more willing to take on risk and allocate more of their portfolio to the risk assets.

\section{\bf Conclusion}
This paper investigates an infinite horizon, discounted, consumption-portfolio problem in a market with a risk-free bond, a liquid risky asset, and an illiquid risky asset. The liquid wealth is composed of the risk-free bond and the liquid risky asset and it is introduced into the utility function to capture the agent's liquidity preference, while the illiquid risky asset requires proportional transaction costs when transacted. We define the HJB equation and the value function and study the properties of the value function. We prove that this problem is reachable and mathematically characterize the optimal policy, which can be accurately solved numerically. Despite the introduction of the liquid risk asset and the consideration of liquidity preference in the utility function, the optimal policy can still be divided into three conical regions (buying region, no-trading region, and selling region), and these three regions are non-empty regardless of what risk aversion parameters the agent has.

In our study, we conduct numerical analysis to investigate the impact of parameters on the location of the no-trading region and the investment and consumption levels above it. Our findings differ from previous studies, as we show that the illiquid risky asset may be negative in the no-trading region or even the selling region due to the introduction of liquidity preference and the liquid risk asset. We find that liquidity preference encourages agents to hold more liquid wealth and reduces consumption, but has little impact on the internal investment of liquid wealth. Additionally, the introduction of the liquid risk asset affects the location of the no-trading region, and its impact on consumption depends on the level of risk aversion of agents. Our results highlight the importance of considering liquidity preference and risk aversion in portfolio optimization, especially in markets with illiquid assets and transaction costs.

\vskip 15pt
{\bf Acknowledgements.}The authors acknowledge the support from the National Natural Science Foundation of China (Grant No.12271290, No.11901574, No.11871036), and the MOE Project of Key Research Institute of Humanities and Social Sciences (22JJD910003). The authors thank the members of the group of Mathematical Finance and Actuarial Science at the Department of Mathematical Sciences, Tsinghua University for their {feedback} and useful conversations.

\appendix
\renewcommand{\theequation}{\thesection.\arabic{equation}}
\vskip 15pt

\bibliographystyle{apalike}
\bibColoredItems{red}{oecd}
\bibliography{wpref}
\end{document}